\newcommand{\xRightarrow}[2][]{\ext@arrow 0359\Rightarrowfill@{#1}{#2}}
\definecolor {processblue}{cmyk}{0.96,0,0,0}
\lstdefinelanguage{palang}{
    morekeywords={actor, int, if, else, self, main},
    otherkeywords={=>,<-,<\%,<:,>:,\#,@},
    sensitive=true,
    morecomment=[l]{//},
    morecomment=[n]{/*}{*/},
    morestring=[b]",
    morestring=[b]',
    morestring=[b]"""
}
\newsavebox{\mylistingbox}
\def\ID{\mathit{ID}}
\def\Var{\mathit{Var}}
\def\Val{\mathit{Val}}
\def\Env{\mathit{Env}}
\def\MName{\mathit{MName}}
\def\Mtd{\mathit{Mtd}}
\def\Actor{\mathit{Actor}}
\def\Statement{\mathit{Stat}}
\def\Expr{\mathit{Expr}}
\def\self{\mathit{self}}
\def\body{\mathit{body}}
\def\eval{\mathit{eval}}
\def\expr{\mathit{expr}}
\def\svars{\mathit{svars}}
\def\Snd{\mathit{Snd}}
\def\Rcv{\mathit{Rcv}}
\def\IntM{\mathit{InfM}}
\def\Act{\mathit{Act}}
\def\Info{\mathit{Info}}
\def\Observer{\mathit{observer}}
\def\Transmitter{\mathit{transmitter}}
\def\Quadricopter{\mathit{quadricopter}}
\def\Feedback{\mathit{feedback}}
\def\Controller{\mathit{controller}}
\newcommand{\RN}[1]{%
    \textup{\uppercase\expandafter{\romannumeral#1}}%
}
\newcommand{\overto}[1]{\stackrel{#1}{%
        \overrightarrow{\smash{\,\scriptsize{\phantom{#1}}\,}}}}
\newcommand{\Rule}[2]{                                  
    \frac{\raisebox{.7ex}{\normalsize{$#1$}}}
    {\raisebox{-1.0ex}{\normalsize{$#2$}}}}
\newcommand{\FAxiom}[1]{                                  
    {\normalsize {#1}}
}
\journalname{Acta Informatica}
\begin{document}

\title{Verification of Asynchronous Systems with an Unspecified Component}
\author{Rosa Abbasi \and Fatemeh Ghassemi \and Ramtin Khosravi}
\authorrunning{R. Abbasi, F. Ghassemi, R. Khosravi}
\institute{R. Abbasi \at
    University of Tehran, Iran \\
\email{ro.abbasi@ut.ac.ir}
\and
    F. Ghassemi \at
              University of Tehran, Iran \\
              \email{fghassemi@ut.ac.ir}           
           \and
           R. Khosravi \at
              University of Tehran, Iran \\
\email{r.khosravi@ut.ac.ir}}


\maketitle
\begin{abstract}
Component-based systems evolve as a new component is added or an existing one is replaced by a newer version. Hence, it is appealing to assure the new system still preserves its safety properties. However, instead of inspecting the new system as a whole, which may result in a large state space, it is  beneficial to reuse the verification results by inspecting the newly added component in isolation.
To this aim, we study the problem of model checking component-based asynchronously communicating systems in the presence of an unspecified component against safety properties. Our solution is based on assume-guarantee reasoning, adopted for asynchronous environments, which generates the weakest assumption. If the newly added component
conforms to the assumption, then the whole system still satisfies the property. To make the approach efficient and convergent, we produce an overapproximated interface of the missing component and by its composition with the rest of the system components, we achieve an overapproximated specification of the system, from which we remove those traces of the system that violate the property and generate an assumption for the missing component.

We have implemented our approach on two  case studies. Furthermore, we compared our results with the state of the art direct approach. Our resulting assumptions are smaller in size and achieved faster.
\keywords{ Compositional verification \and Assume-guarantee reasoning \and  Actor Model
      \and Component-based programming}
\end{abstract}

\section{Introduction}\label{Sec::Intro}
In the basic use of model checking to verify a system, it is usually assumed that all of the components of the system are specified. The system specification is transformed into a monolithic transition system against which temporal logic properties are verified \cite{baier2008principles}.
In the domain of Component-Based Software Engineering (CBSE), the system is composed of several reusable components, usually developed by different vendors. The notion of component is generally assumed to be substitutable \cite{Szyperski:2002:CSB:515228}. When applying model checking to a component-based system, the question is if we can reuse the verification results, the same way we reuse software components. In other words, should we generate the monolithic transition system for the whole system when a new component is added or an existing one is replaced by a newer version, or can we find a way to verify the newly added or  modified component?

This question is not limited to the classical use of CBSE, but is also valid
 in the context of Service Oriented Architecture (SOA) \cite{wan2006service,hashimi2003service}, in which services can be viewed as asynchronously communicating components. Changes to the components in such a system  may be caused by migrating legacy systems into new services. Since this kind of migration usually takes place when the system is operational, the problem of verification of the new system can be very important
  \cite{almonaies2010legacy,canfora2008wrapping,parveen2008research,lewis2005service}.
  In such cases, the specification of the legacy system may not be available.
 This is the case for the black-box modernization approaches  towards migration such as \cite{canfora2008wrapping} . Although model checking of SOA-based systems has been studied in several works
 \cite{foster2007ws,zheng2007model,dai2007framework,gao2006model},
 the problem of analyzing systems with unspecified components is not considered.

In a component-based system, where the components are developed by various vendors, it is usually common for the components to be integrated using asynchronous message passing. Since the coupling caused by such an integration mechanism is more loose than traditional remote procedure call approach, it allows more heterogeneity among the components, resulting in a higher degree of reuse, higher availability and scalability in the system \cite{linthicum2000enterprise}. Moreover, asynchronous communication is the common interaction mechanism used in distributed systems gaining more attention as the trend towards Cyber-Physical Systems is rapidly growing. Therefore we have chosen asynchronously communicating environments as the context of our work.


\begin{figure}[t]
    \centering
    \includegraphics[width=0.3\textwidth]{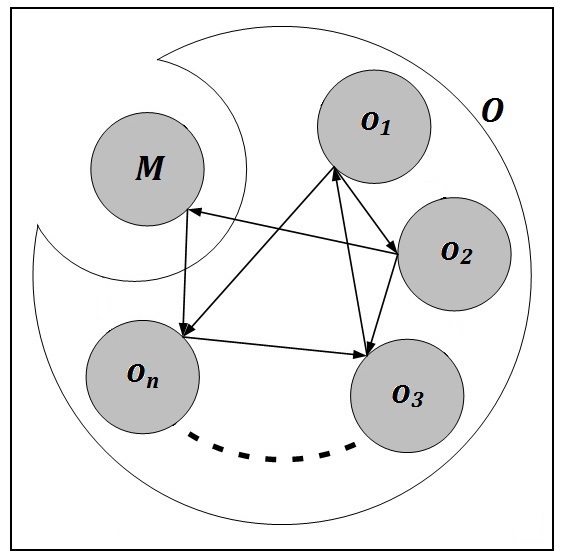}
    \caption{An asynchronous system with an unspecified component $M$}
    \label{fig:echoSys}
\end{figure}

In this paper, we study the problem of model checking component-based systems  communicating asynchronously on bounded channels in the presence of an unspecified component against safety properties.
Consider the system  of Figure \ref{fig:echoSys} where the component $M$ operates as  a service and  its specification is  unavailable or subject to modifications. The rest of the components, which form an open system $O$, are  specified. Some basic information about the open system interactions with its environment in terms of the exchanged messages is presumed (described more formally in Section \ref{sec:OpenSystem}). This information is provided by the system modeler or is extracted from the documents of the components in $O$. The system is required to satisfy a certain safety property $P$. Our problem can be reduced to an assumption generation problem, applied in the context of assume-guarantee reasoning \cite{jones1983specification,pnueli1985transition}: We would like to find an assumption $A$ such that if $M$ satisfies $A$, then the system will satisfy $P$. Hence instead of verifying the system composed of $M$ and components of $O$ against $P$, we verify $M$ against $A$.

Components encapsulate their data and functions and are loosely implemented without any pre-assumption about their environment. Hence,  the formal behavior of an open component is too general to cope with any environment. Conversely, its closed behavior with a specific environment is reduced substantially due to the specific exertion of the component behavior by the environment. This generality, intrinsic in the behavior of the component, makes existing approaches of assume-guarantee reasoning inapplicable to the asynchronous environments. 
Existing approaches follow two main streams: $1$) An initial interface is generated and refined based on counterexamples iteratively, $2$) The parallel composition of a characteristic model of $P$ and $O$, usually both specified by LTSs, is generated 
and then those traces that violate $P$ are eliminated. The first set of approaches are not time-efficient~\cite{cobleigh2008breaking}, may generate a very large assumption and may fail to verify $M$ against $A$ in some cases
\cite{nam2006learning,alur2005symbolic}. Besides, these approaches iteratively use the specification of all components (including $M$) for assumption generation, concluding that for each modified component, the assumption should be generated from scratch and the verification results cannot be reused.
The second set of approaches, so called \emph{direct}, are based on the behavior of the open system $O$, so they may result in a huge state space or may not converge in bounded time and memory in the asynchronous settings. To tackle the problem, we take advantage of both courses: we adopt the direct approach of \cite{giann}, and readjust it for the asynchronous settings in two respects: $1$) We consider an auxiliary component, which acts as an interface of $M$, to close the open system; $2$) Furthermore, as the interface is still too general, we consider the effect of $P$ at the level of component.

To provide an asynchronous model for the system, we use the actor model \cite{agha1985actors,agha1990structure,agha1997foundation} to specify components in the system. Such modeling has become very popular in practice \cite{hewitt2008orgs,hewitt2009actorscript,karmani2009actor}\footnote{Scala programming language supports actor-models \url{http://www.scala-lang.org}}. To simplify the introduction of our method, we use a simple actor-based language, called Actor Modeling Language (AML), for describing component models. AML is explained in details in Section \ref{sec:background}. Note that our method can be easily extended to more sophisticated actor-based modeling languages, such as Rebeca \cite{sirjani2004modeling,sirjani2011ten} or Creol \cite{johnsen2006creol,johnsen2007asynchronous}. Furthermore, we use Labeled Transition Systems (LTSs) as the setup for generated assumptions.


\begin{figure}[t]
    \centering
    \includegraphics[width=0.8\textwidth]{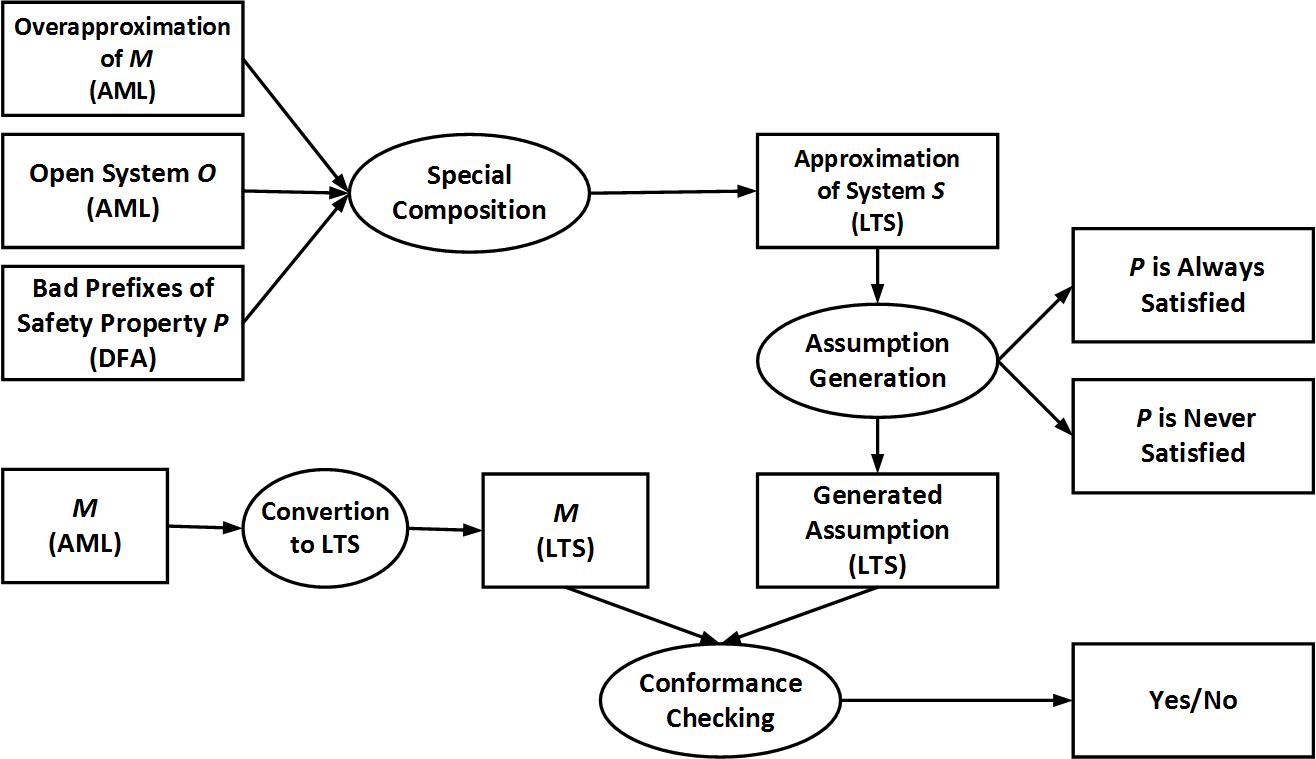}
    \caption{The proposed steps towards assumption generation in the asynchronous setting  }
    \label{fig:steps}
\end{figure}

The steps of the proposed method are illustrated in Figure~\ref{fig:steps}: By using the information extracted from the components of the open system we generate an actor which acts as an overapproximated interface of the missing component $M$. By
special composition of this actor with the actors of the open system using the property $P$, an LTS is produced
that describes an
overapproximation
of the system. At this point we prune the resulted LTS to generate the assumption, adapting the approach of \cite{giann} for the asynchronous setting. 
The assumption generation step may result in the LTS of the generated assumption, or a deceleration that the property is always or never satisfied irrespective to $M$.

To check a given actor $M$ against the generated assumption, 
we propose an approach to derive its LTS characterization in terms of its interactions with an environment in the same line of \cite{MayActor}. Finally we investigate if the LTS conforms to the generated assumption according to the trace inclusion relation \cite{brookes1983behavioural}.

The paper is organized as follows: Section \ref{sec:background} explains the necessary background to our work and is followed by
our presented approach  in Section \ref{sec:approach}.
Section \ref{sec:CaseStudy}  discusses our experience with real life case studies.  Section \ref{sec:releatedWork} presents the related work. Section \ref{sec:futureWork&conclude} concludes the paper and briefly goes over the future extensions to our work.

\section{Background}\label{sec:background}
We introduce AML, an actor-based language  for modeling communicating components in  concurrent systems, to study various aspects of actor systems. In this section, we briefly explain the syntax and semantics of AML. LTSs are used as the formal framework to define the semantics of AML models and our generated assumptions. We also briefly explain the adapted state-of-the-art direct approach  of \cite{giann} towards assumption generation.

\subsection{Labeled Transition System}  \label{sec:LTS}
An LTS $T_L$ is formally introduced by a quadruple
$(S_L,\alpha T_L,R_L,$ $s^L_0)$, where $S_L$ is the set of states, $s^L_0 \in S_L$ is the initial state, $\alpha T_L \subseteq \Act$ denotes the set of actions of $T_L$, where $\Act$ is the set of all action labels, and $R_L \subseteq S_L \times \alpha T_L \cup \{\tau\} \times S_L $ is the transition relation. We use $s \xrightarrow{a} s'$ to denote $(s,a,s') \in R_L$, where $a\in\alpha T_L \cup \{\tau\}$. The action $\tau$ denotes an internal action of the component which is unobservable to its environment.

A $\emph{trace}$ of the LTS $T_L$ is a 
sequence of actions $a_0~a_1~a_2~\ldots$ with a corresponding execution path $s_0\overto{a_0}s_1\overto{a_1} s_2\overto{a_2} \ldots$ which starts at the initial state ($s_0=s_0^L$), and is either infinite or ends in a deadlock state \cite{baier2008principles}.
Let $\mathit{Tr}(T_L)$ denote the set of all traces of $T_L$ and ${\it Tr}_{\it fin}(T_L)$ denote the set of   all prefixes of traces of $T_L$, i.e., finite traces of $T_L$. 

We use the renaming operator to rename the actions of an LTS using the renaming function $c: \Act \rightarrow \Act$. For a given LTS $T_{L}=(S_L,\alpha T_{L},R_L,s^L_0)$,
$\rho_c(T_L)=( S_L,\alpha T_{\rho_c(L)},R_{\rho_c(L)},s^L_0)$ where $R_{\rho_c(L)}=\{(s,c(a),s')|(s,a,s') \in R_L\}$ and $\alpha T_{\rho_c(L)}=\{c(a)\,\mid\, a\in \alpha T_L\}$.

\subsection{The Actor Modeling Language}\label{sec:palang}
The computation model of actors, introduced by Agha \cite{agha1985actors} for modeling distributed systems, consists of a set of encapsulated concurrent computation units, called actors, communicating through message passing. Each actor has a mailbox with a unique address which stores the received messages from other actors. It is assumed that the speed of computations is so fast in comparison to the message transmissions and hence, messages are processed in an atomic manner. Upon processing a message, an actor can update its local state, create new actors or send messages to other actors.

We introduce AML as the formal foundation of our work to model asynchronous systems. AML is a toy actor modeling language, inspired by Rebeca \cite{sirjani2004modeling}. Rebeca is an operational interpretation of the actor model with formal semantics and model checking tools developed to fill the gap between formal methods and software engineering \cite{sirjani2011ten,sirjani2004modeling}. AML has omitted some features of Rebeca for simplicity and is used for studying various aspects of actor systems.

\begin{figure}[b]
 \begin{center}
 \fbox{\parbox{\columnwidth-5mm}{\vspace{-2mm}
  \begin{align*}
  \mathrm{Model} &\Coloneqq~\mathrm{Actor}^+~\mathsf{main}~\{\mathrm{Send}^*~\}\\
  \mathrm{Actor} &\Coloneqq~\mathsf{actor}~\ID~(n)~\{~
  \langle\mathsf{int}~\Var;\rangle^*~|~\mathrm{Methods}^*~\}\\
  \mathrm{Methods} &\Coloneqq~~\MName~\{~\mathrm{Stat}^*~\} \\
  \mathrm{Stat}
  &\Coloneqq~
  \Var:=Expr;~|~\mathrm{Conditional}~|~\mathrm{Send}~|~
  \mathrm{nonDet};\\
  \mathrm{Conditional}
  &\Coloneqq~\mathsf{if}~(Expr)~\{\mathrm{Stat}^*\}~\mathsf{else}~\{\mathrm{Stat}^*\}~\\
  \mathrm{Send}
  &\Coloneqq~\lmeta \mathsf{self}~|~\ID \rmeta\,! \MName;\\
  \mathrm{nonDet}
  &\Coloneqq~\Var:=?(\lmeta Expr,\rmeta^*Expr );
  \end{align*}
 }}
 \end{center}
 \vspace{-2mm} \caption{AML syntax:  Angle brackets
 ($\lmeta~\rmeta$) are used as meta-parentheses.
 Superscript + is
 used for more than one repetition, and
 * indicates zero or more
 repetitions.
  The symbols  $\ID$, $\MName$, and $\Var$ denote the set of actor, method and variable names, respectively. The symbol $Expr$ denotes the set of arithmetic expressions and $n \in \mathbb{N}$ signifies the capacity of the actor's mailbox.
 \label{Fig::NPalangGrammar}}
 \end{figure}

\subsubsection{Notation}
Given a set $\Xi$, the set $\Xi^*$ is the set of all finite sequences over elements of $\Xi$. For a sequence $\xi\in \Xi^*$ of length $n$, the symbol $\xi_i$ denotes the $i^\mathrm{th}$ element of the sequence, where $1\leq i\leq n$. In this case, we may also write $\xi$ as $\langle \xi_1,\xi_2,\ldots,\xi_n\rangle$. The empty sequence is represented by $\epsilon$, and $\langle \iota|\xi\rangle$ denotes a sequence whose head and tail elements are $\iota\in\Xi$ and $\xi$, respectively. We use ${\it len}(\xi)$ to denote the length of the sequence $\xi$. For two sequences $\xi$ and $\xi'$ over $\Xi^*$, $\xi\oplus\xi'$ is the sequence obtained by appending $\xi'$ to the end of $\xi$. 
This notation is lifted to the set of sequences in the usual way: $\xi\oplus\{\xi^1,\ldots,\xi^n\}=\bigcup_{1\le i\le n}\xi\oplus\xi^i $ where $\xi^i\in\Xi^
*$. For a function $f:X\rightarrow Y$, we use the notation $f[\mathfrak{x}\mapsto \mathfrak{y}]$ to denote the function $\{(\mathfrak{a},\mathfrak{b})\in f|\mathfrak{a}\neq \mathfrak{x}\}\cup\{(\mathfrak{x},\mathfrak{y})\}$, where 
$\mathfrak{x}\mapsto \mathfrak{y}$ is an alternative notation for $(\mathfrak{x},\mathfrak{y})$.

\subsubsection{Syntax}


Each AML model consists of several actor definitions and a $\mathsf{main}$ block which is used to specify the initial messages of the actors. Each actor is specified by using the $\mathsf{actor}$ reserved word. The capacity of the actor's mailbox is defined as part of its declaration. Each actor has a set of local variables and methods, where the method $m$ defines the behavior of the actor upon processing a message of type $m$. For  simplicity method arguments are abstracted away.

The formal syntax of AML
is described in Figure \ref{Fig::NPalangGrammar}. An example of an AML model is illustrated in Figure \ref{fig:examplePalang} which consists of two actors, namely $\it client$ and $\it server$ with the mailbox capacities of $10$. Initially, the message $\it reply$ is sent to the actor $\it client$ in the main block. Upon processing the messages of type $\it reply$, the actor $\it client$ nondeterministically sends a message $\it delay$ or $\it request$ to the
actor $\it server$. By taking a message $\it delay$ from its mailbox, $\it server$
 sends a message $\it request$ to itself, using the reserved word $\mathsf{self}$, and by taking a message $\it request$, responds to $\it client$  with a message $\it reply$. In both cases, the
 actor $\it server$ updates its state variable $z$.



\begin{figure}
\centering
\begin{lstlisting}[language=palang, multicols=2]
actor client(10) {
    int l;
    reply {
        l=?(0,1);
        if(l==0) {server!request;}
        else {server!delay;}
    }
}

actor server(10) {

    int z;
    request {
        client!reply;
        z=0;
    }

    delay{
        z=1;
        self!request;
    }
}

main {
    client!reply;
}
\end{lstlisting}
\caption{An example of AML model}
\label{fig:examplePalang}
\end{figure}

Each actor is abstractly an instance of the type $\Actor=\ID\times 2^\Var\times 2^\Mtd \times \mathbb{N}$, where:\begin{itemize}
    \item $\ID$ is the set of all actor identifiers in the model,
    \item $\Var$ is the set of all variable names,
    \item $\Mtd= \MName\times\Statement^*$ is the set of all method declarations where {\it MName} is the set of method names,
    \item $\mathbb{N}$ is the set of natural numbers. 
\end{itemize}

An actor $(\mathit{id},\mathit{vars},\mathit{mtds},n)$ has the identifier $\mathit{id}$, the set of state variables $\mathit{vars}$, the set of methods $\mathit{mtds}$ and the mailbox with the capacity of $n$. 
We assume that by default $\self\in \mathit{vars}$. Each declared method is defined by the pair $(m,b)\in \mathit{mtds}$, where $m$ defines the message type communicated between actors, and the name of the method used to serve messages of type $m$. Furthermore, $b$ contains the sequence of statements comprising the body of the method. For example the actor $\it client$ of Figure \ref{fig:examplePalang} is
specified by
$({\it client},\{l\},\{({\it reply},\langle l=?(0,1);, \mathsf{if}~(l==0)~\{ {\it server}!{\it request};\}$ $ ~\mathsf{else}~\{ {\it server}!{\it delay};\}  \rangle)\},10)$.
The set of AML models is specified by $2^\Actor\times\Statement^*$, where the second component corresponds to the main block consisting of a sequence of message send statements.

A given AML model is called \textit{well-formed}, if no two
state variables and methods of an actor or two actors have identical
names, identifiers of variables, methods and actors do not
clash,
the main block is restricted to  send statements with non-$\mathsf{self}$ receivers
and the receiver of a message has a method with the same name as the
message.
Also all actor accesses, message communications and
variable accesses occur over declared ones.  In the following we restrict to well-formed AML models.
\subsubsection{Semantics}\label{sec:semantics}
The formal semantics of AML models is given in terms of LTSs. To this aim, we make a few definitions and assumptions.

Let $\Val=\mathbb{N}\cup\ID$ contain all possible values that can be assigned to the state variables or to be used within the expressions. We treat the value $0$ as $\it false$ and others as $\it true$ in conditional statements.
Furthermore, we define the set of all variable assignments as $\Env=\Var\rightarrow\Val$. 
As the main focus is on the message passing and the interleaving of actor executions, we abstract away from the semantics of expressions by assuming the function $\eval_v:\Expr\rightarrow\Val$
which evaluates an expression for a specific variable assignment $v \in \Env$. 
We also consider $\eval_v$ is  lifted to the sequence of expressions: $\eval_v(\langle e_1,e_2,\ldots,e_n\rangle)=\langle \eval_v(e_1),\eval_v(e_2),\ldots,\eval_v(e_n)\rangle$.

We exploit the auxiliary functions $\svars:\ID\rightarrow2^\Var$, $\mathit{methods}:\ID \rightarrow 2^{\MName}$, $qSize: \ID \rightarrow \mathbb{N}$, and $cap:\ID \rightarrow \mathbb{N}$ to denote  the state variables, the set of method names, the current size and the mailbox capacity of a given actor identifier, respectively. Furthermore, we use $\body:\ID\times\MName\rightarrow\Statement^*$ to denote the body of the given method name declared by the given actor identifier.

We provide the \emph{coarse-grained} semantics of AML in which the execution of methods are non-preemptive, i.e., when an actor takes a message, it executes the entire body of the method before starting execution of another method. Therefore, the executions of statements in a method are not interleaved, in contrast to the \emph{fine-grained} semantics.

\begin{definition}
The coarse-grained semantics of an AML model $\mathcal{M}=({\it actors},\sigma_0)$, where ${\it actors}\subseteq \Actor$ and $\sigma_0\in \Statement^*$ denotes the sequence of send statements of the main block, is given by a labeled transition system  ${\it TS}(\mathcal{M})=(S, \Act_c ,\Rightarrow, s_0)$ such that:
\begin{itemize}
\item $S = \ID\rightarrow \Env\times\MName^*\times\Statement^*$ is the set of global states. The global state $s$ maps an actor identifier to its local state. The local state of an actor $(x,\mathit{vars},\mathit{mtds},n)\in {\it actors}$ is defined by the triple  $(v,\mathfrak{q},\sigma)$, where $v$ gives the values of its variables, $\mathfrak{q}:\MName^*$ is the mailbox of the actor with the capacity of $n$, and $\sigma:\Statement^*$ contains the sequence of statements the actor is going to execute to finish the service to the message currently being processed.

\item  $s_0$ is the initial state, where all state variables
are initialized to zero and the auxiliary variable $\self$ is initialized to the actor identifier.
Also the messages specified in the main block are put in the corresponding actors' mailboxes.
Formally, $s_0(x) = (\{\mathit{z}\mapsto 0\,\mid\,z\in\svars(x)\}\cup \{\self\mapsto x\}, cq(x,\sigma_0), \epsilon)$, where $cq(x,\sigma_0)$ constructs the initial mailbox of actor $x$ from $\sigma_0$:

\begin{eqnarray*}
    cq(x, \epsilon) &=& \epsilon\\
    cq(x, \langle x!m|\sigma\rangle) &=&  \langle m |cq(x, \sigma)\rangle\\
    cq(x, \langle y!m|\sigma\rangle) &=& cq(x, \sigma),~y\neq x.
\end{eqnarray*}

\item $\Act_c  = \Act_T \times \Act_s^*$, where $\Act_T= \{t_x|x \in \ID\}$ and $\Act_s = \{\Snd(m)::x| x \in \ID, m \in \MName\}$ are the sets of take and send actions, respectively. Each action expresses that taking a message from the mailbox is followed by a sequence of send actions.
\item $\Rightarrow\,\subseteq S \times \Act_c \times S$ is the smallest relation derived from our two-tiered operational rules given in Table \ref{Tab::SOS}. A state transition only occurs as a consequence of applying the rule $\it Take$, in which the relation $\Rightarrow$ is defined in terms of the reflexive and transitive closure of the auxiliary relation $\rightarrow \,\subseteq\, S \times \Act_s \cup \{\epsilon\}\times S $. The relation $\rightarrow$ specifies the semantics of each statement while $\Rightarrow$ captures the effect of processing a message. 
\begin{table}[htbp]
       \centering
    \caption{AML natural semantic rules: $\sigma$, $\zeta$, and $T$ denote a sequence of statements, send actions, and method names, respectively. Furthermore, $\expr$ denotes an expression, $z\in \Var$ and $x,y\in\ID$. }\label{Tab::SOS}
    \begin{tabular}{lc}
${\it Assign}$ & $\Rule
{s(x)=(v,\mathfrak{q},\langle z:=\expr|\sigma\rangle)}
{s\overto{\epsilon} s[x\mapsto(v[z\mapsto\eval_v(\expr)],\mathfrak{q},\sigma)]}$\vspace*{4mm}\\
${\it NonDet}$ & $\Rule
{s(x)=(v,\mathfrak{q},\langle\mathit{var}:=?(\expr_1,\expr_2,\ldots,\expr_n)|\sigma\rangle)}
{\forall 1\leq i\leq n\,(s\overto{\epsilon} s[x\mapsto(v[\mathit{var}\mapsto\eval_v(\expr_i)],\mathfrak{q},\sigma)])}$\vspace*{4mm}\\
${\it Cond}_1$ & $\Rule
{s(x)=(v,\mathfrak{q},\langle\mathsf{if}~(\expr)~\{\sigma\}~\mathsf{else}~\{\sigma'\}|\sigma''\rangle)\,,
\, \eval_v(\expr)}
{s\overto{\epsilon} s[x\mapsto(v,\mathfrak{q},\sigma\oplus\sigma'')]}$\vspace*{4mm}\\
${\it Cond}_2$ & $\Rule
{s(x)=(v,\mathfrak{q},\langle\mathsf{if}~(\expr)~\{\sigma\}~\mathsf{else}~\{\sigma'\}|\sigma''\rangle)\,,
\, \neg\eval_v(\expr)}
{s\overto{\epsilon} s[x\mapsto(v,\mathfrak{q},\sigma'\oplus\sigma'')]}
$\vspace*{4mm}\\
${\it Snd}$ & $\Rule{
\begin{array}{c}
    s(x)=(v,\mathfrak{q},\langle {\it expr}!m|\sigma\rangle)\,,
    \,
     y = {\eval_v({\it expr})}\,,
     \,
    s(y)=(v',\mathfrak{q'},\sigma')\,,
     \\\,qSize(y) < cap(y)
    \end{array}}
{s\overto{\Snd(m)::y} s[x\mapsto(v,\mathfrak{q},\sigma)][y\mapsto(v',\mathfrak{q'}\oplus\langle m\rangle,\sigma')]}$
\vspace*{4mm}\\
 ${\it Take}$ & $\Rule
 {\begin{array}{c}s(x)=(v,\langle m|T\rangle,\epsilon)\,,
 \\s'(x)=(v',\mathfrak{q'},\epsilon) \,,
 \,
  \forall y\in\ID\setminus\{x\}\,(s(y)=(v'',\mathfrak{q''},\epsilon)) 
 \,,
 \,\\
 s[x\mapsto(v,T,body(x,m))]\overto{\zeta}\!\!^* s'
 \end{array}}
 {s\xRightarrow{(t_x,\zeta)} s'}$
\end{tabular}
\end{table}

\end{itemize}
\end{definition}

An actor can take a message $m$ from its mailbox if no other actor is in the middle of processing a message, as explained by the premise of $\it Take$. The effect of processing the message $m$ is expressed by the semantics of the statements in $\body(x,m)$, captured by the reflexive and transitive closure of $\overto{a}$, denoted by
$\xrightarrow{\zeta}\!\!^*$ where $\zeta \in Act_s ^*$, and is defined as $s\xrightarrow{\epsilon}\!\!^* s$ and $ s\xrightarrow{\zeta}\!\!^* s'\, \wedge\, s'\xrightarrow{ a} s'' \implies s\xrightarrow{\zeta \oplus \langle a\rangle}\!\!^*s''$. 

Rule $\it Assign$ explains that the state variable of an actor is updated by an assignment statement. The non-deterministic assignment statement, non-deterministically chooses one of the values of its input expressions and assigns it to the specified state variable, as expressed by the rule $\it NonDet$. The behavior of the conditional statement is given by the rules ${\it Cond}_{1,2}$. A message is successfully sent to another actor (or itself) if the  mailbox of the recipient actor is not full, and as a consequence the message is inserted into the actor's mailbox, as explained by the rule ${\it Snd}$. Otherwise, a deadlock will occur in the model. 
We remark that the sender identifier, denoted by $\it expr$ in the rules ${\it Snd}$, can be either an actor identifier $y\in\ID$ or the reserved word $\mathsf{self}$ which is mapped to its  corresponding identifier by $\eval_v$. 

We use the notation $A_{o_1}  \interleave A_{o_2} \interleave \ldots \interleave A_{o_n}$ to denote the LTS induced from the actors $A_{o_1},A_{o_2}, \ldots,$ $A_{o_n}$,  using the coarse-grained semantic description of AML.

\subsection{The Direct Approach Towards Assumption Generation}\label{subsec::direct}
The direct approach of \cite{giann}
aims at finding the weakest  assumption that describes those environments of a component that their composition with the component satisfies a safety property. In this approach the component and the property are specified by LTSs. From the LTS of the property, an error LTS is generated that traps all the traces violating the property with a $\pi$ state. As we have been inspired by this approach, we briefly explain its steps.


 \begin{enumerate}
 \item \textbf{Composition and minimization}
In order to compute the violating traces of the system, the CSP-like parallel composition of the error LTS and the component is computed by which either they evolve  simultaneously by performing the same actions or only one proceeds by performing actions not common with its counterpart. 
In addition, the internal actions of the system are turned into $\tau$. These are the actions not controlled by the environment.  The resulting LTS is minimized using an arbitrary congruence which preserves the error traces. At this stage, if the $\pi$ state is not reachable, it is concluded that the property is satisfied in any environment. 
\item \textbf{Backward error propagation} First those states from which  the $\pi$ state is reachable only through $\tau$ transitions are turned into $\pi$ by a backward analysis. Then all $\pi$ states are merged into a single $\pi$ state by integrating their transitions while self-loops are removed. This prunes those states  which the environment cannot prevent being entered into the $\pi$ state via internal actions.  If the initial state is pruned, then it is concluded that no environment can prevent the system from entering the error state and therefore the property is violated in any environment. Since only the error traces are  of interest, those states that are not backward reachable from the $\pi$ state are also pruned. 
\item\textbf{Property extraction}
Treating $\tau$ transitions as $\epsilon$-moves, the LTS of the previous step can be made deterministic by using the well-known subset construction method \cite{Noord00}. Consequently $\tau$ transitions are eliminated and the LTS is also made complete by adding missing transitions of the states. These transitions, leading to a newly added sink state $\theta$, are those behaviors that are not exercised by the component. Adding such transitions generalizes the assumption, and hence environments with such additional behaviors become 
acceptable. 
We remark that each state of the resulting deterministic LTS is a subset of the states of the original nondeterministic LTS.
Therefore, states of the resulting deterministic LTS containing the $\pi$ state are treated as $\pi$. 
Finally, the assumption is achieved by removing the $\pi$ state and its transitions. 
  \end{enumerate}

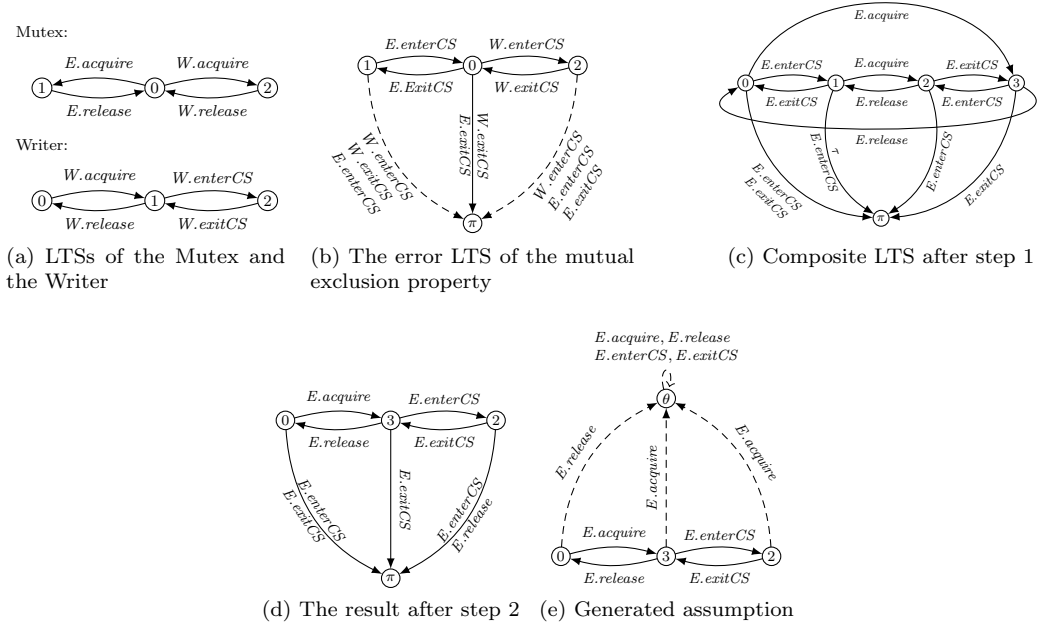
\begin{figure}[thp]
    \centering
  ~~~~~~~~
\subfloat[LTSs of the Mutex and the Writer\label{Fig::writer}]{
    \begin{tikzpicture}[scale=.75, transform shape]
\node [style=circle,draw,outer sep=0,inner sep=1,minimum size=10] (v3)at (-1,1) {$2$};
\node [style=circle,draw,outer sep=0,inner sep=1,minimum size=10] (v2) at (-3,1) {$0$};
\node [style=circle,draw,outer sep=0,inner sep=1,minimum size=10] (v1) at (-5,1) {$1$};
\node [style=circle,draw,outer sep=0,inner sep=1,minimum size=10] (v4)at (-5,-1) {$0$};
\node [style=circle,draw,outer sep=0,inner sep=1,minimum size=10] (v5)at (-3,-1) {$1$};
\node [style=circle,draw,outer sep=0,inner sep=1,minimum size=10] (v6)at (-1,-1) {$2$};
\node [outer sep=0,inner sep=1,minimum size=10] at (-5,2) {Mutex:};
\node [outer sep=0,inner sep=1,minimum size=10] at (-5,0) {Writer:};

\path[<-,-latex]  (v2) edge  [bend left = -15] node[above =0.03 cm] {\it E.acquire} (v1);
\path[ ->,-latex]  (v1) edge  [bend left = -15] node[below =0.03 cm] {\it E.release} (v2);
\path[->,-latex]  (v3) edge  [bend right = -15] node[below =0.03 cm] {\it W.release} (v2);
\path[ <-,-latex]  (v2) edge  [bend right = -15] node[above =0.03 cm] {\it W.acquire} (v3);

\path[->,-latex]  (v4) edge  [bend right = -15] node[above =0.03 cm] {\it W.acquire} (v5);
\path[->,-latex]  (v5) edge  [bend left = 15] node[below =0.03 cm] {\it W.release} (v4);
\path[->,-latex]  (v5) edge  [bend right = -15] node[above =0.03 cm] {\it W.enterCS} (v6);
\path[ <-,-latex]  (v6) edge  [bend right = -15] node[below =0.03 cm] {\it W.exitCS} (v5);
        \end{tikzpicture}
}   ~~~~
\subfloat[The error LTS of the mutual exclusion property]{\label{Fig::muProp}
\begin{tikzpicture}[scale=.7, transform shape]

        \node [style=circle,draw,outer sep=0,inner sep=1,minimum size=10] (v3)at (-1,1) {$2$};
        \node [style=circle,draw,outer sep=0,inner sep=1,minimum size=10] (v2) at (-3,1) {$0$};
        \node [style=circle,draw,outer sep=0,inner sep=1,minimum size=10] (v1) at (-5,1) {$1$};
        \node [style=circle,draw,outer sep=0,inner sep=1,minimum size=10] (v5)at (-3,-2) {$\pi$};

        \path[<-,-latex]  (v1) edge  [bend right = -15] node[above =0.03 cm] {\it E.enterCS} (v2);
        \path[ ->,-latex]  (v2) edge  [bend right = -15] node[below =0.03 cm] {\it E.ExitCS} (v1);
        \path[->,-latex]  (v3) edge  [bend right = -15] node[below =0.03 cm] {\it W.exitCS} (v2);
        \path[ <-,-latex]  (v2) edge  [bend right = -15] node[above =0.03 cm] {\it W.enterCS} (v3);
        \path[ ->,-latex]  (v2) edge   node[sloped] {$\begin{array}{c}{\it W.exitCS}\\{\it E.exitCS}\end{array}$} (v5);
        \path[ ->,-latex,densely dashed]  (v3) edge  [bend right = -30] node[sloped,below=-0.07 cm]  {$\begin{array}{l}{\it W.enterCS}\\{\it E.enterCS}\\{\it E.exitCS}\end{array}$} (v5);
        \path[ ->,-latex,densely dashed]  (v1) edge   [bend left = -30] node[sloped,below=-0.07 cm]  {$\begin{array}{l}{\it W.enterCS}\\{\it W.exitCS}\\{\it E.enterCS}\end{array}$} (v5);
        \end{tikzpicture}
}
\subfloat[Composite LTS after step $1$
]{\label{Fig::step1}
        \begin{tikzpicture}[scale=.6, transform shape]
      
      \node [style=circle,draw,outer sep=0,inner sep=1,minimum size=10] (v3)at (-1,1) {$2$};
      \node [style=circle,draw,outer sep=0,inner sep=1,minimum size=10] (v2) at (-3,1) {$1$};
      \node [style=circle,draw,outer sep=0,inner sep=1,minimum size=10] (v1) at (-5,1) {$0$};
      \node  [style=circle,draw,outer sep=0,inner sep=1,minimum size=10] (v4)at (1,1) {$3$};
      \node [style=circle,draw,outer sep=0,inner sep=1,minimum size=10] (v5)at (-2,-2) {$\pi$};
      
      \path[->,-latex]  (v1) edge  [bend right = -15] node[above =0.03 cm] {\it E.enterCS} (v2);
      \path[ <-,-latex]  (v2) edge  [bend right = -15] node[below =0.03 cm] {\it E.exitCS} (v1);
      \path[->,-latex]  (v3) edge  [bend right = -15] node[below =0.03 cm] {\it E.release} (v2);
      \path[ <-,-latex]  (v2) edge  [bend right = -15] node[above =0.03 cm] {\it E.acquire} (v3);
      \path[->,-latex]  (v3) edge  [bend right = -15] node[above =0.03 cm] {\it E.exitCS} (v4);
      \path[ <-,-latex]  (v4) edge  [bend right = -15] node[below =0.03 cm] {\it E.enterCS} (v3);
      \path[->,-latex]  (v4) edge  [bend right = -150] node[below =0.03 cm] {\it E.release} (v1);
      \path[ <-,-latex]  (v1) edge  [bend right = -68] node[below =0.03 cm] {\it E.acquire} (v4);
      
      \path[->,-latex]  (v1) edge  [sloped,bend left = -40] node[below =0.03 cm] {$\begin{array}{c} {\it E.enterCS}\\ {\it E.exitCS} \end{array}$} (v5);
      \path[->,-latex]  (v2) edge  [bend left = -40] node[sloped] {$\begin{array}{c}  \tau~~~\\ {\it E.enterCS}\end{array}$} (v5);
      \path[->,-latex]  (v3) edge  [bend right = -40] node[sloped,below =-0.03 cm] {\it E.enterCS} (v5);
      \path[->,-latex]  (v4) edge  [sloped,bend right = -40] node[below =0.03 cm] {\it E.exitCS} (v5);
      
              \end{tikzpicture}
}
\hfill
\subfloat[The result after step $2$ ]{\label{Fig::step2}
        \begin{tikzpicture}[scale=.7, transform shape]

\node [style=circle,draw,outer sep=0,inner sep=1,minimum size=10] (v3)at (-1,1) {$2$};
\node [style=circle,draw,outer sep=0,inner sep=1,minimum size=10] (v2) at (-3,1) {$3$};
\node [style=circle,draw,outer sep=0,inner sep=1,minimum size=10] (v1) at (-5,1) {$0$};
\node [style=circle,draw,outer sep=0,inner sep=1,minimum size=10] (v5)at (-3,-2) {$\pi$};

\path[->,-latex]  (v1) edge  [bend right = -15] node[above =0.03 cm] {\it E.acquire} (v2);
\path[ <- ,-latex]  (v2) edge  [bend right = -15] node[below =0.03 cm] {\it E.release} (v1);
\path[->,-latex]  (v3) edge  [bend right = -15] node[below =0.03 cm] {\it E.exitCS} (v2);
\path[ <-,-latex]  (v2) edge  [bend right = -15] node[above =0.03 cm] {\it E.enterCS} (v3);
\path[ ->,-latex]  (v2) edge   node[sloped,above=-0.03 cm] {$\begin{array}{c}{\it E.exitCS}\end{array}$} (v5);
\path[ ->,-latex]  (v3) edge  [bend right = -30] node[ sloped] {$\begin{array}{l}{\it E.enterCS}\\{\it E.release}\end{array}$} (v5);
\path[ ->,-latex]  (v1) edge   [bend left = -30] node[sloped] {$\begin{array}{l}{\it E.enterCS}\\{\it E.exitCS}\end{array}$} (v5);
        \end{tikzpicture}
}~
\subfloat[Generated assumption]{\label{Fig::assumption}
        \begin{tikzpicture}[scale=.7, transform shape]

\node [style=circle,draw,outer sep=0,inner sep=1,minimum size=10] (v3)at (-1,-2) {$2$};
\node [style=circle,draw,outer sep=0,inner sep=1,minimum size=10] (v2) at (-3,-2) {$3$};
\node [style=circle,draw,outer sep=0,inner sep=1,minimum size=10] (v1) at (-5,-2) {$0$};
\node [style=circle,draw,outer sep=0,inner sep=1,minimum size=10] (v5)at (-3,1) {$\theta$};

\path[->,-latex]  (v1) edge  [bend right = -15] node[above =0.03 cm] {\it E.acquire} (v2);
\path[ <-,-latex]  (v2) edge  [bend right = -15] node[below =0.03 cm] {\it E.release} (v1);
\path[->,-latex]  (v3) edge  [bend right = -15] node[below =0.03 cm] {\it E.exitCS} (v2);
\path[ <-,-latex]  (v2) edge  [bend right = -15] node[above =0.03 cm] {\it E.enterCS} (v3);
\path[ ->,-latex,densely dashed]  (v2) edge   node[sloped,above=-0.03 cm] {$\begin{array}{c}{\it E.acquire}\end{array}$} (v5);
\path[ ->,-latex,densely dashed]  (v3) edge  [bend left = -30] node[sloped, above] {$\begin{array}{l}{\it E.acquire}\end{array}$} (v5);
\path[ ->,-latex,densely dashed]  (v1) edge   [bend right = -30] node[sloped,above] {$\begin{array}{l}{\it E.release}\end{array}$} (v5);
\path[ ->,-latex,densely dashed]  (v5) edge   [loop above] node[align=left] {$\begin{array}{l}{\it E.acquire},{\it E.release}\\{\it E.enterCS},{\it E.exitCS}\end{array}$} (v5);
        \end{tikzpicture}
}
\caption{The steps of assumption generation \cite{giann}: the states with the label of $0$ are initial}
    \label{fig:AssumpArticle}
\end{figure}

These steps have been applied to an open system consisted of ``Mutex'' and ``Writer'', as specified by Figure \ref{Fig::writer}. ``Mutex'' component controls accesses to a critical section. It may receive requests to access from either Writer or the environment. We aim to find an assumption for the environment which guarantees mutual access to the critical section, called \emph{the mutual exclusion property} (its error LTS is specified by Figure \ref{Fig::muProp}). Figure \ref{Fig::step1} illustrates the resulted LTS after applying step $1$ by which the parallel composition of the property and the components is computed while the actions of Writer component, i.e., $\it W.acquire$, $\it W.release$, $\it W.enterCS$, and $\it W.exitCS$ are turned into $\tau$.  As state $1$ is backward reachable from $\pi$ through a $\tau$ transition, it is merged to the $\pi$ state by the backward error propagation step as shown in Figure \ref{Fig::step2}. After adding the missing transitions, identified by the dashed lines in Figure~\ref{Fig::assumption}, and removing the $\pi$ state, the assumption is achieved. As imposed by the Mutex, the actions 
${\it E.acquire}$ and ${\it E.release}$ can only alternate and therefore, the access of any acceptable environment to the critical
section is definitely protected by the Mutex. Behaviors leading to the sink state, such as  ${\it E.release}$ in the state $0$, are inconsequential as Mutex is not able to do such an action. 

\subsection{Formalization of Correctness Properties}\label{sec:property}
A safety property $P$ can be specified by a Linear Time Logic (LTL) formula (\cite{huth2004logic}) over $\alpha P$ as the set of atomic propositions. A given LTL property can be validated to be safety by using the approach of \cite{safteyCheck}.
The semantics of LTL formula $P$ is defined
as a language of all infinite words over the alphabet $\alpha P$ which satisfy $P$.

Any infinite word that violates a safety property $P$, starts with a finite \emph{bad prefix}~\cite{baier2008principles}. 
A Deterministic Finite Automata (DFA) accepting the bad prefixes of a safety LTL property $P$, denoted by $P_{\mathit{err}}$, can be formed using the approach of \cite{LTLDFA}, to reduce the verification to reasoning about finite words: a system satisfies $P$ if none of its computations has a
bad prefix.  
Formally,  $P_{\mathit{err}}$ is a quintuple  $( Q,\Sigma,\partial,q_0,\{\pi\})$  where $Q$ is the set of States,
$q_0$ is the initial state and $\partial$ is the transition relation $ \partial :Q \times \Sigma \rightarrow Q$ where for all states $q \in Q$ and all symbols $\jmath \in \Sigma$, $|\partial(q,\jmath)|=1$. The unique \textit{accept} (or final) state of $Q$ is called $\pi$ which traps all possible violations of $P$. Therefore, a verification problem can be easily decided by inspecting the reachability of $\pi$.  This idea was used by \cite{giann} in generating an assumption. 

In following we consider   $\Sigma \subseteq \Act_s$ as our atomic propositions. Hence,  properties are defined on send actions.

\section{Assumption Generation}\label{sec:approach}
According to the problem definition, a system is composed of several components which  communicate through message passing. This setting makes the AML model a suitable candidate for specification of the system components. In our context, the specification of a component of the system which operates as a service is assumed unavailable. This component is addressed as $M$.
The set containing the rest of the system components, which are all specified in AML, is addressed as the open system $O$. 
‌‌‌‌‌‌

Our generation algorithm exploits an overapproximated actor for the undefined component $M$. Furthermore, it computes the effect of the given property on the behavior of the system efficiently by generating a so-called \textit{property LTS }. Finally by refining this LTS, an assumption is generated.


 In the following we first describe our running example and afterwards we go through the generation of the overapproximated actor and the  property LTS. We then elaborate on the steps of the actual assumption generation and in the end
we explain how a given actor can be checked against the generated assumption.

\begin{figure}

    \begin{lstlisting}[language=palang, multicols=2]
actor left(2) {

    initialL{
        mutex!reqL;
    }
    permitL {
        //access critical section
        mutex!release;
        mutex!reqL;
    }
}

actor right(2) {

    initialR{
        mutex!reqR;
    }
    permitR {
        //access critical section
        mutex!release;
        mutex!reqR;
    }
}

main {
    left!initialL;
    right!initialR;
}
    \end{lstlisting}
    \caption{Specification of  open system}
    \label{fig:exampleOpenSystem}
\end{figure}

\subsection*{Running Example}
We present our approach through an example. The system  contains three actors: $\mathit{left}$, $\mathit{right}$, and $\mathit{mutex}$. The actors $\mathit{left}$ and $\mathit{right}$ that constitute the open system (Figure~\ref{fig:exampleOpenSystem}), communicate through $\mathit{reqL}$ and $\mathit{reqR}$ messages respectively with  $\mathit{mutex}$ to request for permission to enter the critical section. The actor $\mathit{mutex}$ plays the role of the unspecified component and responds with $\mathit{permitL}$ and $\mathit{permitR}$ to grant access to $\mathit{left}$, $\mathit{right}$ respectively if the critical section is not currently occupied. Furthermore, $\mathit{left}$ and $\mathit{right}$ send a $\mathit{release}$ message to $\mathit{mutex}$ upon exiting the critical section.

The required property ensures mutual exclusion over the critical section,  specified by $P$ and its corresponding $P_{\mathit{err}}$ in Figure~\ref{fig:DFAProperty}. Note that an edge from a state $q$ to a state $q'$ labeled with formula $\varphi$ means that there are transitions $q \xrightarrow{a} q'$ for all $a\in \alpha P\,(a \models \varphi)$, where $a\models {\it true}$, $a\models a$, $a\models \varphi_1 \odot \varphi_2$ if $ a\models \varphi_1 \odot a\models \varphi_2$ with $\odot\in\{\wedge, \vee\}$, and $a\models \neg \varphi$ if $a\not\models \varphi$.
\begin{figure}
\centering
    \begin {tikzpicture}

      \tikzstyle{bordered} = [draw,outer sep=0,inner sep=1,minimum size=10]

\node [style=circle,draw,outer sep=0,inner sep=1,minimum size=10] (v1)at (-3,2) {$1$};
\node [style=circle,draw,outer sep=0,inner sep=1,minimum size=10] (v2)at (-6,2) {$0$};
\node [style=circle,draw,outer sep=0,inner sep=1,minimum size=10] (v3) at (-1,2) {$\pi$};
\node (v0) at(-6.5,2) {};

\path[<-,-latex]  (v2) edge  [bend right = -15] node[above =0.03 cm] {$\alpha \lor \beta $} (v1);
\path[ ->,-latex]  (v1) edge  [bend right = -15] node[below =0.03 cm] {$\gamma$} (v2);
\path[->,-latex]  (v1) edge node[above =0.03 cm] {$\alpha \lor \beta $} (v3);
\path[<-,-latex]  (v2) edge   [loop above] node {$\lnot(\alpha \lor \beta)$ } (v2);
\path[<-,-latex]  (v1) edge   [loop below] node {$\lnot(\alpha \lor \beta \lor \gamma)$} (v1);
\path[<-,-latex]  (v0) edge (v2);

\node at (2,2)[bordered] { $\begin{array}{l}{\alpha =\it Snd(permitL)::left}\\
\beta = {\it Snd(permitR)::right}\\ \gamma = {\it Snd(release)::mutex} \end{array}$};
\end{tikzpicture}
\caption{$P_{\mathit{err}}$ for property $P =  \square\hspace{1mm} ((\alpha \lor \beta) \rightarrow \lnot ( \alpha \lor \beta) \bigcup \gamma)$}
\label{fig:DFAProperty}
    \end{figure}

\subsection{Overapproximating the Unspecified Component}\label{sec:OpenSystem}
To generate an actor which is an over-approximated interface of the missing actor, we assume that some basic information on the interactions of the open system with its environment is available. Software components are usually developed based on some sort of documentation and therefore this information can be either provided by the system modeler or extracted from the software documents. For example by using  {UML} activity or sequence diagrams, one can easily extract the interface communications of components, based on the intended level of abstraction. A more general abstraction is more appealing due to reduction in the preparation effort of such information. Furthermore, the essence of asynchrony necessitates that the order of sending messages to different actors is not of importance. Therefore, we are only interested in information about the messages sent to $M$ by the components of $O$ and an over-approximation of the messages that $O$’s components can receive in response.

Formally we describe this information by a set of pairs of type $\MName \times 2^{(\MName^* \times \ID)}$, denoted by $\Info$. Each pair $(m,\mathcal{I}) \in \Info$, where $\mathcal{I}=\{(r^\mathcal{I}_i,o^\mathcal{I}_i)|~0\leq i \leq k\}$, expresses that the actor $A_{o^\mathcal{I}_i} \in O$ expects to receive the message sequence $r^\mathcal{I}_i$ from $M$ after  the message $m$ is sent to $M$. By $\Info$, we aim to characterize the most general behaviors expected from an unspecified component. In other words, any actor playing the role of $M$ should not exhibit any behavior beyond it. Such actors are said to be in \textit{compliance} with $\Info$ (the notion of compliance will be explained  in Section \ref{sec:checkM}). If the information of $\Info$ is not accurate, in the sense that the response of $\Info$ to the message $m$ is more or less than what really the components of $O$ expect, then our approach still works but the set of possible $M$s will change accordingly due to the notion of our compliance. In other words, only $M$s triggering responses specified by $\Info$, may be accepted.

To cover cases such that the open system expects to receive multiple possible sequences of responses by different actors of $O$, multiples of $(m,\mathcal{I}_1),\ldots,(m,\mathcal{I}_j)$ for some $j>0$ can be used.
For example, $\Info= \{( m_1,\{(\langle m_{11},m_{12}\rangle,o_1), (\langle m_{21}\rangle, o_2)\}) ,(m_1,\{(\langle m_{31}\rangle,o_3)\}),(m_2,\{(\langle m_{31} \rangle,o_3)\})\}$ explains that the components of $O$ expect one of two possible scenarios to happen in response to sending $m_1$ to $M$: Either $o_1$ receives $m_{11}$ and then $m_{12}$ and also $o_2$ receives $m_{21}$, or only $o_3$ receives $m_{31}$. In response to sending $m_2$, $o_3$ expects to receive $m_{31}$. The usage of sets belonging to $2^{(\MName^* \times \ID)}$ instead of sequences is justified by the fact that in most asynchronous systems there is no guarantee that messages are received by different actors in the order they have been sent. A description of $\Info$ is called well-formed if for all $(m,\mathcal{I})\in\Info$ and for all $(r,o)\in \mathcal{I}$, ${\it len}(r)$ is less than or equal to the capacity of the actor $A_o$. In following, we only consider well-formed $\Info$s.

\begin{example}\label{EX::Info}
For the case of our example, since the goal of the system is mutual access to the critical section, it is reasonable to assume that the open system expects to eventually receive $\mathit{permitL}$ in response to $\mathit{reqL}$, $\mathit{permitR}$ in response to $\mathit{reqR}$ and upon sending $\mathit{release}$, it does not expect a response. Hence, $\Info = \{(\mathit{reqL},\{(\langle \mathit{permitL}\rangle,\mathit{left})\}, (\mathit{reqR},\{(\langle\mathit{permitR}\rangle,\mathit{right})\}),
(\mathit{release},\{\})\}$.
\end{example}

We use $\Info$ to form an actor as a stub for the unspecified component $M$.
This actor which is an overapproximated interface for  $M$, is called $A_{\IntM}$.
The idea is to compose this replacement of $M$ with the components of $O$ by using $P_{\mathit{err}}$ to produce an LTS in such a way that the error traces of the system are computed.  We then reconfigure this LTS to produce the assumption. The actor $A_{\IntM}$ is defined such that for each pair  $(m,\mathcal{I})\in \Info$, a method with the same name as $m$  will be defined. In the body of $m$, the messages in the message sequence $r^\mathcal{I}_i$ of each $(r^\mathcal{I}_i,o^\mathcal{I}_i) \in \mathcal{I}$ will be sent to the actor $o^\mathcal{I}_i$. If there exists multiple pairs for a message $m$ in $\Info$, then in the body of the method serving $m$, the sequence of messages will be sent to the corresponding actors nondeterministically. Due to the nature of asynchrony, the order of messages sent to different actors should not be of importance. As the order of send actions in the labels of an {AML} semantic model are the same as the order of their corresponding send statements, we implement this property during the construction of $\IntM$.  Thus, any order between sending the messages of the sequences $r^\mathcal{I}_i$ and  $r^\mathcal{I}_j$ to the actors $o^\mathcal{I}_i$ and $o^\mathcal{I}_j$ should be considered while the order of  messages in each sequence should be preserved. We remark that the order of messages from a sender to a receiver is preserved in actor models which conforms to the reality of most asynchronous communicating systems~\cite{zakeriyan2015jacco}. To this aim, we exploit an auxiliary function ${\it shuffle}(\mathcal{I})$ which interleaves the message sequences to different actors, as defined by $\mathcal{I}$,  accordingly:{{\allowdisplaybreaks
\begin{flalign*}
&{\it shuffle}(\emptyset)=\emptyset\\
&{\it shuffle}(\{(r,o)\}\cup \mathcal{I})= {\it xshuffle}(o\boldsymbol{!}r,{\it shuffle}(\mathcal{I}))\\ 
&{\it xshuffle}(\sigma,\emptyset)=\{\sigma\}\\
& {\it xshuffle}(\sigma_1,\{\sigma_2\}) =  {\it hshuffle}(\sigma_1,\sigma_2)\\
& {\it xshuffle}(\sigma_1,\{\sigma_2\}\cup \Upsilon) =  {\it hshuffle}(\sigma_1,\sigma_2)\cup  {\it xshuffle}(\sigma_1,\Upsilon) \mbox{, where $\Upsilon\neq\emptyset$}\\ 
&  {\it hshuffle}(\epsilon,\sigma)=\sigma\\
&  {\it hshuffle}(\sigma,\epsilon)=\sigma\\
& {\it hshuffle}(\langle x!m_1;\rangle\oplus \sigma_1,\langle y!m_2;\rangle\oplus \sigma_2)=\\
 &\hspace*{1cm} \langle x!m_1;\rangle\oplus{\it hshuffle}( \sigma_1,\langle y!m_2;\rangle\oplus \sigma_2)\cup \langle y!m_2;\rangle\oplus {\it hshuffle}(\langle x!m_1;\rangle\oplus \sigma_1,\sigma_2)\\ 
\end{flalign*}}%
}where $o\boldsymbol{!}r$  abbreviates  sending the message sequence $r$ to the component $o$ and is defined by the equations $o\boldsymbol{!}\epsilon = \epsilon$ and $o\boldsymbol{!} \langle m|T \rangle = \langle o!m;\rangle\oplus o\boldsymbol{!}T$. The auxiliary function ${\it xshuffle}$ inserts the send statements of $o\boldsymbol{!}r$ randomly into the send statement sequences of its second operand with the help of ${\it hshuffle}$. This function interleaves the statements of two sequences arbitrarily while preserving the order of send messages in each sequence. 

We would have got a considerable practical gain if such reorderings could had been handled semantically through an appropriate notion of equivalence such as may-testing \cite{MayActor,MayAsynch} instead of explicit shuffling of the messages in each message body of $\IntM$. Due to the non-existence of a practical tool to compare the traces regardless of their action orders in our experiments, we were confined to follow this approach.

\begin{definition}\label{def:IntM}
Actor $A_{\IntM}$ is denoted by $(M,\{l\},\mathit{mtds}_{\IntM},\infty)$ such that for all distinct pairs $(m,\mathcal{I}_1)$, $(m,\mathcal{I}_2)$, $\ldots,( m,\mathcal{I}_n)$ $\in$ $\Info$ for some $n\ge 1$, where $\forall\, 1\leq j \leq n\,(       {\it shuffle}(\mathcal{I}_j) = \{\sigma_i^{\mathcal{I}_j}|~ 1 \leq i \leq k_j\})$, then $\{(m,b)\}\cup\{(m_{\mathcal{I}_j,\sigma_i}^g,b_{\mathcal{I}_j,\sigma_i}^g)\,\mid\, 
1\le g \le {\it len}(\sigma_{i}^{\mathcal{I}_j}) \}\subseteq \mathit{mtds}_{\IntM}$.  The bodies of methods $m$ and $m_{\mathcal{I}_j,\sigma_i}^g$ are defined by: 
\begin{align*}
\begin{split}
 b=\quad\quad& \langle l = ? (1,\ldots,n);,\\
& \mathsf{if}(l==1)~\\
& \hspace*{1cm}\{ l = ? (1,\ldots,k_1);\\ 
& \hspace*{1cm}\mathsf{if}(l==1)~\{\mathsf{self}!m_{\mathcal{I}_1,\sigma_1}^1;\}~ \mathsf{else}~ \{\} \\
& \hspace*{1cm}\cdots \\
& \hspace*{1cm}\mathsf{if}(l==k_1)~\{\mathsf{self}!m_{\mathcal{I}_1,\sigma_{k_1}}^1;\}~ \mathsf{else}~ \{\}\} \\
&\mathsf{else}~ \{\},\\
& \cdots \\
& \mathsf{if}(l==n)~\\
& \hspace*{1cm}\{ l = ? (1,\ldots,k_n);\\ 
& \hspace*{1cm}\mathsf{if}(l==1)~\{\mathsf{self}!m_{\mathcal{I}_n,\sigma_1}^1;\}~ \mathsf{else}~ \{\} \\
& \hspace*{1cm}\cdots \\
& \hspace*{1cm}\mathsf{if}(l==k_n)~\{\mathsf{self}!m_{\mathcal{I}_n,\sigma_{k_n}}^1;\}~ \mathsf{else}~ \{\}\} \\
&\mathsf{else}~ \{\}\rangle\\
b_{\mathcal{I}_j,\sigma_i}^g = & \langle \sigma_{i_g}^{\mathcal{I}_j},\mathsf{self}!m_{\mathcal{I}_j,\sigma_i}^{g+1};\rangle~~\mbox{where $1\le g<{\it len}(\sigma_i^{\mathcal{I}_j})$}\\
b_{\mathcal{I}_j,\sigma_i}^g = & \langle \sigma_{i_g}^{\mathcal{I}_j}\rangle~~\mbox{where $ g=={\it len}(\sigma_i^{\mathcal{I}_j})$}
\end{split}
\end{align*}where $\sigma_{i_g}^{\mathcal{I}_j}$ denotes the $g^{\it th}$ send statement of the sequence $\sigma_{i}^{\mathcal{I}_j}$, the
variable $l$ is the state variable and the capacity of $\IntM$'s mailbox is set to $\infty$.
\end{definition}


We remark that the modeler can set the mailbox capacity as it fits to its resources and model such that no deadlock occurs due to a mailbox overflow. Furthermore, the outermost non-deterministic assignment to $l$ leads to an arbitrarily selection of the response $\mathcal{I}_j$ to $m$ while the innermost non-deterministic assignment results in a random selection of $\sigma_i^{\mathcal{I}_j}$, an interleaving of the messages to the actors defined by $\mathcal{I}_j$. The messages of the sequence $\sigma_{i}^{\mathcal{I}_j}$ are sent one-by-one by using the $\mathsf{ self}$ keyword to make our $A_\IntM$ as general as possible (A given $M$ may send multiple of such messages together). The outermost non-deterministic assignment to $l$ can be removed if there exists only one pair $(m,\mathcal{I})$ for the message $m$ in $\Info$ while the innermost non-deterministic assignment to $l$ can be eliminated if only one actor of $O$ expects to receive from $M$ in response to processing $m$. For brevity, the statement  $\mathsf{self}!m_{\mathcal{I}_j,\sigma_i}^{1};$ in the body of $m$ can be replaced by the body of the method $m_{\mathcal{I}_j,\sigma_i}^{1}$. 

\begin{example}For the open system of Figure \ref{fig:exampleOpenSystem}, the generated actor $A_{\IntM}$ for $\Info$ of Example \ref{EX::Info} is shown in Figure \ref{fig:IntMOfexampleOpenSystem}. Upon taking a $\mathit{reqL/R}$ message from the mailbox and executing the corresponding method, a $\mathit{permitL/R}$ message is sent to $O$ in response and after executing a $\mathit{release}$ message no response message is sent.
\end{example}

\begin{figure}[h]
    \centering
    \begin{lstlisting}[language=palang, multicols=2]
actor mutex(4){

    reqL{
        left!permitL;
    }

    reqR{
        right!permitR;
    }

    release{}
}

    \end{lstlisting}
    \caption{The overapproximated $\IntM$ for 
    $\Info$ of Example \ref{EX::Info}}
    \label{fig:IntMOfexampleOpenSystem}
\end{figure}

The declarative specification of $\Info$ is so general that it can be derived independent of the details in software documents with the least effort. However, it can be enriched 
by being stateful to support cases where the expected messages vary as the component interactions evolve. 
This makes it possible to construct a  more precise $A_\IntM$ by replacing its non-deterministic behavior with conditions on sending a message sequence. Such an  enrichment does not affect the generality and even the performance of our assumption generation approach. We inspect the effect of a more accurate $\Info$ on our approach in Section \ref{sec:preciseIntM}. Furthermore, it complicates the process of compliance checking due to the involvement of variables. The more accurate $\Info$, requires more information to be provided or derived from software documents. It is only beneficial in the process of checking a given $M$ against an assumption as the set of $M$s complying with $\Info$ becomes more limited.


\subsection{Property LTS Semantics}\label{sec:propertyLTS}
To generate a valid assumption,  those traces of the whole system which violate the property are eliminated. Traditionally this is done by computing the composition of the semantic model of the specified parts of the system and the property which is inefficient and sometimes even unfeasible in  asynchronous environments due to inadequacy  to converge (see Section \ref{sec:genM} which explains how
the LTS semantic of an open actor-based component is generated). 
Instead, we introduce a novel composition operator which computes such error traces efficiently during the derivation of the semantic model. Intuitively, this operator generates the states of the system only as long as the property is not violated.
To this end, the state space of the system, i.e., $O$ and the interface of $M$, is generated in the same way of the AML semantics while we follow $P_{\mathit{err}}$ with each action as long as the error state has not been visited. The generated LTS is called the property LTS.

\begin{definition}\label{Def::propertyLTS}
Given $P_{\mathit{err}}=( Q,\Sigma,\partial,q_0,\{\pi\})$  and  the actor model $\mathcal{M}$, where $\mathit{TS}(\mathcal{M})=(S,\Act_c,\\\Rightarrow,s_0)$,
we produce the property LTS $\mathit{TS}(\mathcal{M}\otimes P_{\it err})=(S \times Q, \Act_c ,\Rightarrow_p, s_{0} \times q_{0})$  such that the transition relation $\Rightarrow_p$ is the least relation
derived by the rules of Table \ref{Tab::SOS} except ${\it Take}$ together with the rules of Table \ref{Tab::SOSP}.

\begin{table}[htbp]
    \centering
    \caption{The natural semantic rules of $\mathcal{M}\otimes P_{\it err}$. }\label{Tab::SOSP}
    \begin{tabular}{lc}
        ${\it Exe}$ & $\Rule
        {s\xrightarrow{a}s'\,\wedge\,q\neq \pi\,\wedge\, ((a=\epsilon\wedge q'=q) \vee (a=\Snd(m)::y\,\wedge \,q'=\partial(q,m) ) )}
        {(s,q) \xrightarrow{a}_p (s',q')}$\vspace*{4mm}\\
        ${\it Take}$ & $\Rule
        {\begin{array}{c}
            s(x)=(v,\langle m|T\rangle,\epsilon)\,\wedge\,
            (s'(x)=(v',\mathfrak{q'},\epsilon)\vee q'=\pi)\,\wedge\,\\
            \forall y\in\ID\setminus\{x\}\,(s(y)=(v'',\mathfrak{q''},\epsilon))\,\wedge \\
            (s[x\mapsto(v,T,body(x,m))],q)\xrightarrow{\zeta}_p\!\!^* (s',q')\,\wedge\\((q'=\pi\,\wedge\, \mathfrak{s} = \pi)\vee (q'\neq\pi\,\wedge\, \mathfrak{s}=(s',q')))\end{array}}
        {(s,q)\xRightarrow{(t_x,\zeta)}_p \mathfrak{s}}$\vspace*{4mm}\\
    \end{tabular}
\end{table}

\end{definition}

Each state of the semantics constitutes two elements: the first denotes the state of the actor model while the second presents the state of the property. As explained by the rule $\it Exe$, the part of the actor model is updated by the semantic rules given in Table \ref{Tab::SOS} while the property part of the state is updated using its transition relation $\partial$ whenever a send action is performed. The rule $\it Take$ expresses that the global state changes upon processing a message and afterwards either the error state of the property is visited or the body of the message handler is fully executed. In the former case, the global state is turned into $\pi$ by this rule.  We remark that $\overto{\zeta}\!\!^*_p$ is reflexive and transitive closure of $\overto{a}_p$, defined in the same way of  $\overto{\zeta}\!\!^*$.

A property LTS is called well-formed if the error state $\pi$ does not have any outgoing transition. By construction, only well-formed LTSs are considered here. The set of error traces of a property LTS $T_L$, denoted by $\mathit{errTr}(T_L)$, is the set all traces which lead to the error state $\pi$. The following definition characterizes actor models that satisfy a given property $P$.

\begin{definition}\label{def::satisfaction}

Given the actor model $\mathcal{M}=({\it actors},\sigma_0)$, where ${\it actors} = O\cup \{A_x\}$, we briefly denote  $\mathit{TS}(\mathcal{M}\otimes P_{\it err})$ by $A_x \interleave_P O$. We say that $A_x \interleave O$  satisfies a property $P$  denoted by
$(A_x \interleave O) \models P$ iff
$\mathit{errTr}(
A_x \interleave_P O) 
= \emptyset$.
\end{definition}

\subsection{Generating the Assumption}\label{sec:gencond}
The following demonstrates the steps of the assumption generation for a given safety property $P$, $\Info$, $O$, where  $O=\{A_{o_1},\ldots,A_{o_\imath}\}$ for some $\imath\ge 0$,  and the unspecified component $M$. Thus, we avoid passing them as arguments to  the proceeding auxiliary functions. We start by generating the property LTS and then through several steps of refinement we generate the assumption.

\vspace*{1mm}\noindent\textbf{Step~1:}\label{item:step1}~First  we  use  $\Info$  to  build $A_{\IntM}$ according to Definition \ref{def:IntM}. Then, 
we compute the property LTS $A_{\IntM} \interleave_P O$ and 
apply the renaming operator $\rho_{c_1}$ on it 
to hide the internal communications of the open system and $\IntM$. Communications of $\IntM$ with itself, introduced as the result of $A_\IntM$ construction, and the communications between the components of $O$ are considered internal. 
The renaming function $c_1:\Act_c \rightarrow \Act_c$ maps $(t_x,\langle a_1, \ldots ,a_k \rangle)$ into $(t_x,h({\it set}_\ID(x) ,\langle a_1, \ldots ,a_k \rangle))$ 
where ${\it set}_\ID$ symbolically denotes the set of actor identifiers that any communication between its members should be hidden: ${\it set}_\ID(M)=\{M\}$ and $\forall A_y\in O,({\it set}_\ID(y)=\{o_1,\ldots,o_\imath\})$ with the help of  
the auxiliary function $h:2^\ID\times\Act_s^*\rightarrow \Act_s^*$. This function filters out all communications between the components of its first operand occurred in the sequence of actions given by its second operand, and is defined recursively as:
\[
\begin{cases}
h(\wp,\epsilon)=\epsilon\\
h(\wp,\langle \Snd(m)::y|\zeta\rangle)=\langle \Snd(m)::y \rangle \oplus h(\wp, \zeta) & y \notin \wp\\
h(\wp,\langle \Snd(m)::y|\zeta\rangle)=h(\wp,\zeta) & y \in\wp
\end{cases}
\]It is worth noting that the action sequence as a consequence of a take action by $M$ is turned into either $\epsilon$ or a sequence containing a single send action to a component of $O$.

Since we are going to eventually check a given actor $A_M$ against the generated assumption,  we reconfigure the generated assumption to be consistent with the LTS generated from a single actor $A_M$ (see Section \ref{sec:generatingM}). Through the actor $\IntM$'s perspective, which has an unknown environment, it may receive any of its messages at any time and one by one. However, due to the coarse execution of a message handler, it may receive the sequence of messages entirely through a single action from $O$. Therefore, the reconfiguration process is managed to make our approach general and independent of the granularity of the semantic implementation. We remark that the coarse semantics is beneficial to avoid unnecessary interleaving of statement executions of concurrent actors.  To this aim, transitions labeled as $(t_y,\langle \Snd(m_1)::M,\ldots,\Snd(m_n)::M\rangle)$, where $A_y\in O$, are transformed into a set of consecutive transitions labeled by $\Rcv(m_i)$ for each item of $\Snd(m_i)::M$ where $i\le n$. 
Another reconfiguration is to
turn the actions like $(t_M,\langle \Snd(m)::y \rangle )$ into the
actions like $\Snd(m)::y$ as part of  our goal to generate the assumption as $M$'s perspective in terms of its interactions with an environment.

\begin{example} The actor $\mathit{left}$ takes the message $\it permitL$ and then sends $\it release$ and $\it reqL$ messages to $\it mutex$ consecutively. Hence, $\it mutex$ will receive the messages $\it release$ and $\it reqL$ through the single action $(t_\mathit{left},\langle \Snd(\mathit{release})::M,\Snd(\mathit{reqL})::M \rangle)$.
However, considering $\it mutex$ as a single actor with an unknown environment, it may receive any of these messages at any time but consecutively as the order of messages to the same receiver is preserved in actor models as mentioned earlier. Thus, the single transition with the label $ (t_\mathit{left},\langle \Snd(\mathit{release})::M,\Snd(\mathit{reqL})::M \rangle)$ is turned into two consecutive transitions with the labels
$ \Rcv(\mathit{release})$ and $\Rcv(\mathit{reqL})$ (while a new state is defined to connect these transitions together). Furthermore, the action $(t_\mathit{mutex},\langle\Snd(\mathit{permitL})::\mathit{left}\rangle)$ is renamed into
$\Snd(\mathit{permitL})::\mathit{left}$.
\end{example}

The operator $\psi$ is introduced to handle such reconfigurations. For a given LTS $T_{L_1}=$ $ (S_{L_1},\alpha T_{L_1},R_{L_1},s_{0}^{L_1})$, $\psi(T_{L_1}) =  ( S_{L_2},\alpha T_{L_2},R_{L_2},s_{0}^{L_2} )$ where:\begin{itemize}
\item $s_{0}^{L_2} = s_{0}^{L_1}$
\item $\alpha T_{L_2} \subseteq \Act_r \cup \Act_s$  Where $\Act_r = \{\Rcv(m)|m \in \MName\}$
\item $R_{L_2}$ is defined such that $\forall s_1^{L_1} \xrightarrow{a} s^{L_1}_2 \in R_{L_1}$:
\begin{itemize}
\item  if $a = (t_y,\langle \Snd(m_1)::M,\ldots,\Snd(m_n)::M \rangle)$ for some $n\ge 1$ and $A_y \in O$, then 
$s_1^{L_2} \xrightarrow{\Rcv(m_1)} s_{a_{11}}^{L_2}\xrightarrow{\Rcv(m_2)} s_{a_{12}}^{L_2},\ldots, s_{a_{1(n-1)}}^{L_2}\xrightarrow{\Rcv(m_n)} s_{2}^{L_2}$;
\item  if $a= (t_M,\langle \Snd(m_1)::y\rangle)$, then $s_1^{L_2} \xrightarrow{\Snd(m_1)::y} s_{2}^{L_2}$;
\item if $a= 
(t_x,\epsilon)$, then $ s_1^{L_2} \xrightarrow{\tau} s^{L_2}_2$;
\end{itemize}
\item $S_{L_2}=\{s_i^{L_2}\mid\, s_i^{L_1}\in S_{L_1}\}
\cup \{s_{a_{ij}}^{L_2}\mid (s_i^{L_1},a, s_k^{L_1}) \in R_{L_1}\,\wedge\,a=(t_y,\langle \Snd(m_1)::M, \Snd(m_2)::M,\ldots,\Snd(m_n)::M \rangle)\,\wedge\,n\ge 1\,\wedge\, 0<j< n\,\wedge\, A_y \in O\}.$
\end{itemize}
Applying the operator $\psi$ does not change the semantics at all as the order of actions is preserved. Actions are renamed so that they express the interactions of $M$ with an environment from its own perspective.

Adopting the first step of \cite{giann}, we minimize this LTS with respect to any equivalence relation that preserves the error traces while it may remove $\tau$ actions. We exploit the weak-trace equivalence relation \cite{GrooteMousavi14}.                    If the error state is not reachable in this LTS, then the property is satisfied for the systems containing actors of $O$ and  any actor $A_M$ that complies with $\Info$. This is reported to the user and the algorithm is terminated.

\begin{figure}
\centering
\begin{tikzpicture}[scale=.7, transform shape]
\node [style=circle,draw,outer sep=0,inner sep=1,minimum size=15] (v1)at (-5,4) {$6$};
\node  [style=circle,draw,outer sep=0,inner sep=1,minimum size=15] (v2)at (-1,4) {$10$};
\node  [style=circle,draw,outer sep=0,inner sep=1,minimum size=15] (v3)at (0.5,2.5) {$12$};
\node [style=circle,draw,outer sep=0,inner sep=1,minimum size=15] (v4) at (-2.1,2) {$8$};
\node  [style=circle,draw,outer sep=0,inner sep=1,minimum size=15] (v5)at (-5,0) {$2$};
\node  [style=circle,draw,outer sep=0,inner sep=1,minimum size=15] (v6)at (-3,-1.5) {$4$};
\node  [style=circle,draw,outer sep=0,inner sep=1,minimum size=15] (v7)at (3,4) {$\pi$};
\node  [style=circle,draw,outer sep=0,inner sep=1,minimum size=15] (v8)at (1.5,1.5) {$11$};
\node  [style=circle,draw,outer sep=0,inner sep=1,minimum size=15] (v9)at (3,0) {$9$};
\node  [style=circle,draw,outer sep=0,inner sep=1,minimum size=15] (v10)at (-2.5,-2) {$3$};
\node  [style=circle,draw,outer sep=0,inner sep=1,minimum size=15] (v11)at (1,-1.1) {$7$};
\node  [style=circle,draw,outer sep=0,inner sep=1,minimum size=15] (v12)at (3,-4) {$5$};
\node  [style=circle,draw,outer sep=0,inner sep=1,minimum size=15] (v13)at (-5,-4) {$0$};
\node  [style=circle,draw,outer sep=0,inner sep=1,minimum size=15] (v14)at (-1,-4) {$1$};
\node[draw,text width=6.2cm](v15) at (8,0){{\large $s_0:~((\epsilon,\langle{\it initialL}\rangle,\langle{\it initialR}\rangle),0)$\\
 $s_{1}:~((\langle{\it reqR}\rangle,\langle{\it initialL}\rangle,\epsilon),0)$\\
 $s_{2}:~((\langle{\it reqL}\rangle,\epsilon,\langle{\it initialR}\rangle),0)$\\
 $s_{3}:~((\langle{\it reqR,reqL}\rangle,\epsilon,\epsilon),0)$\\
 ${s_4}:~((\langle{\it reqL,reqR}\rangle,\epsilon,\epsilon),0)$\\
 $s_5:~((\epsilon,\langle{\it initialL}\rangle,\langle {\it permitR}\rangle),1)$\\
 $s_6:~((\epsilon,\langle{\it permitL}\rangle,\langle {\it initialR}\rangle),1)$\\
 $s_7:~((\langle{ \it release,reqR}\rangle,\langle{\it initialL}\rangle,\epsilon),0)$\\
 $s_8:~((\langle{ \it release,reqL}\rangle,\epsilon,\langle {\it initialR}\rangle),0)$\\
 $s_9:~((\langle { \it reqL}\rangle,\epsilon,\langle {\it permitR}\rangle),1)$\\
 $s_{10}:~((\langle { \it reqR}\rangle,\langle{\it permitL}\rangle,\epsilon),1)$\\
 $s_{11}:~((\langle { \it reqL,release,reqR}\rangle,\epsilon,\epsilon),0)$\\
 $s_{12}:~((\langle { \it reqR,release,reqL}\rangle,\epsilon,\epsilon),0)$\\
}};

\path[ ->,-latex]  (v1) edge   node[above =0.03 cm] {\it Rcv(reqR)} (v2);
\path[ ->,-latex]  (v2) edge[sloped]   node[below=0.03 cm] {\it Rcv(release)} (v3);
\path[ ->,-latex]  (v2) edge   node[above =0.03 cm] {\it Snd(permitR)::right} (v7);
\path[ ->,-latex]  (v3) edge[sloped]  node[above =0.03 cm] {\it Rcv(reqL)} (v10);
\path[ ->,-latex]  (v14) edge [sloped]   node[below =0.03 cm] {\it Rcv(reqL)} (v10);
\path[ ->,-latex]  (v9) edge [sloped]   node[below =0.03 cm] {\it Snd(permitL)::left} (v7);
\path[ ->,-latex]  (v9) edge[sloped]  node [below =0.03 cm]{\it Rcv(release)} (v8);
\path[ ->,-latex]  (v12) edge [sloped]   node[below =0.03 cm] {\it Rcv(reqL)} (v9);
\path[ ->,-latex]  (v14) edge [sloped]   node[below =0.03 cm] {\it Snd(permitR)::right} (v12);
\path[ ->,-latex]  (v12) edge [sloped]   node[below =0.03 cm] {\it Rcv(release)} (v11);
\path[ ->,-latex]  (v11) edge [sloped]   node[below =0.03 cm] {\it Rcv(reqR)} (v14);
\path[ ->,-latex]  (v13) edge [sloped]   node[below =0.03 cm] {\it Rcv(reqR)} (v14);
\path[ ->,-latex]  (v13) edge [sloped]   node[above =0.03 cm] {\it Rcv(reqL)} (v5);
\path[ ->,-latex]  (v8) edge[sloped]   node[below =0.03 cm]{\it Rcv(reqR)} (v6);
\path[ ->,-latex]  (v5) edge [sloped]   node[below =0.03 cm] {\it Rcv(reqR)} (v6);
\path[ ->,-latex]  (v1) edge [sloped]   node[below =0.03 cm] {\it Rcv(release)} (v4);
\path[ ->,-latex]  (v4) edge [sloped]   node[above =0.03 cm] {\it Rcv(reqL)} (v5);
\path[ ->,-latex]  (v5) edge [sloped]   node[above =0.03 cm] {\it Snd(permitL)::left} (v1);
\path[ ->,-latex]  (v6) edge [sloped]   node[below =0.03 cm] {\it Snd(permitL)::left} (v2);
\path[ ->,-latex]  (v10) edge [sloped]   node[above =0.05 cm] {\it Snd(permitR)::right} (v9);
\node (v0)at (-6,-4) {};
\path[ ->,-latex]  (v0) edge (v13);

\node at (3,0) {};
\end{tikzpicture}

\caption{The resulting LTS of step $1$: $s_i :((\mathfrak{q}_{\IntM},\mathfrak{q}_{\it left},\mathfrak{q}_{\it right}),q)$ expresses the mailbox contents for the actors $\IntM$, left,  right, and the state of the $P_{err}$ in the state $i$.}
\label{fig:stepOne}
    \end{figure}
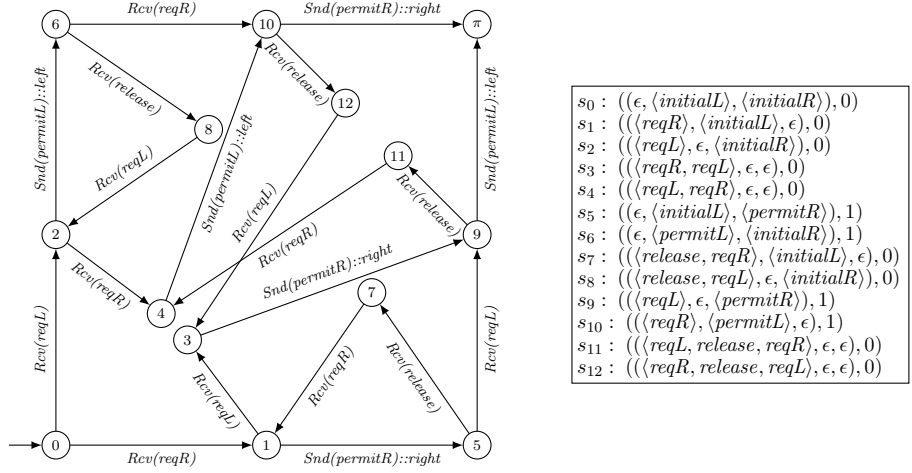

\vspace*{1mm}\noindent\textbf{Step~2:}~ \label{item:termination} The resulting property LTS of the previous step contains all the traces of the system that violate the property. Referring to the fact that the internal communications of $\IntM$, turned into $\tau$, do not involve in the safety property $P$. Thus, $\tau$ transitions immediately preceding the $\pi$ state are definitely those generated from the communications among $O$'s components (achieved by the computation of the property LTS). Therefore, the second step of \cite{giann} can be safely adopted to propagate the $\pi$ state over $\tau$ transitions in a backward fashion. Consequently, this step prunes those states that $\IntM$ cannot prevent to be entered into the $\pi$ state via $\tau$ steps. If the initial state has become $\pi$ after the backward propagation, it is deduced that by only internal actions of $O$ and possibly of $\IntM$, one can reach the error state from the initial state without any communication between $O$ and $\IntM$. Hence, for any given $M$, the property is never satisfied as the $\pi$ state is reachable only by the internal actions of $O$. Concluding that no $A_M$ can prevent the system from possibly entering the error state. This is reported to the user and the algorithm is terminated. Since we are only interested in error traces, we also remove states that are not backward reachable from the error state.

\vspace*{1mm}\noindent\textbf{Step~3:}~
We follow \cite{giann}'s third step and make the resulting LTS deterministic and complete with respect to the actions of $\Act_s\cup \Act_r$. We then omit the error state $\pi$ and the transitions leading to it and the assumption, denoted by the LTS $T_A$, is generated. Adding missing receive actions to a state is sound as $O$ will not send such messages and any given $M$ will be executed within the environment $O$.  Furthermore, adding missing send actions is also sound as we consider an additional restriction in accepting an environment (see Section \ref{sec:checkM}): a given $M$ must conform to the information interactions of $\Info$. Thus, adopting this step of \cite{giann} is possible for the asynchronous setting by considering a deterrent step in the process of checking a given $M$ against the generated assumption. 

\begin{example}After applying the first step to our running example, the LTS shown in Figure \ref{fig:stepOne} is achieved. We proceed to the next step as its $\pi$ state is reachable. However, this LTS is not changed by the second step as the actors of the open system and $\IntM$ do not perform any internal actions. Figure \ref{fig:step3} demonstrates the generated assumption
by the third step which indicate that after sending a $\mathit{permitL}$
message, no $\it permitL/R$ message will be sent as long as no
$\it release$ message has been received. 
Adding a missing send action, e.g., $\Snd({\it permitL})::{\it left}$, to the gray state of Figure \ref{fig:step3} is safe, as any given $M$ in compliance with $\Info$, does not produce a response staring with $\Snd({\it permitL})::{\it left}$ upon receiving $\it reqR$. 
\end{example}
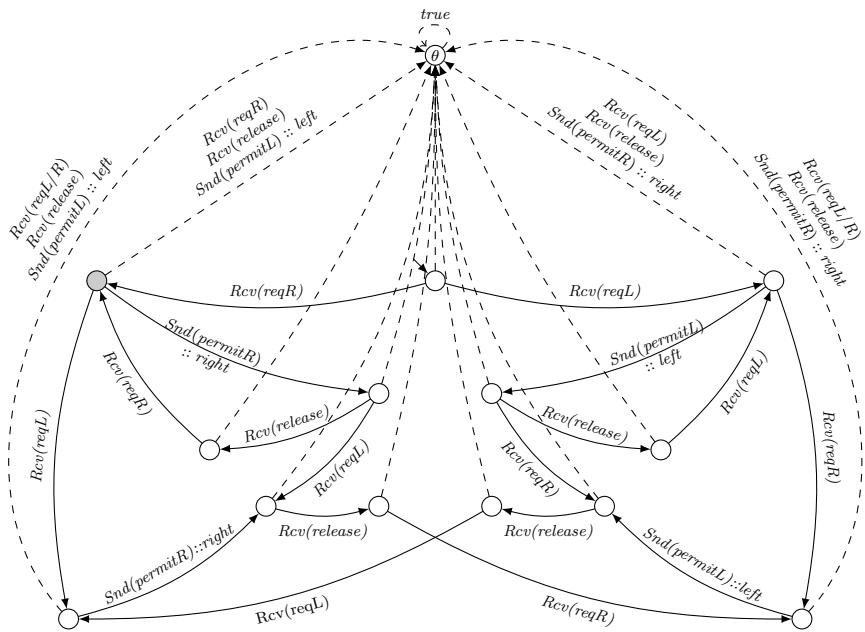
\begin{figure}
    \centering
\begin{tikzpicture}[scale=0.75, transform shape]

\node [style=circle,draw,outer sep=0,inner sep=1,minimum size=10, fill=black!20] (v1)at (-5,4) {};
\node [style=circle,draw,outer sep=0,inner sep=1,minimum size=10] (v2)at (1,4) {};
\node [style=circle,draw,outer sep=0,inner sep=1,minimum size=10] (v3)at (7,4) {};
\node [style=circle,draw,outer sep=0,inner sep=1,minimum size=10] (v4)at (-3,1) {};
\node [style=circle,draw,outer sep=0,inner sep=1,minimum size=10] (v5)at (0,2) {};
\node [style=circle,draw,outer sep=0,inner sep=1,minimum size=10] (v6)at (0,0) {};
\node [style=circle,draw,outer sep=0,inner sep=1,minimum size=10] (v7)at (-2,0) {};
\node [style=circle,draw,outer sep=0,inner sep=1,minimum size=10] (v8) at (-5.5,-2) {};
\node [style=circle,draw,outer sep=0,inner sep=1,minimum size=10] (v9)at (2,0) {};
\node [style=circle,draw,outer sep=0,inner sep=1,minimum size=10] (v10)at (4,0) {};
\node [style=circle,draw,outer sep=0,inner sep=1,minimum size=10] (v11)at (2,2) {};
\node [style=circle,draw,outer sep=0,inner sep=1,minimum size=10] (v12)at (7.5,-2) {};
\node [style=circle,draw,outer sep=0,inner sep=1,minimum size=10] (v13)at (5,1) {};
\node (v0)at (.5,4.5) {};
\node [style=circle,draw,outer sep=0,inner sep=1,minimum size=10] (v14) at (1,8) {$\theta$};

\path[<-,-latex]  (v0) edge (v2);
\path[ ->,-latex]  (v2) edge[bend left=15]   node[left=15,above =0.03 cm] {\it Rcv(reqR)} (v1);
\path[ ->,-latex]  (v2) edge[bend right=15]  node[right=15,above =0.03 cm] {\it Rcv(reqL)} (v3);
\path[ ->,-latex]  (v4) edge [bend left=15,sloped]   node[below =0.03 cm] {\it Rcv(reqR)} (v1);
\path[->,-latex]  (v1) edge [bend right=15,sloped]   node[left=-.7] {$\begin{array}{c}{\it Snd(permitR)}\\{\it ::right}\end{array}$} (v5);
\path[ ->,-latex]  (v5) edge [bend left=15,sloped]   node[left=.15,above] {\it Rcv(release)} (v4);
\path[ ->,-latex]  (v7) edge [bend right=15,sloped]   node[below =0.03 cm] {\it Rcv(release)} (v6);
\path[ ->,-latex]  (v5) edge  [bend left=15]  node[sloped,below]{\it Rcv(reqL)} (v7);
\path[ ->,-latex]  (v6) edge [sloped,bend right=15]   node[below =0.03 cm] {\it Rcv(reqR)} (v12);
\path[ ->,-latex]  (v8) edge [bend right=15,sloped]   node[above =0.03 cm] {\it Snd(permitR)::right} (v7);
\path[ ->,-latex]  (v12) edge   [sloped,bend left=15] node [above =0.03 cm] {\it Snd(permitL)::left} (v10);
\path[ ->,-latex]  (v3) edge [bend left=15,sloped]   node[right=-.7] {$\begin{array}{c}{\it Snd(permitL)}\\{\it ::left}\end{array}$} (v11);
\path[ ->,-latex]  (v1) edge [bend right=15,sloped]   node[above =0.03 cm] {\it Rcv(reqL)} (v8);
\path[ ->,-latex]  (v9) edge  [sloped,bend left=15]node[below =0.03 cm] {Rcv(reqL)} (v8);
\path[ ->,-latex]  (v10) edge [bend left=15] node[below =0.03 cm] {\it Rcv(release)} (v9);
\path[ ->,-latex]  (v11) edge [bend right=15]   node[sloped, below] {\it Rcv(reqR)} (v10);
\path[ ->,-latex]  (v11) edge [sloped,bend right=15]   node[right=.15,above] {\it Rcv(release)} (v13);
\path[ ->,-latex]  (v13) edge [bend right=15,sloped]   node[below =0.03 cm] {\it Rcv(reqL)} (v3);
\path[ ->,-latex]  (v3) edge[bend left=15,sloped]   node[above =0.03 cm] {\it Rcv(reqR)} (v12);

\path[ ->,dashed,-latex]  (v1) edge[sloped]   node[above =0.03 cm] {$\begin{array}{c}{\it Rcv(reqR)}\\{\it Rcv(release)}\\{\it Snd(permitL)::left}\end{array}$} (v14);
\path[ ->,dashed,-latex]  (v2) edge  node[above =0.03 cm] {} (v14);
\path[ ->,dashed,-latex]  (v3) edge [sloped]   node[above =0.03 cm] {$\begin{array}{c}{\it Rcv(reqL)}\\{\it Rcv(release)}\\{\it Snd(permitR)::right}\end{array}$} (v14);
\path[ ->,dashed,-latex]  (v4) edge [sloped,bend right = 7]   node[above =0.03 cm] {} (v14);
\path[ ->,dashed,-latex]  (v5) edge [sloped,bend right = 7]   node[below =0.03 cm] {} (v14);
\path[ ->,dashed,-latex]  (v6) edge [sloped, bend right=7]   node[above =0.03 cm] {} (v14);
\path[ ->,dashed,-latex]  (v7) edge[bend right =18]   node{} (v14);
\path[ ->,dashed,-latex]  (v8) edge [sloped,bend left = 75]   node[above =0.03 cm] {$\begin{array}{c}{\it Rcv(reqL/R)}\\{\it Rcv(release)}\\{\it Snd(permitL)::left}\end{array}$} (v14);
\path[ ->,dashed,-latex]  (v9) edge[sloped,bend left = 7 ]    node[above =0.03 cm] {} (v14);
\path[ ->,dashed,-latex]  (v10) edge[bend left =18]  node {} (v14);
\path[ ->,dashed,-latex]  (v11) edge [sloped,bend left = 7]   node[below =0.03 cm] {} (v14);
\path[ ->,dashed,-latex]  (v12) edge [sloped,bend right =75]   node[above =0.03 cm] {$\begin{array}{c}{\it Rcv(reqL/R)}\\{\it Rcv(release)}\\{\it Snd(permitR)::right}\end{array}$} (v14);
\path[ ->,dashed,-latex]  (v13) edge[bend left =7 ]  node [below =0.03 cm] {} (v14);
\path[ ->,dashed,-latex]  (v14) edge[loop,above ]  node [] {${\it true}$} (v14);
\end{tikzpicture}

    \caption{The generated assumption
    	}\label{fig:step3}
 
\end{figure}

\subsection{Checking the Unspecified Component Against the Generated Assumption}\label{sec:genM}

The following demonstrates the steps for checking a given $M$ against an assumption $T_A$, based on the given safety property $P$, $\Info$, and the open system $O$ where  $O=\{A_{o_1},\ldots,A_{o_\imath}\}$. To this aim, we will introduce an algorithm to derive an LTS characterizing  actor $A_M$ in terms of its interactions with an environment, denoted by $T_M$, and then check this LTS against $T_A$.

\subsubsection{Generating the LTS of the Actor $A_M$}\label{sec:generatingM}

Actor $A_M$ interacts with its environment in two ways: either it receives a message or it fetches one from its mailbox and sends its responses. Thus to specify $T_M$, we consider a \emph{wild environment}
for the actor $A_M$ which nondeterministically and continuously sends messages to $A_M$ and upon receiving a message from $A_M$ discards it. We construct the LTS of $A_M$ by composing $A_M$ with this wild environment and reconfiguring the produced LTS  in the same way we reconfigured our assumption. 

For the given actor $(M,\mathit{vars_M},\mathit{mtds_M},n)$, we derive the LTS of
 $T_M = \rho_{c_2}(A_M \interleave A_{M^c}  \interleave A_{o_1^d} \interleave \ldots \interleave A_{o_\imath^d})$ where:
\begin{itemize}
\item For each $A_{o_i} \in O$, where $1 \leq i \leq \imath$, a dummy actor $A_{o_i^d}$ is specified by $(o_i,\emptyset,\emptyset,0)$. These actors discard every messages that they receive due to their mailbox capacity of zero.
\item $(M^c,\{l\},\{(m_c,b_c)\},1)$ is the complement actor for $M$ with only one method $m_c$, which non-deterministically sends all the possible messages to $A_M$, with the body $b_c$ defined as:
\begin{align*}
\begin{split}
b_c =\langle~ &
l = ? (1,2,...,k);,\\
& \mathsf{if}(l==1)~  \{ M!m_1;\} ~ \mathsf{else}~ \{\}, \\
& \ldots\\
& \mathsf{if}(l==k)~ \{ M!m_k;\}~ \mathsf{else}~ \{\}, \\
& \mathsf{self} !m_c;,l=0;\rangle\\
\end{split}
\end{align*}
where $\mathit{methods(M)}=\{m_1,\ldots,m_k\}$.

\item The main block for this system contains the sequence $\langle M^c!m_c; \rangle$.
\end{itemize}

To minimize the resulting state space
of $T_M$, the state variable $l$ of $M^c$ is reset at the end of the method $m_c$ to restrict state evolutions of $T_M$ to the changes of the state variables and mailbox of $M$. Next, we rename the actions of $A_M \interleave A_{M^c}  \interleave A_{o_1^d} \interleave \ldots \interleave A_{o_\imath^d}$ to depict $M$ as an open entity. The labels of the resulting LTS denote either $M$ or $M^c$ has taken a message and consequently a sequence of send actions has been performed. The take information is removed from the labels. The send actions as the consequence of a message taken by $M^c$ are renamed to depict $M$ receiving such 
messages with the application of the renaming operator $\rho_{c_2}$ where the renaming function
  $c_2: \Act_T \times \Act_s^* \rightarrow \Act_r \cup \Act_s^*$  is defined as:
 ‌‌‌‌ \[
  \begin{cases}
  c_2((t_M,w)) = w &w \in \Act_s^* \\
  c_2((t_{M^c},\langle \Snd(m)::M,\Snd(m_c)::M^c\rangle)) =\Rcv(m) \\
  \end{cases}
  \]

\begin{figure}
    \centering
\begin{lrbox}{\mylistingbox}%
\begin{minipage}{.45\linewidth}%
        \begin{lstlisting}[language=palang]
    actor mutex(3) {

        int taken;
        reqL{
            if(taken==0){
                left!permitL;
                taken=1;
            }else{
	            self!reqL;
            }
        }

        reqR {
            if(taken==0){
                right!permitR;
                taken=1;
            }else{
	            self!reqR;
            }
        }

        release{
	        taken=0;
	    }
        
    }
\end{lstlisting} %
\end{minipage}%
\end{lrbox}%
\subfloat[Previously Unavailable Actor $A_M$\label{fig:actorM}]{\usebox{\mylistingbox}}%
\hfill
\begin{lrbox}{\mylistingbox}%
\begin{minipage}{.45\linewidth}%
        \begin{lstlisting}[escapeinside={(*}{*)}][language=palang]
        actor (*$\mathit{mutex}^c$*)(1) {

            int l;
            
            
            (*$m_c$*) {
                l = ?(1,2,3);
                if(l==1)
	                {mutex!reqL;} 
                 else{}
                if(l==2) 
	                {mutex!reqR;} 
                 else{}
                if(l==3) 
	                {mutex!release;} 
                 else{}
                self!(*$m_c$*);
                l=0;
            }
        }
        \end{lstlisting} %
\end{minipage}%
\end{lrbox}%
\subfloat[Actor $A_{\mathit{mutex}^c}$\label{fig:actorMc}]{\usebox{\mylistingbox}}%
\caption{Actor $A_M$ and its corresponding $A_{\mathit{mutex}^c}$}\label{Fig::Actors}
\end{figure}

\begin{example} 
Figure \ref{fig:actorM} depicts  the previously unavailable actor $A_M$ for our ongoing example. To generate $T_M$,  $\mathit{mutex} \interleave \mathit{left}^d \interleave \mathit{right}^d \interleave \mathit{mutex}^c$  is constructed where $\mathit{left}^d$ and $\mathit{right}^d$ are specified by  $(\mathit{left},\emptyset,\emptyset,0)$ and $(\mathit{right},\emptyset,\emptyset,0)$, respectively, and the actor $\mathit{mutex}^c$ is depicted in Figure \ref{fig:actorMc}. By apply  the renaming operator $\rho_{c_2}$, the action $(t_{\mathit{mutex}^c} , \langle \Snd(\mathit{release})::mutex,\Snd(m_c)::\mathit{mutex}^c \rangle)$ is renamed to $\Rcv(\mathit{release})$ while
$(t_{\mathit{mutex}},\langle\Snd(\mathit{permitL})::\mathit{left}\rangle)$ is renamed to $\langle\Snd(\mathit{permitL})::\mathit{left}\rangle$.
\end{example}

Our approach for establishing the LTS characterization of an actor in terms of its interactions with an environment is consistent with the one proposed for $\pi$-calculus specifications embedding the actor model \cite{MayActor}, where channel names indicate actor names. This approach is based on the concept of \emph{Interfaces} that maintain actor names to/by which a message can be sent/received, called external actors and receptionists, respectively. Intuitively, the receptionists are those actors of the model that are not hidden from the environment while the external actors are part of the environment. The environment is dynamic due to the communication of names over the channels.  Since the communicated names cannot be derived from the specification of the model, Interfaces are exploited. To generate LTSs, a set of semantic rules are provided:
\begin{itemize}
\item IN: any message targeted to one of receptionists can be received asynchronously while being added to the pool of messages. 
\item OUT: only messages targeted to an external actor can be sent. 
\item TAU: communications between actors in the model are considered as internal.
\end{itemize} 
In our case, the environment is static, $A_{o^i_d}$s play the role of the external actors and $A_M$ is the sole receptionist. The non-deterministic behavior of $A_{M^c}$ imitates the semantic rule IN. By using the appropriate renaming functions $\rho_{c_2}$ and $\rho_{c_3}$ (introduced in the following), internal communications are turned into $\tau$ (implementing the rule TAU), and only messages targeted to the external actors are preserved as expressed by OUT.

\subsubsection{Checking the LTS of a Component Against an Assumption}\label{sec:checkM}
Due to the openness of actors, various notions of equivalence exist based on testing equivalence \cite{AghaMST97,MayActor} to compare their behaviors. Both trace and may-testing equivalences \cite{MayTesting} are appropriate for reasoning about safety properties \cite{MayAsynch}. However, a trace-based characterization of may-testing such as \cite{MayAsynch,MayActor} alleviates reasoning, as it does not involve quantification over observing contexts required by may-testing. Such a characterization captures the essence of asynchrony by permitting receive and send actions to be delayed through the following rules, adapted to our setting:\begin{enumerate}
\item Receive actions can be commuted.
\item Two send actions to two different receptionists can be commuted.   
\item A send and a receive action can be commuted.
\end{enumerate} 
Therefore, traces (of the LTSs characterizing actor models) are partially compared, e.g., by matching $\langle \Rcv(m_1),\Snd(m_2)::o_1,\Snd(m_3)::o_2\rangle$ to $\langle \Snd(m_2)::o_1,\Rcv(m_1),\Snd(m_3)::o_2\rangle$ where $o_1\neq o_2$. Such rules are not enforced through the operational semantics of our modeling framework, but are naturally achieved by our approach of generating $T_M$ and the assumption. The first rule is a consequence of the non-deterministic behavior of $A_{M^c}$ and the completion step of our assumption generation algorithm. The second rule is the result of shuffling messages during the construction of $\IntM$, and finally the third is the result of non-deterministic execution of actors in our semantics (i.e., $M$ and $O$ in generating the property LTS and $M$ and $M^c$ in generating the LTS characterization). Therefore, our LTS characterization considers message reordering and hence, trace-equivalence can be used to compare actors. As we explained earlier, we would have got a considerable practical gain if the effect of the asynchronous setting on reordering of messages had been handled semantically through an equivalence relation based on may-testing instead of explicit shuffling of the messages in the specification (in a message body of $\IntM$). Due to the unavailability of a practical tool to partially compare the traces in our experiments to validate our claims, we were confined to exploit trace equivalence.

To check a given actor $A_M$ against the generated assumption, two steps should be followed: first its compliance with $\Info$ is checked and them its LTS characterization is compared to the assumption.

\vspace*{2mm}\noindent\textbf{Step 1}.\ As a precondition to composition of any given actor $A_M$  with the actors of $O$, $A_M$ should comply with $\Info$ to  respect the prerequisites of $O$. If $A_M$ does not comply with $\Info$, we abort and notify the modeler.

The reason for this check lies behind  our primary assumption which led to a significant reduction in the size of the generated assumption. In more detail by using $\Info$ to generate $A_{\IntM}$, we restricted  $A_M$'s behavior to what is mentioned in $\Info$ and therefore our assumption will not generate a sound result for any $A_M$ with a behavior beyond what has been assumed. For example, if a given  $A_M$  sends the messages $\it permitL$ and $\it permitR$ in response to a message $\it reqL$, it has violated the safety property $P$. However, in the generated assumption of Figure \ref{fig:step3}, the corresponding  trace is a valid one leading to the sink state and therefore, this violation will not be captured if the compliance of $A_M$ is not checked first.

To prevent false positive results, we check that $A_M$ respects the prerequisite of $O$ as declared by $\Info$ by computing its responses to each message $m$. However, $M$ can not be considered in isolation as  
the responses of $M$ in processing $m$ may be determined in terms of the behavior of $O$. To clear the case, consider Example \ref{Ex::comply}. Additionally to compute a response, messages sent by $M$ to itself  as a consequence of processing $m$ should be also considered as part of the computation.  In other words, messages sent to $O$ as a consequence of such internal communications 
should be included in the responses of $M$ to $m$. As we consider components with bounded channels, responses are finite.
\begin{example}\label{Ex::comply}
Consider the actor $\it mutex$ in Figure \ref{fig:actorM} that upon processing the message $\it reqL$, it may send $\it permitL$ to $O$ or $\it reqL$ to itself depending on the value of $\it taken$. Assume this actor is composed with a wrongly implemented $O$ such that its $\it right$ actor does not send $\it release$ after receiving $\it permitR$. Then, it does not send $\it permitL$ in response to   $\it reqL$ preceding a $\it reqR$ by recursively resending $\it reqL$ to itself. Therefore, this actor generates either no response or ${\it permitL}::{\it left}$ upon processing $\it reqL$. In contract, if it is composed with $O$ specified by Figure \ref{fig:exampleOpenSystem}, it only generates the response ${\it permitL}::{\it left}$.
\end{example}
Noting to the fact that the responses of $M$ upon processing $m$ differ depending on $O$ with which it has been composed, we define the notion of compliance. To this aim, we use the notation $\zeta\downharpoonleft y$ to extract the messages sent to the actor $A_y$ from the send action sequence $\zeta$, defined by the equations:\[
\begin{array}{l} \epsilon\downharpoonleft y=\epsilon\\ \langle \Snd(m)::y|\zeta\rangle \downharpoonleft y = \langle m\rangle \oplus \zeta\downharpoonleft y\\ \langle \Snd(m)::x|\zeta\rangle \downharpoonleft y = \zeta\downharpoonleft y,~ \mbox{$x\neq y$}\end{array}\] 

\begin{definition}\label{def:complianceNew}
For the given actor $A_M$ and $\Info$, 
we say $A_M$ complies with $\Info$ (denoted by $A_M \sqsubseteq_c \Info$) if for any  set of actions $(t_M,\zeta_1)$, $\ldots$, and $(t_M,\zeta_n)$ caused by first processing $m$ and its consequent internal communications in any trace of the system $A_M\interleave O$, 
$\exists (m,\mathcal{I}) \in \Info$ such that $\forall\, A_y\in O\,((\zeta\downharpoonleft y,y)\in \mathcal{I})$, where $\zeta=\langle\zeta_1\rangle\oplus\ldots\oplus\langle \zeta_n\rangle $.
\end{definition}

To avoid computation of $A_M\interleave O$ and decide the conformance of $M$ in isolation, we propose a sound approach to approximate the responses of a given $A_M$. To this aim, we execute $M$ as a standalone actor to first generate its \emph{message flow graph}, for each method $m$ of $A_M$. This graph, denoted by ${\it MF}(m)$, is achieved from the modulo weak trace equivalence of a special LTS  
derived from the rules of Table \ref{Tab::SOS4compliance}. Intuitively by processing the message $m$, we compute its consequent interactions with $O$. Each state of the mentioned LTS is a pair like $(\sigma,\mathfrak{q})$ where $\sigma$ and $\mathfrak{q}$ denote the sequence of statements to be executed and the queue of messages to be handled as a consequence of processing $m$. The initial state is defined by $(\epsilon,\langle m\rangle)$. When there is no statement to be executed, a message is consumed from the queue by the rule $\it Inv$ to execute its respective body. Rules ${\it NoIntr}_{1-3}$ express that the assignment, non-deterministic assignment, and conditional statements do not make any interactions with $O$. Furthermore, no interaction with $O$ happens upon sending a message to itself, but the message is added to the queue as explained by the rule ${\it Intr}_{\it intern}$. However, an interaction with an actor of $O$ occurs by executing a send statement as explained by the rule $\it Intr$. Therefore, the transitions of the resulting LTS carry the labels of $\Act_s\cup\{\tau\}$. We remark that the effect of variables has been abstracted away as a consequence of rules ${\it NoIntr}_{2,3}$.

\begin{table}[htbp]
\large
       \centering
    \caption{Semantic rules to derive the special LTS for generating ${\it FM}(m)$ }\label{Tab::SOS4compliance}
     \begin{adjustbox}{max width=\textwidth}
    \begin{tabular}{c}
${\it NoIntr}_1:~\FAxiom{(\langle {\it stat}|\sigma \rangle,\mathfrak{q})\xrightharpoonup{\tau}(\sigma,\mathfrak{q})}$, where ${\it stat}\in\{{\it var}:={\it expr},{\it var}:=?(e_1,e_2,..,e_n)\}$\vspace*{2mm}\\
${\it Intr}_{\it intern}:~\FAxiom
{(\langle \mathsf{self}!m'|\sigma \rangle,\mathfrak{q}\oplus\langle m'\rangle)\xrightharpoonup{ \tau}(\sigma,\mathfrak{q})}$~~~~
${\it Intr}:~\FAxiom
{(\langle y!m'|\sigma \rangle,\mathfrak{q})\xrightharpoonup{\Snd(m')::y}(\sigma,\mathfrak{q})}$\vspace*{2mm}\\
${\it NoIntr}_2:~\FAxiom{(\langle \mathsf{if}~(expr)~\{\sigma_1\}~ \mathsf{else}~\{\sigma_2\}
|\sigma\rangle,\mathfrak{q})\xrightharpoonup{ \tau}(\sigma_1\oplus\sigma,\mathfrak{q})}$\vspace*{2mm}\\
${\it NoIntr}_3:~\FAxiom{(\langle \mathsf{if}~(expr)~ \{\sigma_1\}~ \mathsf{else}~\{\sigma_2\}|\sigma\rangle,\mathfrak{q})\xrightharpoonup{ \tau}(\sigma_2\oplus\sigma,\mathfrak{q})}$~~~~~~~${\it Inv}:~\FAxiom
{(\epsilon,\langle m|\mathfrak{q} \rangle)\xrightharpoonup{\tau}(\body(M,m),\mathfrak{q})}$\vspace*{4mm}\\
\end{tabular}
 \end{adjustbox}
\end{table}

\begin{example}Consider the example $A_M$ in Figure \ref{fig:actorM}. The resulting LTS for the method $\it reqL$ of this actor is depicted in Figure \ref{fig:LTSofFG} while the states in a weak-trace equivalence class have been represented with the same color. The resulting message flow graph contains two states: the initial $(\epsilon,\langle {\it reqL}\rangle)$ and the final state $(\epsilon,\epsilon)$ with one transition, labeled by $\mathit{permitL}::\mathit{left}$, connecting them. In the initial state by the rule $\it Inv$, the message $\it reqL$ is taken from the queue while its method body is loaded, and the state $(\mathsf{if}(taken==0)~\{ \mathit{left}!\mathit{permitL};taken=1;\}~ \mathsf{else}~ \{\mathsf{self}!\mathit{reqL;}\}\rangle,\epsilon))$ is generated. Due to abstracting variables, two next states are generated by application of rules ${\it NoIntr}_{2,3}$. In the state $(\langle \mathsf{self}!\mathit{reqL;}\rangle,\epsilon)$, the message $\it reqL$ is inserted to the queue by application of rule ${\it Intr}_{\it intern}$ while returning back to the initial state.
\end{example}

Afterwards, we extract the finite traces of ${\it MF}(m)$ which  are the  sequences of send actions that are possibly sent to $O$'s actors as the result of  processing the message $m$. Such traces are those from the initial state $(\epsilon,\langle m\rangle)$ to the end state $(\epsilon,\epsilon)$ 
computed through a backward analysis by calling the procedure $\it Backward$ given in Figure \ref{Fig::trace}. If the message interactions derivable from such traces are included in $\Info$, then we say $A_M$ compiles with $\Info$. 


The backward analysis starts from the end state $(\epsilon,\epsilon)$ 
in a depth-first fashion over the transitions of ${\it MF}(m)$ while the labels over the transition are concatenated. The formal parameter ${\it TS}$ of $\it Backward$ denotes the graph over which the backward analysis is carried on,  $s$ the current state that has been reached, $\zeta$ the most recent computed trace, 
and ${\it result}$ the set of found traces. By traversing $(s', a ,s)\in \rightharpoonup$ such that $a\neq \tau$, we check that the number of sent messages to each actor of $O$ does not exceed its mailbox capacity, and no overflow occurs. This is done by first updating $\zeta$ to $\langle a \rangle\oplus \zeta$ and then calling ${\it isWellformed}(\zeta)$.  
This validation is in line with the well-formedness property of $\Info$ as defined in Section \ref{sec:OpenSystem}. When an overflow is detected, the algorithm is immediately terminated by returning $({\it true},{\it result})$.   A trace is added to ${\it result}$ when the initial state is reached.  By recursive invocation of $\it Backward$ 
we proceed towards the initial state. The return value of this recursive call is used to update $\it overflow$ and $\it result$ where the local variable $\it overflow$ accumulates the overflow validation results of all invocations. 

\begin{figure}[H]
    \centering
\begin{minipage}{.55\linewidth}%
\begin{algorithmic}
\Function{${\it Backward}$}{${\it TS},s,\zeta,{\it result}$}
\State \textbf{bool} ${\it overflow}\,:={\it false}$
\ForAll{$(s', a ,s)\in \rightharpoonup$}
 \If {$(a\neq \tau)$}
 	 \State $\zeta:=\langle a \rangle \oplus
 \zeta$
  \If  {$\neg {\it isWellformed}(\zeta)$}
  \State \Return $({\it true},{\it result})$
  \EndIf 
 \EndIf
 \If {$(s'==(\epsilon,\langle m\rangle))$}
 \State ${\it result}={\it result} \cup \{ \zeta\}$ 
 \EndIf
 \State $({\it temp}_o,{\it temp}_r) := Backward({\it TS},s',\zeta, {\it result})$
 \State ${\it overflow}\,:={\it overflow}\vee {\it temp}_o~$
\If {$\it overflow$} \textbf{break}
\EndIf
\State ${\it result}\,:= {\it result}\cup {\it temp}_r$
\EndFor
\State \Return $({\it overflow},{\it result})$
\EndFunction
\end{algorithmic} %
\caption{Computing finite traces\label{Fig::trace}}
\end{minipage}%
%
%
\hfill
%
\begin{minipage}{.45\linewidth}%
 \begin{tikzpicture}[scale=.75, transform shape]
\node[draw](v1) at (4.5,4) {$(\epsilon,\langle {\it reqL}\rangle)$};
\node[draw] (v2) at (4.5,2.5) {$\begin{array}{l}\langle \mathsf{if}(taken==0)~\\~~~~\{ \mathit{left}!\mathit{permitL};taken=1;\}\\~ \mathsf{else}~ \{\mathsf{self}!\mathit{reqL;}\}\rangle,\epsilon)\end{array}$};
\node[draw](v3) at (3.3,0.5) {$\begin{array}{l}(\langle \mathit{left}!\mathit{permitL};,\\ taken=1;\rangle,\epsilon)\end{array}$};
\node[draw] (v4)at (6.5,0.5) {$(\langle \mathsf{self}!\mathit{reqL;}\rangle,\epsilon)$};
\node[draw] (v5) at (3.3,-1) {$(\langle taken=1;\rangle,\epsilon)$};  
\node[draw,fill=black!10] (v6)at (3.3,-2.1) {$(\epsilon,\epsilon)$};
\node (v7)at (4,4.7) {};

\path[ ->,-latex]  (v1) edge []   node[left=-0.05cm] {$\tau$} (v2);
\path[ ->,-latex]  (v2) edge []   node[above=0.06cm] {$\tau$} (v4);
\path[ ->,-latex]  (v2) edge []   node[left] {$\tau$} (v3);
\path[ ->,-latex]  (v3) edge [right]   node[] {$\mathit{permitL}::\mathit{left}$} (v5);
\path[ ->,-latex]  (v5) edge []   node[left=-0.05cm] {$\tau$} (v6);
\path[ ->,-latex]  (v4) edge [bend right=90]  node[above=0.3cm] {$\tau$} (v1);
\path[ ->,-latex]  (v7) edge []  node[] {} (v1);
\end{tikzpicture} %
\caption{The generated LTS for the method $\it reqL${\label{fig:LTSofFG}}}
\end{minipage}%
%

\end{figure}

\begin{lemma}\label{Lem:Soundcompliance}
For a given actor $A_M$ specified by $ (M,\mathit{vars},\mathit{mtds},n)$ and the set of $\Info$, $A_M \sqsubseteq_c \Info$ holds  if $({\it overflow},{\it result})= {\it Backward}({\it MF}(m),(\epsilon,\epsilon),\epsilon,\emptyset)$ implies that $\neg {\it overflow}$ and
\[
\forall\,\zeta\in {\it result},\,\exists \, (m,\mathcal{I})\in\Info, \,\forall\, A_y\in O\,((\zeta\downharpoonleft y,y)\in \mathcal{I})
\]
\end{lemma}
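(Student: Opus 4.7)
The plan is to proceed by a simulation argument connecting the abstract model ${\it MF}(m)$ to the concrete semantics of $A_M \interleave O$, then invoke correctness of the backward search. First I would establish that ${\it MF}(m)$ is a sound overapproximation of the behavior of $A_M$ when processing message $m$ together with all self-messages cascaded from it. Concretely, consider any trace of $A_M \interleave O$ and any maximal consecutive sequence of take actions $(t_M,\zeta_1),\ldots,(t_M,\zeta_n)$ of $A_M$ induced by first processing $m$ and its internally generated self-sends (coarse-grained semantics guarantees each body runs atomically). I would show by induction on the combined length of statement sequences and queue contents that the projection onto send actions to $O$ of $\langle \zeta_1\rangle\oplus\cdots\oplus\langle\zeta_n\rangle$ is realized by some path in ${\it MF}(m)$ from $(\epsilon,\langle m\rangle)$ to $(\epsilon,\epsilon)$. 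The rule $\it Inv$ simulates taking a queued message, ${\it Intr}_{\it intern}$ mirrors $\it Snd$ for a self-message (pushing onto the abstract queue while remaining a $\tau$), $\it Intr$ records an actual send to $O$, and ${\it NoIntr}_{1-3}$ abstract assignments, nondeterministic assignments and conditionals — crucially, since both branches of every guard appear as alternatives in the abstract semantics, every concrete valuation of variables is subsumed, making the abstraction sound in one direction.

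Next I would argue that ${\it Backward}({\it MF}(m),(\epsilon,\epsilon),\epsilon,\emptyset)$ faithfully enumerates all finite traces from the initial to the end state, concatenated in forward order: this is a standard depth-first backward traversal, where $\tau$-labels are dropped (via the $a\neq\tau$ guard before prepending to $\zeta$) and the non-$\tau$ labels are accumulated into $\zeta$. The check ${\it isWellformed}(\zeta)$ enforces the mailbox-capacity precondition uniformly along every prefix: whenever for some $A_y\in O$ the count $|\zeta\downharpoonleft y|$ would exceed $cap(y)$, the overflow flag is raised and the search aborts. Thus if the returned ${\it overflow}$ is $\mathit{false}$, every element of ${\it result}$ is a well-formed candidate sequence of interactions with $O$ produced by some abstract run.

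Combining the two steps, assume the hypothesis: $\neg{\it overflow}$ holds and every $\zeta \in {\it result}$ admits some $(m,\mathcal{I})\in\Info$ with $(\zeta\downharpoonleft y,y)\in\mathcal{I}$ for all $A_y\in O$. Pick any run of $A_M\interleave O$ and any group of consecutive take actions $(t_M,\zeta_1),\ldots,(t_M,\zeta_n)$ produced by processing $m$ and its internal consequences. By the simulation step, the projection $\zeta=\langle\zeta_1\rangle\oplus\cdots\oplus\langle\zeta_n\rangle$ restricted to sends into $O$ is realized by an end-to-end path in ${\it MF}(m)$, hence appears in ${\it result}$ (the well-formedness invariant is automatically satisfied by actually-reachable concrete runs since the real mailboxes do not overflow in a successful execution). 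By the hypothesis, this $\zeta$ matches some $(m,\mathcal{I})\in\Info$ componentwise, so by Definition \ref{def:complianceNew} we conclude $A_M\sqsubseteq_c \Info$.

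The main obstacle I expect is rigorously formalizing the simulation in the first step while keeping it one-sided (soundness, not completeness): specifically, I must verify (i) that self-sends in the concrete mailbox and in the abstract queue $\mathfrak{q}$ of ${\it MF}(m)$ evolve in lockstep with the same order, which relies on the FIFO discipline of actor mailboxes and on ${\it Intr}_{\it intern}$ appending at the tail; (ii) that abstracting away variable values via ${\it NoIntr}_{2,3}$ does not remove any feasible concrete run — only adds spurious ones, which is harmless because the lemma's premise quantifies universally over ${\it result}$; and (iii) that the cascade of internal self-messages is always finite in any accepted run (otherwise the set of groups $(t_M,\zeta_i)$ would not terminate) — this is ensured precisely by the mailbox-bounded assumption and the well-formedness check in ${\it Backward}$.
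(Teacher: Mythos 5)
Your proposal is correct and takes essentially the same approach as the paper: the paper's entire proof is the one-sentence observation that, because variable values are abstracted away by ${\it NoIntr}_{2,3}$, all control flows of the method are represented in ${\it result}$, so the real responses of $M$ in $A_M\interleave O$ are a subset of ${\it result}$ and the universally quantified hypothesis over ${\it result}$ transfers to them via Definition \ref{def:complianceNew}. Your simulation argument for ${\it MF}(m)$ and the correctness of the backward traversal are precisely the details the paper leaves implicit behind that sentence.
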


\begin{proof}
Due to the abstraction of variable values, all control flows of a method are considered in $\it result$. Therefore, the real responses of $M$ in the system $A_M\interleave O$ are trivially a subset of ${\it  result}$. 
\end{proof}
If a given $A_M$ was rejected as there exists $\zeta\in {\it result}$ such that  $\exists\, A_y\in O\,((\zeta\downharpoonleft y,y)\not\in \mathcal{I})$, then the validity of $\zeta$ (to avoid false negative result) can be checked by using the techniques of concolic testing \cite{concolic}. Intuitively, $\zeta$ is used to compute a sequence of symbolic constraints corresponding each conditional statement within a method. The conjunction of these constraints is called a path constraint. With the help of symbolic analyzers, we can find the messages that should have been previously sent by $O$ to validate the path constraint. 
Intuitively, our approach works for the class of actors that $O$ 
exerts all the paths of ${\it MF}(m)$. Characterizing such a class and providing an approach to validate a response are among our future work. 

We remark that $\Info$ can be provided as an LTS itself to have a more homogeneous approach to check the compliance. However, this makes the assumption generation process more inefficient as we have to first derive the LTS of $O$, second the composition of $O$ and $\Info$, and finally the property LTS. The declarative approach makes it possible to derive $\IntM$ as an actor and consequently benefit from the special composition for computing the property LTS (without computing the intermediate LTSs).

\vspace*{2mm}\noindent\textbf{Step 2}. We check  if $\mathit{Tr}(\rho_{c_3}(T_M)) \subseteq \mathit{Tr}(T_A)$. If that is the case, then $(A_M \interleave O) \models P$ where
the renaming function
$c_3:\Act_r \cup \Act_s^*\rightarrow \Act_r \cup \Act_s^*$
hides the internal send actions of $M$:

\begin{align*}
&\begin{cases}
c_3(\Rcv(m))=\Rcv(m)\\
c_3(\langle \Snd(m)::y|\zeta\rangle)= \langle \Snd(m)::y\rangle \oplus c_3( \zeta)& \Snd(m)::y\in \alpha T_A\\
c_3(\langle \Snd(m)::y|\zeta\rangle)=c_3(\zeta)& \Snd(m)::y\notin \alpha T_A\\
c_3(\epsilon)=\epsilon\\
\end{cases}
\end{align*}Intuitively, the observable actions of $M$ are defined by the actions of $T_A$, constructed based on $\Info$. If $M$ sends a message, not included in $\alpha T_A$, it should be those that $M$ has sent to itself and the renaming function $c_3$ hides them accordingly. 
By contrast in \cite{giann}, the traces that contain an internal action of $M$, lead to the sink state and hence, are automatically accepted as valid traces although the property may be violated afterwards. However, it is appealing that the system still satisfies the property despite these internal actions. Therefore, we hide any action that is internal to $M$. For example, the action with the label $\langle \Snd(\mathit{reqR})::\mathit{mutex}\rangle$ (as a consequence of $\mathsf{self}!{\it reqR}$ in Figure \ref{fig:actorM}) is renamed to $\tau$ after performing $\rho_{c_3}$.

\subsection{Correctness of Approach}
Our approach generates the assumption  $T_A$
such that it
  characterizes
those components whose composition with the open system $O$ will satisfy the safety property $P$. Formally,
$\mathit{Tr}(\rho_{c_3}(T_M ))\subseteq \mathit{Tr}(T_A)$
  if and only if  $(A_M \interleave O)  \models P$ where  $A_M$ complies with $\Info$ as stated by Theorem \ref{theo:main}. To provide its proof, the following propositions and lemmas are needed.

Definition \ref{defAerr} indicates that the error traces of $T_A$ denoted by $\mathit{errTr}(T_A)$ are obtained from the  traces in $(A_{\IntM} \interleave_P O)$ that lead to the error state $\pi$
on which we have applied the operator $\rho_{c_1}$, to  hide the internal actions of $O$ and $\IntM$, and the operator $\psi$.

\begin{definition}
    \label{defAerr}
    $ \mathit{errTr}(T_A) = \{t \in \alpha T_A^*\mid  t \in \mathit{errTr}(\psi(\rho_{c_1}((A_{\IntM} \interleave_P O))) \}$, where $\alpha T_A^*\subseteq (\Act_s\cup \Act_r)^*$.
\end{definition}

Lemma \ref{lem::errTra} expresses that an error trace in the property LTS $\psi(A_M \interleave_P O)$ is a finite trace of $\psi(A_M \interleave O)$ while $P_{\it err}$ reaches  $\pi$ by traversing the messages involved in the trace. 

\begin{lemma}\label{lem::errTra}
For any given property $P$, $\zeta\in\mathit{errTr}(\psi(A_M \interleave_P O))\Leftrightarrow \zeta\in{\it Tr}_{\it fin}(\psi(A_M \interleave O))\,\wedge\,\widehat{\partial}(q_0,\psi^{-1}(\zeta))=\pi$, where $P_{\mathit{err}}=( Q,\Sigma,\partial,q_0,\{\pi\})$, $\psi^{-1}:\Act_s^*\rightarrow \Act_s^*$  converts any action $\Rcv(m)$ in the sequence $t$ to $\Snd(m)$, and $\widehat{\partial}(q,\zeta)$ computes the state reachable from $q$ in $P_{\it err}$ through the communicated messages in $\zeta$ and is defined by the equations $
\widehat{\partial}(q,\epsilon)=q$ and $
\widehat{\partial}(q,\langle \Snd(m)::y|r\rangle)=\widehat{\partial}(\partial(q,m),r)$.
\end{lemma}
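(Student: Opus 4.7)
The plan is to prove both directions of the biconditional by relating each execution path in the property LTS $\psi(A_M \interleave_P O)$ that reaches $\pi$ to a finite execution in $\psi(A_M \interleave O)$ together with a corresponding computation in $P_{\it err}$. The key structural observation is that Definition~\ref{Def::propertyLTS} builds $A_M \interleave_P O$ as a synchronous product: the first component of every state and every transition obeys exactly the plain AML semantics of Table~\ref{Tab::SOS}, while the second component tracks $P_{\it err}$ by applying $\partial$ to each communicated message. The reconfiguration operator $\psi$ acts only on labels (splitting batched take-send actions into consecutive $\Rcv$/$\Snd$ transitions and relabelling internal takes as $\tau$) and therefore commutes with the projection onto either component.

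For the forward direction, I would take $\zeta \in \mathit{errTr}(\psi(A_M \interleave_P O))$ and lift it to an execution $(s_0,q_0) \Rightarrow_p^* (s_k,q_k)$ in $A_M \interleave_P O$ whose post-$\psi$ label sequence equals $\zeta$ and whose final state either is $\pi$ directly or has $q_k = \pi$. Induction on $k$, splitting on the rules $\it Exe$ and $\it Take$ of Table~\ref{Tab::SOSP}, shows simultaneously that (i) the projection onto the first component is a valid execution in $A_M \interleave O$, and (ii) $q_i = \widehat{\partial}(q_0, \psi^{-1}(\zeta_{\le i}))$ at every step, where $\zeta_{\le i}$ denotes the prefix of $\zeta$ up to the $i$th label. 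The non-trivial case is rule $\it Take$, where a batched label carries a sequence of sends and $P_{\it err}$ may traverse several states before the stored $q_k$ is read off; this is handled by the reflexive-transitive closure $\overto{\zeta}\!\!^*_p$ and the fact that $\widehat{\partial}$ is defined precisely on sequences of send messages. Projection onto the first component yields a finite trace of $\psi(A_M \interleave O)$ equal to $\zeta$, and (ii) at $i = k$ gives $\widehat{\partial}(q_0,\psi^{-1}(\zeta)) = \pi$.

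The backward direction is the mirror image: given $\zeta \in {\it Tr}_{\it fin}(\psi(A_M \interleave O))$ with $\widehat{\partial}(q_0, \psi^{-1}(\zeta)) = \pi$, I would invert $\psi$ to recover an execution in $A_M \interleave O$ (using that each $\Rcv$/$\Snd$ block produced by $\psi$ comes from a unique batched action), and then lift this execution to $A_M \interleave_P O$ by pairing each state $s_i$ with $q_i := \widehat{\partial}(q_0, \psi^{-1}(\zeta_{\le i}))$. A transition $s_i \Rightarrow s_{i+1}$ with batched label $(t_x, \zeta')$ lifts to a property transition via rule $\it Exe$ applied iteratively along $\zeta'$ followed by rule $\it Take$; the side condition in the second disjunct of $\it Take$ is triggered exactly at the index where $\widehat{\partial}$ first attains $\pi$, so the resulting execution reaches $\pi$ in $A_M \interleave_P O$. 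Applying $\psi$ to this lifted path reproduces $\zeta$ (truncated at the step that enters $\pi$, which by hypothesis is the full trace), establishing $\zeta \in \mathit{errTr}(\psi(A_M \interleave_P O))$.

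The main obstacle will be the bookkeeping around rule $\it Take$ and the operator $\psi$. A single $\Rightarrow_p$ step can correspond to many underlying $\rightarrow_p$ microsteps whose send components are then exploded by $\psi$ into consecutive $\Rcv$/$\Snd$ transitions, and $P_{\it err}$ may pass through $\pi$ in the middle of such a block; pinning down precisely when the property-LTS state collapses to $\pi$ versus when it remains $(s',q')$ requires a careful case analysis on the two disjuncts of the $\it Take$ rule. Once this correspondence between batched actions, the closure $\overto{\zeta}\!\!^*_p$, and $\widehat{\partial}$ is made precise, both directions of the equivalence follow by straightforward induction on trace length.
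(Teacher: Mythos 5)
Your proposal follows essentially the same route as the paper's own proof: both directions decompose the trace into the batched take-actions, exploit the correspondence between the $\it Exe$/$\it Take$ rules of Table~\ref{Tab::SOSP} and the $\it Take$ rule of Table~\ref{Tab::SOS} to project to (respectively lift from) an execution of $A_M \interleave O$ while tracking the DFA state via $\widehat{\partial}$, and handle the truncation of the final batched action at the point where $\pi$ is entered so that $\zeta$ lands in ${\it Tr}_{\it fin}$ as a prefix of a completed-body trace. The bookkeeping issue you flag around $\psi$ and the mid-body entry into $\pi$ is exactly the step the paper resolves by observing that $\zeta_i$ is a prefix of the fully executed $\zeta_i'$, so your plan is sound and matches the intended argument.
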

\begin{proof}
"$\Rightarrow$" By $\zeta\in\mathit{errTr}(\psi(A_M \interleave_P O))$, we conclude that $\exists\,\varsigma\in\mathit{errTr}(A_M \interleave_P O)$ such that $\varsigma=\langle (t_{x_1},\zeta_1)\rangle \oplus \ldots\oplus \langle (t_{x_i},\zeta_i)\rangle$, $\zeta= \zeta_1\oplus\ldots\oplus\zeta_i$. Therefore, there exists a finite execution path $(s_0,q_0)\xRightarrow{(t_{x_1},\zeta_1)}_p\ldots \xRightarrow{(t_{x_{i-1}},\zeta_{i-1})}_p (s_{i-1},q_{i-1})\xRightarrow{(t_{x_i},\zeta_i)}_p \pi$ such that $s_0$ is the initial state of $A_M \interleave_P O$. 
By application of the rule $\it Take$ of Table \ref{Tab::SOSP}, we can conclude that $s_{i-1}[x_i\mapsto(v,T,body(x_i,m))],q_{i-1})\xrightarrow{\zeta_i}_p\!\!^* (s',\pi)$ for some message $m$. By recursively applying the rule $\it Exe$, we conclude that $s_{i-1}[x_{i}\mapsto(v,T,body(x,m))]\overto{\zeta_i}^*s'$ and $\widehat{\partial}(q_{i-1},\zeta_i)=\pi$ while the message body of $m$ is not completely executed. Therefore, by the application of the rule $\it Take$ of Table \ref{Tab::SOS}, we conclude that $s_{i-1}\xRightarrow{(t_{x_i},\zeta_i')}s''$, where $\zeta_i$ is a prefix of $\zeta_i'$ and the message body of $m$ is completely executed. With a similar discussion, $\forall\, 0< j< i-1\,(s_{j-1}\xRightarrow{(t_{x_{j}},\zeta_{j})}s_{j}\,\wedge\, \widehat{\partial}(q_{j-1},\zeta_j)=q_j)$. Therefore, $\langle (t_{x_1},\zeta_1)\rangle \oplus \ldots\oplus\langle (t_{x_{i-1}},\zeta_{i-1})\rangle\oplus \langle (t_{x_i},\zeta_i')\rangle\in{\it Tr}_{\it fin}(A_M \interleave O)$, and $\widehat{\partial}(q_{0}, \zeta_1\oplus \ldots \oplus \zeta_i)=\pi$. Consequently $\zeta_1\oplus\ldots\oplus\zeta_{i-1}\oplus\zeta_i'\in {\it Tr}_{\it fin}(\psi(A_M \interleave O))$, 
and  
$\zeta\in {\it Tr}_{\it fin}(\psi(A_M \interleave O))$. 

"$\Leftarrow$" We show that $\zeta\in{\it Tr}_{\it fin}(\psi(A_M \interleave O))\,\wedge\,\widehat{\partial}(q_0,\psi^{-1}(\zeta))=\pi\Rightarrow \zeta\in\mathit{errTr}(\psi(A_M \interleave_P O))$. The assumption $\zeta\in{\it Tr}_{\it fin}(\psi(A_M \interleave O))$ implies that $\zeta'\in{\it Tr}_{\it fin}(\psi(A_M \interleave O))$ where $\zeta$ is a prefix of $\zeta'$ with $\zeta'=\zeta_1\oplus\ldots\oplus\zeta_{i-1}\oplus\zeta_i'$, $s_0\xRightarrow{(t_{x_1},\zeta_1)}\ldots \xRightarrow{(t_{x_{i-1}},\zeta_{i-1})} s_{i-1}\xRightarrow{(t_{x_i},\zeta_i')} s''$, and $s_0$ is the initial state of $\psi(A_M \interleave O)$. By application of the rule $\it Take$ of Table \ref{Tab::SOS}, we conclude that $\forall\, 0< j< i-1\,(s_{j-1}[x_i\mapsto(v,T,body(x_i,m))]\overto{\zeta_{j}}^\ast s_{j}\,\wedge\, \widehat{\partial}(q_{j-1},\zeta_j)=q_j)$. Therefore, by application of the rules $\it Exe$ and $\it Take$ of Table \ref{Tab::SOSP}, it is concluded that $(s_0,q_0)\xRightarrow{(t_{x_1},\zeta_1)}_p\ldots \xRightarrow{(t_{x_{i-1}},\zeta_{i-1})}_p (s_{i-1},q_{i-1})$. Furthermore, $s_{i-1}[x_i\mapsto(v,T,body(x_i,m))]\overto{\zeta_{i}}_p^\ast s'$ while $\widehat{\partial}(q_{j-1},\zeta_i)=\pi$ and $\zeta_i$ is a prefix of $\zeta_i'$. Thus, $(s_{i-1},q_{i-1})\xRightarrow{(t_{x_i},\zeta_i)} \pi$ and consequently $ \zeta\in\mathit{errTr}(\psi(A_M \interleave_P O))$.
\end{proof}

The following proposition is used to generalize the result of Lemma \ref{lem::errTra} when a renaming function is applied on the property LTS. 
\begin{proposition}\label{pro::rename}
For a LTS $T_L$ specified by $(S_L,\alpha T_L,R_L,s_0^L)$, $t\in{\it errTr}(T_L)$ if and only if $c^\ast(t)\in {\it errTr}(\rho_c(T_L))$, where $c:{\it Act}\rightarrow {\it Act}$ is an arbitrary renaming function and $c^\ast$ applies $c$ to the elements of the input action sequence.
\end{proposition}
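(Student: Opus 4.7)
The plan is to reduce the statement to the observation that the renaming operator $\rho_c$ changes only the labels of transitions while preserving the state space, the initial state, and the underlying transition structure. In particular, the error state $\pi$ and its reachability properties are untouched by $\rho_c$, so an execution path of $T_L$ ending in $\pi$ corresponds one-to-one with an execution path of $\rho_c(T_L)$ ending in $\pi$, with each action simply renamed by $c$.

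First, I would unfold the definition of $\rho_c(T_L)=(S_L,\alpha T_{\rho_c(L)},R_{\rho_c(L)},s_0^L)$ recalled in Section~\ref{sec:LTS}: the set of states, the initial state, and (implicitly) the distinguished error state $\pi$ are identical to those of $T_L$, and $(s,b,s')\in R_{\rho_c(L)}$ iff there exists $a$ with $b=c(a)$ and $(s,a,s')\in R_L$. Next I would prove by induction on $n$ the following bijective correspondence between execution paths:
\[
s_0^L\xrightarrow{a_1} s_1\xrightarrow{a_2}\cdots\xrightarrow{a_n} s_n \text{ in } T_L \iff s_0^L\xrightarrow{c(a_1)} s_1\xrightarrow{c(a_2)}\cdots\xrightarrow{c(a_n)} s_n \text{ in } \rho_c(T_L).
\]
The base case $n=0$ is immediate since the initial state is shared, and the inductive step follows directly from the definition of $R_{\rho_c(L)}$.

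From this correspondence, an execution path in $T_L$ visits $\pi$ at step $n$ exactly when the matching path in $\rho_c(T_L)$ visits $\pi$ at step $n$, because the state component $s_n$ is identical in both paths. Recalling that ${\it errTr}$ collects precisely those action sequences generated along paths that reach $\pi$, the trace $t=a_1 a_2\cdots a_n \in {\it errTr}(T_L)$ iff the renamed sequence $c^\ast(t)=c(a_1)\,c(a_2)\cdots c(a_n)\in {\it errTr}(\rho_c(T_L))$, yielding both directions of the biconditional.

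There is no genuine obstacle here; the only mild care needed is to make the step-by-step correspondence explicit (since $c$ need not be injective, two distinct traces of $T_L$ may map to the same renamed trace, but this is harmless: each individual path is preserved, so the membership of $t$, respectively $c^\ast(t)$, in the error-trace set of each LTS is governed by the existence of a path, not by its uniqueness). Thus the proposition follows by a direct induction together with the observation that the error state is preserved under renaming.
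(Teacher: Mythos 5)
Your proof takes essentially the same route as the paper's: both establish a step-by-step correspondence between execution paths of $T_L$ and $\rho_c(T_L)$ that reach the (unchanged) error state $\pi$, with each label renamed by $c$; yours is merely more explicit about the induction and about the fact that states and the initial state are preserved. The one point you flag --- non-injectivity of $c$ --- is dismissed a little too quickly for the converse direction (a path of $\rho_c(T_L)$ labelled $c^\ast(t)$ pulls back only to a path of $T_L$ labelled by some $t'$ with $c^\ast(t')=c^\ast(t)$, not necessarily $t$ itself), but the paper's own proof treats the converse with exactly the same brevity (``its reverse is proved with the same discussion''), so your argument is no weaker than the original.
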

\begin{proof}
By definition $t\in{\it errTr}(T_L)$ implies that there exists the execution path $s_0^L\overto{a_1}s_1\overto{a_2}\ldots\overto{a_n}\pi$ such that $t=\langle a_1, a_2, \ldots a_n\rangle$. By applying $\rho_c$ on $T_L$, this path is changed into $s_0^L\overto{c(a_1)}s_1\overto{c(a_2)}\ldots\overto{c(a_n)}\pi$. Thus, the trace $c^\ast(t)=\langle c(a_1), c(a_2), \ldots c(a_n)\rangle$ belongs to ${\it errTr}(\rho_c(T_L))$. Its reverse is proved with the same discussion.
\end{proof}

Proposition \ref{pro:err} indicates that $T_A$ has no trace with a prefix included in its error traces. Conversely, each trace not belonging to $T_A$ has a prefix included in its error traces.  
\begin{proposition}\label{pro:err}
$t \in {\it errTr(T_A)}$ if and only if     $\forall t' \,(t\oplus t'   \not \in {\it  Tr}(T_A)) $.
\end{proposition}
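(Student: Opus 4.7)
The plan is to exploit the invariants that Step~3 bestows on $T_A$: before pruning, the underlying LTS is deterministic and complete over $\alpha T_A$, with every missing transition redirected to the sink $\theta$, which carries self-loops on every symbol; $T_A$ is then obtained by deleting $\pi$ together with all transitions entering it, while by well-formedness $\pi$ had no outgoing transitions to begin with. Consequently, for every $t \in \alpha T_A^*$ there is a unique attempted path in the pre-pruning LTS, which either avoids $\pi$ altogether (and survives intact in $T_A$) or terminates at $\pi$.

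For the forward direction, I would appeal to Definition~\ref{defAerr} together with Proposition~\ref{pro::rename} to read $\mathit{errTr}(T_A)$ as exactly the set of action sequences whose unique path in the deterministic completion ends at $\pi$. Pruning deletes the last edge of any such path, so neither $t$ nor any extension $t \oplus t'$ corresponds to a valid execution of $T_A$; hence $t \oplus t' \notin \mathit{Tr}(T_A)$ for every $t'$.

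The converse proceeds by contrapositive. If $t \notin \mathit{errTr}(T_A)$, the unique attempted path of $t$ in the completion does not end at $\pi$ and, since $\pi$ has no outgoing edges, never meets $\pi$ at all. It therefore survives in $T_A$ and terminates at some reachable state $s$. If $s$ is a deadlock of $T_A$, then $t$ is already a maximal trace and $t'=\epsilon$ works; otherwise some path from $s$ in $T_A$ eventually reaches $\theta$ (whose self-loops yield an infinite extension) or, by finiteness of $T_A$, enters a cycle or a further deadlock, and in each case we obtain a witness $t'$ with $t \oplus t' \in \mathit{Tr}(T_A)$.

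The main obstacle lies in the opening paragraph's assertion that $\mathit{errTr}(T_A)$ coincides with the $\pi$-reaching paths of the deterministic completion, because Definition~\ref{defAerr} routes that set through the original nondeterministic property LTS. Justifying this requires invoking the standard fact that subset construction, the weak-trace minimization of Step~1, and the backward $\pi$-propagation of Step~2 all preserve the error-trace language; these are known properties of the direct approach of \cite{giann}, but must be cited or verified carefully rather than silently assumed.
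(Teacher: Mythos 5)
Your proposal is correct and follows essentially the same route as the paper: both directions rest on the Step~3 construction of $T_A$ as the determinized, $\theta$-completed LTS with the $\pi$ state and its incoming transitions removed, so that error traces and all their extensions are exactly what pruning eliminates. The only notable difference is in the converse: you prove the literal contrapositive by extending a surviving path to a maximal trace (deadlock, cycle, or the $\theta$ self-loops), whereas the paper instead argues that every $t''\notin \mathit{Tr}(T_A)$ has a prefix in $\mathit{errTr}(T_A)$ --- the form it later uses in Theorem~\ref{theo:main} --- and your closing caveat about the intermediate transformations preserving the error-trace language is a gap the paper's own proof shares rather than one you introduce.
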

\begin{proof}
``$\Rightarrow$'' By Definition \ref{defAerr}, $t \in {\it errTr}(T_A)$ implies that $t \in \mathit{errTr}(\psi(\rho_{c_1}((A_{\IntM} \interleave_P O)))$. Regarding the completion phase of the step three in the construction of an assumption,  $T_A$ is achieved by completing $\psi(\rho_{c_1}((A_{\IntM} \interleave_P O))$ using the subset construction method and then eliminating the $\pi$ state. In other words, $T_A$ is generated by removing traces leading to the state $\pi$ from $\psi(\rho_{c_1}((A_{\IntM} \interleave_P O))$. Hence, definitely $\forall t'\,(t\oplus t' \not \in {\it  Tr(T_A)})$.

``$\Leftarrow$'' We show that if $t''\not\in{\it  Tr}(T_A)$, then $\exists\, t,\,t'(t''=t\oplus t'\,\wedge\, t\in{\it errTr}(T_A))$. Regarding the third step towards the construction of $T_A$,  $t''\not\in{\it  Tr}(T_A)$ implies that $t''$ has a prefix $t$ leading the $\pi$ state in $\psi(\rho_{c_1}((A_{\IntM} \interleave_P O))$. Definition \ref{defAerr} implies that $t \in {\it errTr(T_A)}$. 
\end{proof}

Lemma \ref{lemma:lemma1} indicates that if $A_M$ complies with $\Info$, then the set of error traces of the system containing $A_M$ and $O$
(from which we have hidden the internal actions of $M$  by employing $\rho_{c_3}$), is the subset of the set of error traces of the system composed of $A_{\IntM}$ and $O$, where the operators $\rho_{c_1}$ and $\psi$ are employed on both sets of traces.

\begin{lemma}
    \label{lemma:lemma1}
$
\mathit{errTr}(\rho_{c_3}(\psi(\rho_{c_1}(A_M \interleave_P O))))
     \subseteq
    \mathit{errTr}(\psi(\rho_{c_1}(A_{\IntM} \interleave_P O)))
    $ if $A_M \sqsubseteq_c \Info$.
\end{lemma}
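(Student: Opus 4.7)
The plan is to take an arbitrary $t\in\mathit{errTr}(\rho_{c_3}(\psi(\rho_{c_1}(A_M \interleave_P O))))$ and produce a witnessing error trace in $\mathit{errTr}(\psi(\rho_{c_1}(A_{\IntM} \interleave_P O)))$. First I would unpack the renaming operators. By iterated application of Proposition~\ref{pro::rename}, any error trace surviving $\rho_{c_3}$ (resp.\ $\rho_{c_1}$) is the image of an error trace of the underlying LTS before renaming. Hence there exists $\hat t\in\mathit{errTr}(\psi(A_M\interleave_P O))$ with $c_3^\ast(\rho_{c_1}\text{-image of }\hat t)=t$; equivalently, $\hat t$ agrees with $t$ on all actions that are not internal to $M$ and not among the hidden $O$-to-$O$ communications.

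Next I would apply Lemma~\ref{lem::errTra} to $\hat t$: this gives a finite trace $\hat t\in {\it Tr}_{\it fin}(\psi(A_M\interleave O))$ together with a run of $P_{\it err}$ that reaches $\pi$ when fed $\psi^{-1}(\hat t)$. Unpacking $\psi$, this finite trace of $\psi(A_M\interleave O)$ corresponds to an execution of $A_M\interleave O$ in which a sequence of take-actions $(t_{x_1},\zeta_1),\ldots,(t_{x_k},\zeta_k)$ occurs; among these, the actions taken by $M$ determine exactly $M$'s responses to each message it handles, while the take-actions of components in $O$ are shared with any execution of $A_{\IntM}\interleave O$ that processes the same messages in the same order.

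The heart of the argument is to rebuild the same external behaviour inside $A_{\IntM}\interleave_P O$. Whenever $M$ takes a message $m$ in the execution above, the compliance hypothesis $A_M\sqsubseteq_c\Info$ says that the aggregated send actions produced by processing $m$ (together with its self-generated internal handlings, collected via $\downharpoonleft y$ over all $A_y\in O$) match some $(m,\mathcal{I})\in\Info$. By Definition~\ref{def:IntM} and the construction based on ${\it shuffle}(\mathcal{I})$, the actor $A_{\IntM}$ possesses a non-deterministic branch whose sequence of send actions to the components of $O$ coincides, up to an order that ${\it shuffle}$ enumerates, with the very sequence $M$ produced. I would then select, at each take-step of $\IntM$, the branch that mimics $M$'s corresponding step; internal sends of $M$ are replaced by $\IntM$'s self-sends that are swallowed by $\rho_{c_1}$ exactly as $\rho_{c_3}$ was swallowing $M$'s. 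This yields an execution of $A_{\IntM}\interleave O$ whose image under $\rho_{c_1}\circ\psi$ equals $t$, and which drives $P_{\it err}$ through the same sequence of message symbols, hence to $\pi$. Applying Lemma~\ref{lem::errTra} in the reverse direction promotes this execution to an error trace of $\psi(A_{\IntM}\interleave_P O)$, and Proposition~\ref{pro::rename} pushes it through $\rho_{c_1}$ to deliver $t\in\mathit{errTr}(\psi(\rho_{c_1}(A_{\IntM}\interleave_P O)))$.

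The main obstacle I expect is the bookkeeping for the two forms of hiding: $\rho_{c_3}$ hides the sends that $M$ issues to itself, whereas $\rho_{c_1}$ hides all sends internal to $\{\IntM\}$ and to $O$. I need to argue that these two hidings induce the same observable projection on the common alphabet, so that the two sides of the inclusion really are comparing traces over the same label set. This boils down to two facts that I would establish as small sub-lemmas: (i) a send of $M$ to itself produces no action in $\alpha T_A$ and hence is hidden by $\rho_{c_3}$ precisely when the analogous self-send of $\IntM$ is hidden by $\rho_{c_1}$; and (ii) sends between components of $O$ are common to both executions and are hidden identically on both sides. With these in hand, the correspondence between the chosen $\IntM$-branch and $M$'s behaviour is exact on observable labels, and the inclusion of error traces follows.
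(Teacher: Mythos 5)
Your proposal is correct and is essentially the paper's own argument run in the forward direction rather than by contradiction: both rest on Lemma~\ref{lem::errTra} and Proposition~\ref{pro::rename} to pass between error traces of the property LTSs and finite traces of the plain compositions, and both hinge on the observation that compliance with $\Info$ together with the shuffle-based construction of $A_{\IntM}$ guarantees that every observable response of $M$ is reproducible by some non-deterministic branch of $\IntM$. Your explicit treatment of the interaction between the two hidings ($\rho_{c_3}$ on $M$'s self-sends versus $\rho_{c_1}$ on $\IntM$'s and $O$'s internal communications) is a point the paper's proof leaves implicit, but it does not change the substance of the argument.
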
    
\begin{proof}
Assume that
$
\mathit{errTr}(\rho_{c_3}(\psi(\rho_{c_1}(A_M \interleave_P O))))
    \not \subseteq
\mathit{errTr}(\psi(\rho_{c_1}(A_{\IntM} \interleave_P O)))$ and $P_{\mathit{err}}=( Q,\Sigma,\partial,q_0,\{\pi\})$. Thus, $\exists\, t \in \alpha T_A^*\,( t\in\mathit{errTr}(\rho_{c_3}(\psi(\rho_{c_1}(A_M \interleave_P O))))\, \wedge \, t \not\in         \mathit{errTr}(\psi(\rho_{c_1}(A_{\IntM} \interleave_P O))))$. By Lemma \ref{lem::errTra} and Proposition \ref{pro::rename}, 
$ t \in \mathit{errTr}(\rho_{c_3}(\psi(\rho_{c_1}(A_M \interleave_P O))))\Rightarrow  t \in \mathit{Tr}_{\it fin}(\rho_{c_3}(\psi(A_M \interleave O))) \,\wedge \widehat{\partial}(q_0,t)=\pi$. Hence by application of Lemma \ref{lem::errTra} and Proposition \ref{pro::rename}, $t \not\in         \mathit{errTr}(\psi(\rho_{c_1}(A_{\IntM} \interleave_P O)))$ and $\widehat{\partial}(q_0,t)=\pi$ implies that $t \not\in \mathit{Tr}_{\it fin}(\psi(A_{\IntM} \interleave O))$. As $O$ is common in $\rho_{c_3}(\psi(A_M \interleave O))$ and $\psi(A_{\IntM} \interleave O)$ and $M$ (similarly $\IntM$) act as a service, we can conclude that $M$ upon serving a message $m$ from $O$, sends responses different from $\IntM$. Assume $\zeta$ as a subsequence of $t$, be such a responses. By construction of $\IntM$ which sends all possible shuffling of messages for each  $(m,\mathcal{I})\in\Info$, we conclude that $\nexists (m,\mathcal{I})\in\Info,\,\forall\, A_y\in O\,((\zeta\downharpoonleft y)\in \mathcal{I})$ and consequently by Definition \ref{def:complianceNew}, $A_M \not\sqsubseteq_c \Info$, contradicting the assumption.
    \end{proof}

\begin{lemma}
\label{lemma:lemma2}
For a given $A_M$, $
\mathit{Tr}(\psi(\rho_{c_1}(A_M \interleave O)))\subseteq \mathit{Tr}(\rho_{c_2} (A_M \interleave A_{M^c} \interleave O^d ))$, where $O=\{A_{o_1},\ldots,A_{o_\imath}\}$ and 
$O^d = A_{o_1^d} \interleave \ldots \interleave  A_{o_\imath^d}$. 
\end{lemma}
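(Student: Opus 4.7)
I would prove the trace inclusion by showing that every execution path of $\psi(\rho_{c_1}(A_M \interleave O))$ can be matched, action-by-action, by an execution path of $\rho_{c_2}(A_M \interleave A_{M^c} \interleave O^d)$. The intuition is that $A_{M^c}$ is designed as a \emph{wild environment}: it can nondeterministically send any message in $\mathit{methods}(M)$ to $A_M$, which is exactly the set of messages $A_M$ could possibly receive from components of $O$. Symmetrically, the dummies in $O^d$ absorb any message $A_M$ sends out. Hence the pair $(A_{M^c},O^d)$ covers—at the interface of $A_M$—all behaviors that the real open system $O$ could realize. The formal composition $A_M \interleave O$ may generate strictly fewer externally observable interactions (since $O$ constrains which messages can arrive and when), but never more; this is precisely what the inclusion captures.

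\noindent\textbf{Step-by-step construction.} Pick a trace $t \in \mathit{Tr}(\psi(\rho_{c_1}(A_M \interleave O)))$ with its witnessing execution path $\eta$ in $A_M \interleave O$. Each transition in $\eta$ is a pair $(t_x,\zeta)$ by the rule $\mathit{Take}$ of Table \ref{Tab::SOS}. I would split cases on $x$: (i) if $x = M$, the action $(t_M,\zeta)$ contributes $\Snd(m)::y$ labels to $t$ (via $\psi$, after $\rho_{c_1}$ has hidden $M$'s self-messages); on the right-hand side, $A_M$ performs the identical take, writing to the mailbox of the dummy $A_{y^d}$ instead of $A_y$, and $c_2$ strips the $t_M$ wrapper. (ii) If $x \in O$ and $\zeta$ contains sends to $M$, $\rho_{c_1}$ removes the sends among $O$'s components and leaves only a sequence $\langle\Snd(m_1)::M,\ldots,\Snd(m_n)::M\rangle$, which $\psi$ breaks into consecutive $\Rcv(m_i)$ labels; on the right I simulate each such $\Rcv(m_i)$ by one firing of $A_{M^c}$'s method $m_c$ that nondeterministically chooses to send $m_i$ to $M$ (and re-enqueues $m_c$ to itself), which $c_2$ renames to exactly $\Rcv(m_i)$. (iii) If $x \in O$ and $\zeta$ contains no send to $M$, then the action becomes $\tau$ after $\rho_{c_1}$; I match this by a corresponding silent step (or sequence of steps) on the right, using the fact that the dummies, $A_M$, and $A_{M^c}$ may perform hidden internal progress independently.

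\noindent\textbf{Invariant and induction.} I would formalize the matching by induction on the length of the prefix of $t$ already simulated. The invariant to maintain is that after simulating a prefix, the local state of $A_M$ (its variable valuation and mailbox contents) coincides on both sides, since the effect of $A_M$'s methods depends only on its own local state and the order in which messages appear in its mailbox. The freedom granted by the nondeterminism of $A_{M^c}$ ensures I can always produce the next $\Rcv$ needed on the right; the capacity-0 dummies play the symmetric role of always accepting $M$'s outputs as far as the externally visible label is concerned, because $c_2$ only records the send portion of $M$'s transition. A routine check confirms that the resulting right-hand trace, after applying $c_2$, is exactly $t$, so $t \in \mathit{Tr}(\rho_{c_2}(A_M \interleave A_{M^c} \interleave O^d))$.

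\noindent\textbf{Main obstacle.} The delicate point is the mismatch in granularity between $\psi$ (which fragments a single $(t_y,\zeta)$ into several $\Rcv$ labels) and $\rho_{c_2}$ (which relabels whole take-actions atomically). I expect the technical core of the proof to be a careful bookkeeping lemma showing that a single $O$-take that delivers $n$ messages to $M$ on the left is simulated by $n$ independent firings of $A_{M^c}$'s $m_c$ method on the right, each producing one $\Rcv(m_i)$, and that the resulting $\tau$-interleavings from $A_{M^c}$'s self-re-enqueuing do not break the trace alignment. A secondary issue is ensuring that message sends from $A_M$ to the dummies never block; this relies on a benign reading of the zero-capacity dummies (or equivalently, treating them as message sinks at the semantic level), as is consistent with their described purpose in Section~\ref{sec:generatingM}.
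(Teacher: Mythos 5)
Your proposal is correct and rests on the same underlying observation as the paper's proof: $A_{M^c}$ can nondeterministically emit every message in $\mathit{methods}(M)$ at any time, so every receive that $O$ can cause at $M$'s interface is available on the right, while the zero-capacity dummies absorb every send of $M$; since $A_M$ itself is common to both compositions, nothing on the left can escape the right. The difference is in organization. The paper argues by contradiction in a few lines: if some trace of $\psi(\rho_{c_1}(A_M \interleave O))$ were missing from $\mathit{Tr}(\rho_{c_2}(A_M \interleave A_{M^c} \interleave O^d))$, the discrepancy could only be a receive that $A_{M^c}$ cannot produce, which contradicts its construction. You instead build an explicit forward simulation with an induction on trace prefixes and an invariant on $A_M$'s local state and mailbox. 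Your route buys considerably more rigor: in particular, you isolate and handle the granularity mismatch between $\psi$ (which fragments one $O$-take delivering $n$ messages into $n$ consecutive $\Rcv$ labels) and $\rho_{c_2}$ (which relabels one $M^c$-take into one $\Rcv$), showing that $n$ successive firings of $m_c$ reproduce the fragmented labels in order --- a point the paper's contradiction argument silently assumes. One small caveat in your case (iii): when an $O$-actor's take produces no send to $M$, the resulting $\tau$ on the left has \emph{no} counterpart transition on the right (the dummies have no methods and nothing forces $A_M$ or $A_{M^c}$ to move), so you cannot literally match it ``by a corresponding silent step''; the step must instead be discharged by reading $\mathit{Tr}$ as observable (weak) traces, which is consistent with the paper's use of weak-trace equivalence in Step~1 but should be stated explicitly.
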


\begin{proof}
Assume that $\mathit{Tr}(\psi(\rho_{c_1}(A_M \interleave O)))\not\subseteq \mathit{Tr}(\rho_{c_2} (A_M \interleave A_{M^c} \interleave O^d ) )$. Consequently $\exists\, t \in \alpha T_A^*\,(t \in \mathit{Tr}(\psi(\rho_{c_1}(A_M \interleave O)))\,\wedge\, t \not\in\mathit{Tr}(\rho_{c_2} (A_M \interleave A_{M^c} \interleave O^d ) ))$. As $M$ act as a service, the trace $t$ starts with a receive action by $A_M$ and may continue with the responses of $A_M$ or other receive actions. As $A_M$ is common in both $\psi(\rho_{c_1}(A_M \interleave O))$ and $\rho_{c_2} (A_M \interleave A_{M^c} \interleave O^d )$ and $O^d$ is passive by construction, we can conclude that $A_M$ receives a message from $O$ which can not be received from $A_{M^c}$.  
By the construction of $A_{M^c}$ which non-deterministically sends all messages to $A_M$, 
this is a contradiction.


\end{proof}


Theorem \ref{theo:main}  claims that $T_A$ is the weakest
assumption about the component $M$ that ensures the safety property $P$.

\begin{theorem}
\label{theo:main}
For any given $A_M$ such that $A_M \sqsubseteq_c \Info$, $\mathit{Tr}(\rho_{c_3}(T_M)) \subseteq \mathit{Tr}(T_A) $ if and only if $ (A_M \interleave O) \models P$, where $T_A$ is the generated assumption, $P_{\mathit{err}}=( Q,\Sigma,\partial,q_0,\{\pi\})$, $T_M=\rho_{c_2} (A_M \interleave A_{M^c} \interleave  O^d)$, $O=\{A_{o_1},\ldots,A_{o_\imath}\}$, and 
$O^d = A_{o_1^d} \interleave \ldots \interleave  A_{o_\imath^d}$.
\end{theorem}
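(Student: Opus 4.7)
The plan is to prove the biconditional by tackling each implication separately via contradiction, chaining together the lemmas already established. The argument rests on three pillars: Lemma~\ref{lemma:lemma1}, which (using $A_M \sqsubseteq_c \Info$) embeds error traces of the concrete composition into those of the $\IntM$-composition; Lemma~\ref{lemma:lemma2}, which bounds traces of $A_M \interleave O$ by those of $T_M$; and Proposition~\ref{pro:err}, which converts membership in $\mathit{errTr}(T_A)$ into non-membership of every extension in $\mathit{Tr}(T_A)$.

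For the forward direction, I would assume $\mathit{Tr}(\rho_{c_3}(T_M)) \subseteq \mathit{Tr}(T_A)$ and suppose for contradiction that $(A_M \interleave O) \not\models P$. Definition~\ref{def::satisfaction} then supplies $\varsigma \in \mathit{errTr}(A_M \interleave_P O)$. Successively pushing $\varsigma$ through $\rho_{c_1}$, $\psi$, and $\rho_{c_3}$ (Proposition~\ref{pro::rename} tracks error traces under each renaming), then invoking Lemma~\ref{lemma:lemma1} and Definition~\ref{defAerr}, yields a trace $t \in \mathit{errTr}(T_A)$; Proposition~\ref{pro:err} gives $t \notin \mathit{Tr}(T_A)$. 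Simultaneously, Lemma~\ref{lem::errTra} identifies $\varsigma$ with a finite trace of $A_M \interleave O$, which by Lemma~\ref{lemma:lemma2} and the renaming $\rho_{c_3}$ produces the same $t$ inside $\mathit{Tr}(\rho_{c_3}(T_M))$. The assumed inclusion now forces $t \in \mathit{Tr}(T_A)$, a contradiction.

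For the converse, I would assume $(A_M \interleave O) \models P$ yet exhibit some $t \in \mathit{Tr}(\rho_{c_3}(T_M)) \setminus \mathit{Tr}(T_A)$. Proposition~\ref{pro:err} produces a prefix $t_0$ of $t$ with $t_0 \in \mathit{errTr}(T_A)$; Definition~\ref{defAerr} lifts this to an execution of $A_{\IntM} \interleave_P O$ reaching $\pi$. Along $t_0$ the interface receives messages from $O$ and emits responses back; because $t_0$ is also a trace of $\rho_{c_3}(T_M)$, the actor $A_M$ can produce exactly those responses against exactly those receives. Replaying the same schedule with $A_M$ in place of $A_{\IntM}$ keeps $O$'s reactions intact, so the same prefix is realised in $A_M \interleave_P O$ and still drives $P_{\it err}$ to $\pi$, contradicting $(A_M \interleave O) \models P$ by Definition~\ref{def::satisfaction}.

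The principal technical hurdle will be the replay step in the converse direction: one must argue that $t_0$, originally generated in $T_M$ by the wild environment $A_{M^c}$ emitting arbitrary messages, is actually schedulable against a real $O$. This reduces to showing that each receive along $t_0$ is a message $O$ is willing to send at that point, a fact guaranteed by the construction of $T_A$ from $A_{\IntM} \interleave_P O$ together with $A_M \sqsubseteq_c \Info$, which forces $A_M$'s responses to exercise $O$ in the same way that $A_{\IntM}$ did on the critical prefix. Making this correspondence precise, and carefully aligning the three renamings $c_1$, $c_2$, $c_3$ so that traces on both sides are directly comparable, is the technical crux.
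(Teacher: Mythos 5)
Your forward direction follows the paper's proof essentially verbatim: contradiction via Definition~\ref{def::satisfaction}, pushing the error trace through $\rho_{c_1}$, $\psi$, $\rho_{c_3}$ with Proposition~\ref{pro::rename}, then Lemma~\ref{lemma:lemma1}, Definition~\ref{defAerr}, Lemma~\ref{lem::errTra}, Lemma~\ref{lemma:lemma2} and Proposition~\ref{pro:err} to collide the two memberships. That half is fine (modulo the small point that Proposition~\ref{pro:err} is what lets you pass from the finite error prefix to a maximal trace of $\rho_{c_3}(T_M)$ that is excluded from $\mathit{Tr}(T_A)$ — the paper makes this explicit with a separate trace $t_3$ extending $t_2$).

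The gap is in the converse. Your ``replay'' step asserts that the bad prefix $t_0\in\mathit{errTr}(T_A)$, being simultaneously a trace of $\rho_{c_3}(T_M)$, is schedulable in the real composition $A_M\interleave O$, and you justify this by saying that $A_M\sqsubseteq_c\Info$ ``forces $A_M$'s responses to exercise $O$ in the same way that $A_{\IntM}$ did.'' That is not what compliance gives you: Definition~\ref{def:complianceNew} only guarantees that $A_M$'s responses are \emph{among} those declared in $\Info$, whereas $A_{\IntM}$ is a deliberate overapproximation that nondeterministically realises \emph{all} of them. An error trace of $A_{\IntM}\interleave_P O$ may therefore rely on a response (or an interleaving produced by the wild environment $A_{M^c}$) that the concrete $A_M$ composed with the concrete $O$ never exhibits, and your replay silently fails on exactly those prefixes. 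The paper does not attempt the replay at all; it splits on whether $t_2\in\mathit{Tr}_{\it fin}(\rho_{c_3}(\psi(\rho_{c_1}(A_M\interleave O))))$. In the realizable case it derives $(A_M\interleave O)\not\models P$ via Lemma~\ref{lem::errTra}, as you do; in the non-realizable case it argues instead that $t_2$ is then not a finite trace of $\psi(\rho_{c_1}(A_{\IntM}\interleave_P O))$ either, so the completion phase of Step~3 routes $t_2$ to the sink state $\theta$ and hence $t_1\in\mathit{Tr}(T_A)$ — contradicting $t_1\notin\mathit{Tr}(T_A)$ directly, with no replay needed. To repair your proof you must either add this second case or actually prove that the non-realizable case cannot occur, which is a strictly stronger claim than the paper establishes and would itself need the composition/scheduling argument you defer.
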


\begin{proof} ``$\Rightarrow$'' We show that for any given $A_M$ such that $A_M \sqsubseteq_c \Info$,  $\mathit{Tr}(\rho_{c_3}(T_M)) \subseteq \mathit{Tr}(T_A)$ implies $ (A_M \interleave O) \models P$.

The proof is managed by contradiction. Assume that $(A_M \interleave O)  \not\models P$ which by Definition \ref{def::satisfaction} and Proposition \ref{pro::rename} implies that $\exists\,t_1\in \mathit{errTr}(A_M \interleave_P O)\,\wedge \, t_2 \in \mathit{errTr}(\rho_{c_3}(\psi(\rho_{c_1}(A_M \interleave_P O))))$ (result $\ast$), where $t_2$ is achieved from $t_1$ by hiding the internal actions of $O$ and $A_M$ (as the effect of $\rho_{c_1}$ and $\rho_{c_3}$, respectively), and removing $t_x$ from the actions $(t_x,\varsigma)$ while send actions to $M$ are converted into receive actions (as the effect of $\psi$).   This together with Lemma \ref{lemma:lemma1} imply that $t_2 \in  \mathit{errTr}(\psi(\rho_{c_1}(A_{\IntM} \interleave_P O)))$ and so by Definition \ref{defAerr}, we have $t_2 \in \mathit{errTr}(T_A)$ (result $\dag$).

From the result $(\ast)$, Lemma \ref{lem::errTra}, and Proposition \ref{pro::rename}, we conclude that $\exists\, t_3\in  \mathit{Tr}(\rho_{c_3}(\psi(\rho_{c_1}(A_M \interleave O))))$ such that $t_2$ is a prefix of $t_3$. By application of  Lemma \ref{lemma:lemma2}, 
we  conclude that $t_3\in \mathit{Tr}(\rho_{c_3}(\rho_{c_2} (A_M \interleave A_{M^c} \interleave O^d ) ))$. By the construction of $T_M$, we conclude that $ t_3  \in \mathit{Tr}(\rho_{c_3}(T_M ) )$ (result $\ddag$).


The result $(\dag)$ together with Proposition \ref{pro:err}, we conclude that $t_3 \notin  \mathit{Tr}(T_A)$ while $t_3\in   \mathit{Tr}(\rho_{c_3}(T_M ))$ meaning that $ \mathit{Tr}(\rho_{c_3}(T_M ))\not\subseteq \mathit{Tr}(T_A)$  which is a contradiction.


``$\Leftarrow$'' We show that for any given $A_M$ such that $A_M \sqsubseteq_c \Info$, $(A_M \interleave O) \models P$ implies $\mathit{Tr}(\rho_{c_3}(T_M)) \subseteq \mathit{Tr}(T_A)$. The proof is managed by contradiction. Assume that $\mathit{Tr}(\rho_{c_3}(T_M)) \not\subseteq \mathit{Tr}(T_A)$ which implies that $\exists \, t_1 \in\mathit{Tr}( \rho_{c_3}(T_M))\, \wedge \, t_1 \not \in \mathit{Tr}(T_A)$.

Proposition \ref{pro:err} together with the assumption  $t_1 \not \in \mathit{Tr}(T_A)$ imply that $\exists \, t_2\in {\it errTr}(T_A)$ where $t_2$ is a prefix of $t_1$, and hence by Proposition \ref{pro::rename} and Definition \ref{defAerr}, $t_2' \in \mathit{errTr}(\psi(A_{\IntM} \interleave_P O)$ and $\widehat{\partial}(q_0,\psi^{-1}(t_2'))=\pi$ (result $\ast$), where $t_2$ is achieved from $t_2'$ by hiding the internal actions of $O$ (as the effect of $\rho_{c_1}$). 

Since $t_1 \in \mathit{Tr}(\rho_{c_3}(T_M))$, by the construction of $T_M$ we conclude that $t_1 \in \mathit{Tr}(\rho_{c_3}(\rho_{c_2} (A_M \interleave A_{M^c} \interleave O^d)))$ and hence, $t_2 \in \mathit{Tr}_{\it fin}(\rho_{c_3}(\rho_{c_2} (A_M \interleave A_{M^c} \interleave O^d)))$ (result $\dag$) and $t_2=t_2'$. By the fact  ${\it Tr}(\mathcal{T}_1)\subseteq {\it Tr}(\mathcal{T}_2)\Rightarrow{\it Tr}_{\it fin}(\mathcal{T}_1)\subseteq {\it Tr}_{\it fin}(\mathcal{T}_2)$ \cite{baier2008principles}, where $\mathcal{T}_1$ and $\mathcal{T}_2$ are LTSs, and  application of Lemma \ref{lemma:lemma2}, we conclude that $
\mathit{Tr}_{\it fin}(\rho_{c_3}(\psi(\rho_{c_1}(A_M \interleave O))))\subseteq \mathit{Tr}_{\it fin}(\rho_{c_3}(\rho_{c_2} (A_M \interleave A_{M^c} \interleave O^d )))$. Thus, two cases can be distinguished:\begin{itemize}
\item $t_2 \in\mathit{Tr}_{\it fin}(\rho_{c_3}(\psi(\rho_{c_1}(A_M \interleave O)))$: Lemma \ref{lem::errTra} and Proposition \ref{pro::rename} together with the result ($\ast$) imply that $t_2\in\mathit{errTr}(\rho_{c_3}(\psi(\rho_{c_1}(A_M \interleave_P O))))$ and hence by Proposition \ref{pro::rename}, $t_3\in\mathit{errTr}(A_M \interleave_P O)$, where $t_2$ is achieved from $t_3$ by hiding the internal actions of $O$ and $A_M$ (as the effect of $\rho_{c_1}$ and $\rho_{c_3}$, respectively), and removing $t_x$ from the actions $(t_x,\varsigma)$ while send actions to $M$ are converted into receive actions (as the effect of $\psi$). 
Concluding by Definition \ref{def::satisfaction} that $(A_M \interleave O)\not\models P$, which is a contradiction;

\item $t_2 \not\in \it{Tr}_{\it fin}(\rho_{c_3}(\psi(\rho_{c_1} (A_M \interleave O))))$: according to the result($\dag$), as $A_M$ is common in both $\rho_{c_3}(\rho_{c_2} (A_M \interleave A_{M^c} \interleave O^d))$ and $\rho_{c_3}(\psi(\rho_{c_1} (A_M \interleave O)))$ and acts an a service, we conclude that $A_{M^c}$ is producing a trace that $O$ does not. Thus, we can safely replace $A_M$ by $A_\IntM$ in $\rho_{c_3}(\rho_{c_2} (A_M \interleave A_{M^c} \interleave O^d))$ and claim that $t_2 \not\in \it{Tr}_{\it fin}(\psi(\rho_{c_1} (A_\IntM \interleave O)))$ while $\rho_{c_3}$ can be safely remove as $A_\IntM$ does not include the internal actions of $A_M$. This results that $t_2 \not\in \it{Tr}_{\it fin}(\psi(\rho_{c_1} (A_\IntM \interleave_P O)))$. By the third step of our algorithm towards the construction of $T_A$ which completes $\psi(\rho_{c_1} (A_\IntM \interleave_P O))$, we can conclude that $t_2\in\it{Tr}_{\it fin}(T_A)$, and consequently, $t_1 \in \mathit{Tr}(T_A)$ which is a contradiction.
\end{itemize}

\end{proof}

\section{Case Studies}\label{sec:CaseStudy}
To demonstrate the applicability of our framework in real world case studies, we apply it to different application domains: we first apply it to a small system in the robotic domain 
and then
study the more complex system of Electronic Funds Transfer. 

\subsection{Quadricopter System}
We apply our approach to
verify the distributed controller of a robotic system.
This system was developed in Robotic Operating System (ROS) which is a framework for development of distributed robotic systems\footnote{\url{http://www.ros.org}} and has been applied to develop many complex and industrial robotic systems, e.g. Husky robot\footnote{\url{wiki.ros.org/Robots/Husky}} and Turtlebot\footnote{\url{wiki.ros.org/Robots/Turtlebot}}.
ROS provides an infrastructure to compose different modules together for building a system. These modules are reusable and may be replaced or altered, in time.

When using ROS, different components of the systems  communicate by asynchronous message passing and therefore
ROS's computation model is similar to the actor model. Each processing unit of a ROS system can be modeled as an actor and  communications in ROS code can be transformed into message passings among actors, as practiced in   
\cite{zakeriyan2015jacco}. Since the components of a ROS system are reusable, it is appealing to find an assumption (for a specific component) of the system which specifies a
\textit{well-defined} component of the system. \textit{Well-defined} in the sense that a \textit{correctly specified} component composed with the rest of system will satisfy an arbitrary property.

\begin{figure}
    \begin {tikzpicture}[scale=.75, transform shape]

\node [text width=1.9cm, style=circle,draw] (v1) at (-2,2.5) {$\it~~ controller$};
\node  [text width=1.9cm,style=circle,draw] (v2)at (1,2.5) {$\it ~~~observer$};
\node  [text width=1.9cm,style=circle,draw] (v3)at (-3.5,0) {$\it~ transmitter$};
\node  [text width=1.9cm,style=circle,draw] (v4)at (2.5,0) {$\it~~~feedback$};
\node  [text width=1.9cm,style=circle,draw] (v5)at (-0.5,-1.5) {$\it quadricopter$};

\path[<-,-latex]  (v2) edge  [bend right = -15] node {} (v1);
\path[ ->,-latex]  (v1) edge  [bend right = -15] node {} (v2);
\path[ ->,-latex]  (v4) edge  [bend left = -15] node {} (v2);
\path[ ->,-latex]  (v5) edge  [bend left = -15] node {} (v4);
\path[ ->,-latex]  (v3) edge  [bend left = -15] node {} (v5);
\path[ ->,-latex]  (v1) edge  [bend left = -15] node {} (v3);

\end{tikzpicture}
    \centering
    \caption{Data flow in Quadricopter system }
    \label{QuadriFlow}
\end{figure}
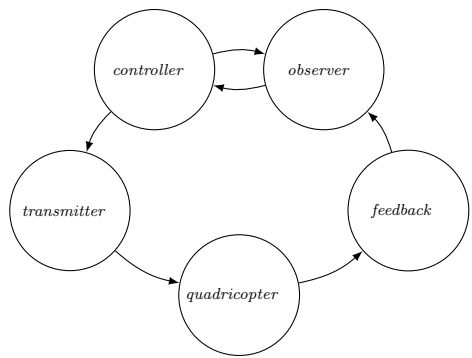

Figure \ref{QuadriFlow} demonstrates the data flow among the different modules of the Quadricopter system. The system contains the components: $\Controller$, $\Feedback$, $\Quadricopter$, $\Observer$ and $\Transmitter$ where  $\Controller$ governs movements of $\Quadricopter$ robot by sending its commands through  $\Transmitter$ to  $\Quadricopter$. To this aim,  $\Controller$ computes a direction based on the received information (from  $\Observer$) and other
parameters and sends an $\mathit{update}$ message to  $\Transmitter$. It also notifies  $\Observer$ of the transmitted command (via a $\mathit{ctrlerUpdate}$ message). The components $\Feedback$ and $\Transmitter$ are the interfaces of  $\Quadricopter$. Component
$\Transmitter$ translates $\mathit{update}$ messages to the commands understandable by
$\Quadricopter$. $\it Quadricopter$ moves to a different position, after processing a received command. It also measures environmental metrics using a camera
recorder and some other sensors. The measured data are sent to $\Feedback$ which translates the data to information comprehensible by  $\Observer$. Component $\Observer$ is supposed to validate this information (based on the previous commands and the received info) and send the necessary parts to $\Controller$ (via an $\mathit{update}$ message) which makes the feedback cycle of the Quadricopter system complete.

\begin{figure}

    \begin{lstlisting}[language=palang, multicols=2]
    actor transmitter(10) {

        initial{
            self!transmit;
        }
        transmit {
            quadricopter!update;
        }
        update{
            self!transmit;
        }
    }

    actor controller(10) {

        initial{
            self!control;
        }
        control{
            transmitter!update;
            observer!ctrlerUpdate;
        }
        update{
            self!control;
        }
    }

    actor feedback(10) {

        feedback{
            observer!update;
        }
        update{
            self!feedback;
        }
    }

    actor quadricopter(10) {

         initial{
             self!move;
         }
        move{
            feedback!update;
        }
        update{
            self!move;
        }
    }

    main {
        controller!initial;
        //transmitter!initial;
        //quadricopter!initial;

    }
    \end{lstlisting}
    \caption{Actors of the Quadricopter system (open system)}
    \label{fig:QuadriOpenSystem}
\end{figure}

Each module of the system might be subject to changes or reversions and therefore can be assumed as the \textit{unspecified component}. Here we run our algorithm with the presumption that $\Observer$ component plays such a role. The specified components of this system are modeled as actors of AML specified in Figure \ref{fig:QuadriOpenSystem}. Since we aim to verify the system against concurrency problems, calculations involving the control values of the system have been abstracted away from the message handlers. The desired property for the system to satisfy is that $\Observer$ only sends an $\mathit{update}$ message to $\Controller$ if $\Transmitter$ has already sent an $\mathit{update}$ message to $\Quadricopter$. The LTL formula describing this property along side the DFA for the bad prefixes of the LTL formula are given in Figure \ref{fig:quadricopterProperty}.

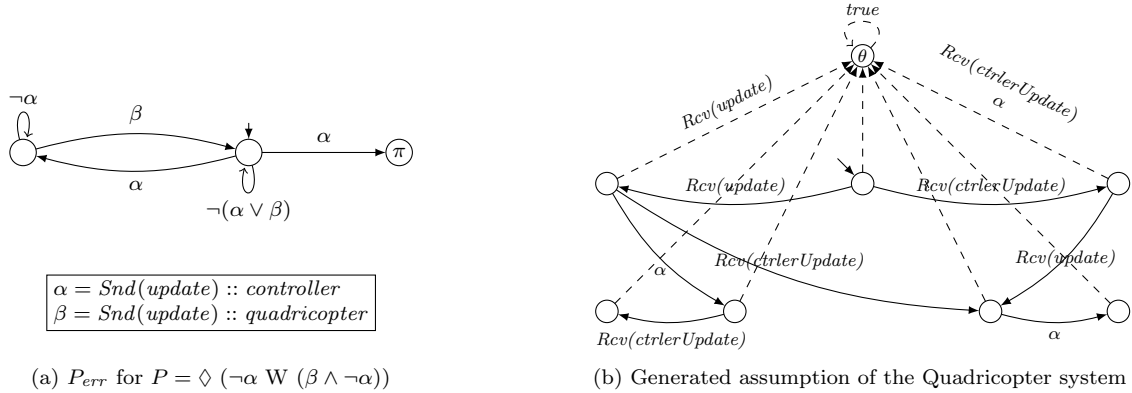
\begin{figure}
    \subfloat[$P_{\mathit{err}}$ for  $P =  \lozenge\hspace{1mm} (\lnot\alpha~ \mathrm{W}~ ( \beta \land \lnot\alpha))$\label{fig:quadricopterProperty}]{%
    \begin {tikzpicture}
    \tikzstyle{bordered} = [draw,outer sep=0,inner sep=1,minimum size=10]

    \node [style=circle,draw,outer sep=0,inner sep=1,minimum size=10] (v1)at (-3,2) {};
    \node [style=circle,draw,outer sep=0,inner sep=1,minimum size=10] (v2)at (-6,2) {};
    \node [style=circle,draw,outer sep=0,inner sep=1,minimum size=10] (v3) at (-1,2) {$\pi$};
    \node (v0) at(-3,2.5) {};

    \path[<-,-latex]  (v2) edge  [bend right = -15] node[above =0.03 cm] {$\beta$ } (v1);
    \path[ ->,-latex]  (v1) edge  [bend right = -15] node[below =0.03 cm] {$\alpha$} (v2);
    \path[->,-latex]  (v1) edge node[above =0.03 cm] {$\alpha $} (v3);
    \path[<-,-latex]  (v2) edge   [loop above] node {$\lnot \alpha$ } (v2);
    \path[<-,-latex]  (v1) edge   [loop below] node {$\lnot(\alpha \lor \beta)$} (v1);
    \path[<-,-latex]  (v0) edge (v1);

    \node at (-3.5,0)[bordered] { $\begin{array}{l}{\alpha =\it Snd(update)::controller}\\
        \beta = {\it Snd(update)::quadricopter} \end{array}$};
    \node at (-3.5,4.1){};
    \node at (-3.5,-.6){};
\end{tikzpicture}
    }
    \hfill
    \subfloat[Generated assumption of the Quadricopter system
    \label{fig:QuadriAssump}]{%
\begin{tikzpicture}[scale=.85, transform shape]
\node [style=circle,draw,outer sep=0,inner sep=1,minimum size=10] (v1)at (-3,4) {};
\node [style=circle,draw,outer sep=0,inner sep=1,minimum size=10] (v2)at (1,4) {};
\node [style=circle,draw,outer sep=0,inner sep=1,minimum size=10] (v3)at (5,4) {};
\node [style=circle,draw,outer sep=0,inner sep=1,minimum size=10] (v4)at (-1,2) {};
\node [style=circle,draw,outer sep=0,inner sep=1,minimum size=10] (v12)at (5,2) {};
\node [style=circle,draw,outer sep=0,inner sep=1,minimum size=10] (v8)at (-3,2) {};
\node [style=circle,draw,outer sep=0,inner sep=1,minimum size=10] (v13)at (3,2) {};
\node (v0)at (.5,4.5) {};
\node [style=circle,draw,outer sep=0,inner sep=1,minimum size=10] (v14) at (1,6) {$\theta$};

\path[<-,-latex]  (v0) edge (v2);
\path[ ->,-latex]  (v2) edge[bend left=15]   node[above =0.03 cm] {\it Rcv(update)} (v1);
\path[ ->,-latex]  (v2) edge[bend right=15,sloped]  node[above =0.03 cm] {\it Rcv(ctrlerUpdate)} (v3);
\path[->,-latex]  (v1) edge [bend right=15]   node[below=0.03 cm] {\it $\alpha$} (v4);
\path[ ->,-latex]  (v3) edge [bend left=15]   node[] {\it Rcv(update)} (v13);
\path[ ->,-latex]  (v1) edge[bend right=15]  node[above] {\it Rcv(ctrlerUpdate)} (v13);
\path[ ->,-latex]  (v4) edge[bend left=15]  node[below =0.03 cm] {\it Rcv(ctrlerUpdate)} (v8);
\path[->,-latex]  (v13) edge [bend right=15,sloped]   node[below=0.03 cm] {\it $\alpha$} (v12);

\path[ ->,dashed,-latex]  (v1) edge[sloped]   node[above =0.03 cm] {\it Rcv(update)} (v14);
\path[ ->,dashed,-latex]  (v2) edge  node[above =0.03 cm] {} (v14);
\path[ ->,dashed,-latex]  (v3) edge [sloped]   node[above =0.03 cm] {$\begin{array}{c}{\it Rcv(ctrlerUpdate)}\\{\it \alpha}\end{array}$} (v14);
\path[ ->,dashed,-latex]  (v4) edge [sloped]   node[above =0.03 cm] {} (v14);
\path[ ->,dashed,-latex]  (v8) edge [sloped]   node[above =0.03 cm] {} (v14);
\path[ ->,dashed,-latex]  (v12) edge [sloped]   node[below =0.03 cm] {} (v14);
\path[ ->,dashed,-latex]  (v13) edge  node [below =0.03 cm] {} (v14);
\path[ ->,dashed,-latex]  (v14) edge[loop,above ]  node [] {${\it true}$} (v14);
\end{tikzpicture}
    }
    \caption{The LTS of $P_{
            \it err}$ compared to the generated assumption}
    \label{fig:dummy}
\end{figure}

Based on the system description,  $\Controller$ of the open system expects to receive an  $\mathit{update}$ message in response to the $\mathit{update}$ message sent to $\Observer$ by $\Controller$ and does not expect a response upon sending a notification message by  $\Controller$. Using this information the $\Info$ is defined as:
$\Info = \{(\mathit{update},
\{(\langle\mathit{update}\rangle,\Controller)\}),$ $ (\mathit{ctrlerUpdate},\{\})\}$.
Using $\Info$ we generate $A_\IntM$ which is the overapproximated replacement of $\Observer$, demonstrated in Figure \ref{fig:ObserverIntM}.

We implemented our algorithm twice on this settings: once only  $\Controller$ is initiator and once the modules of $\Transmitter$, $\Controller$ and $\Quadricopter$ could all be initiators (i.e., the codes of the main body are uncommented in Figure \ref{fig:QuadriOpenSystem}). In the former case, it was discovered that the property satisfies for any $\Observer$ actor which complies with $\Info$, meaning that in the generated property LTS, the $\pi$ state was not reachable and the algorithm was terminated at the first step. In the latter case, the algorithm terminated with an assumption, illustrated in Figure \ref{fig:QuadriAssump}.

We checked a given $\Observer$ module (mentioned in the Figure \ref{fig:ObserverActor}) against the generated assumption which was rejected. Intuitively,  $\Quadricopter$ does not wait for a message from $\Transmitter$ to send its message to $\Feedback$. Consequently, $\Observer$ will receive a message from $\Feedback$ and then will send its message to  $\Controller$ while $\Transmitter$ has not sent anything beforehand.



\begin{figure}
    \centering
\begin{lrbox}{\mylistingbox}%
    \begin{minipage}{.45\linewidth}%
        \begin{lstlisting}[language=palang]
    actor observer(10) {

        update{
            controller!update;
        }
        ctrlerUpdate{

        }
    }
        \end{lstlisting} %
    \end{minipage}%
\end{lrbox}%
\subfloat[$\IntM$\label{fig:ObserverIntM}]{\usebox{\mylistingbox}}%
\hfill
\begin{lrbox}{\mylistingbox}%
    \begin{minipage}{.45\linewidth}%
    \begin{lstlisting}[language=palang]
    actor observer(3) {
        observe {
            controller!update;
        }
        update{
            self!observe;
        }
        ctrlerUpdate{}
    }

    \end{lstlisting}
    \end{minipage}%
\end{lrbox}%
\subfloat[$\Observer$ actor\label{fig:ObserverActor}]{\usebox{\mylistingbox}}%

\caption{The code of the generated $\IntM$ compared to the given $\Observer$ actor}
\end{figure}

\subsection{Electronic Funds Transfer System}

Electronic Funds Transfer (EFT) systems operate as the infrastructure for online financial transactions. The components of an EFT system such as automatic teller machines, Point-of-Sale terminals (PoS), core banking systems and the EFT Switch  communicate over message passing~\cite{asaadi2011towards}. Figure \ref{EFTSFlow}  demonstrates the data flow between system components.

The EFT switch (or in short the switch) is  a software component designed to route  messages to  their destinations. The transactions between system components  consist of several
messages passing through the switch. For example, a simple purchase transaction is as follows: the  transaction is  originated
by a user inserting a card in a PoS terminal, afterwards PoS sends a purchase request to the switch which in turn forwards the request to the core banking system in order to charge the user's account. The core banking system forwards the response back to the switch and eventually to the PoS  and the user.

Furthermore, each transaction may comprise a complex combination of different possible interaction scenarios among the components of the EFT system. For instance,
a PoS is configured to time-out and send a cancel request to the switch if it does not receive the response to a purchase request from the switch in time.
When a user cancels his request voluntarily and its cancel request
is being handled in the switch,
if the authentication of the user has been failed beforehand, then the cancel
request should not be sent to the core banking system (because no purchase has been made). In turn a  message should be sent to PoS and it should notify the
user of the more prior event which is the authentication failure. Therefore, in reality EFT systems are error-prone due to unintentional transaction flows caused by failures in the components and asynchrony in the communication media.


 \begin{figure}
    \centering
\begin{tikzpicture}[scale=.7, transform shape]
 \tikzstyle{bordered} = [draw,outer sep=0,inner sep=1,minimum size=10]

\node [style=circle,draw,outer sep=0,inner sep=3,minimum size=10] (v1)at (0,1) {$\it user$};

\node [style=circle,draw,outer sep=0,inner sep=3,minimum size=10] (v2)at (2,1) {$\it PoS$};
\node [style=circle,draw,outer sep=0,inner sep=3,minimum size=10] (v3)at (5,1) {$\it EFTSwitchCore$};
\node [style=circle,draw,outer sep=0,inner sep=3,minimum size=10] (v4)at (7,3) {$\it purchaseT.$};
\node [style=circle,draw,outer sep=0,inner sep=3,minimum size=10] (v5)at (7,-1) {$\it balanceT.$};
\node  [style=circle,draw,outer sep=0,inner sep=3,minimum size=10] (v6)at (11,1) {$\it core$};
\node [style = rectangle,dashed,draw,minimum width=5.3cm,minimum height=6.3cm,draw] at (5.8,1) { };

\path[ <-,-latex]  (v1) edge  [bend right = -20] node[] {} (v2);
\path[ <-,-latex]  (v2) edge  [bend right = -20] node[] {} (v1);
\path[ <-,-latex]  (v2) edge  [bend right = -20] node[] {} (v3);
\path[ <-,-latex]  (v3) edge  [bend right = -20] node[] {} (v2);
\path[ <-,-latex]  (v3) edge  node[] {} (v4);
\path[ <-,-latex]  (v3) edge  node[] {} (v5);
\path[<-,-latex]  (v4) edge  [bend left = -30] node[] {} (v2);
\path[ <-,-latex]  (v5) edge  [bend right = -30] node[] {} (v2);
\path[<-,-latex]  (v4) edge  [bend left = -20] node[] {} (v6);
\path[ <-,-latex]  (v6) edge  [bend left = -20] node[] {} (v4);
\path[<-,-latex]  (v6) edge  [bend right = -20] node[] {} (v5);
\path[ <-,-latex]  (v5) edge  [bend right = -20] node[] {} (v6);

\node at (4,4.5) {$\it EFT~Switch$};
\end{tikzpicture}
    \caption{Data Flow in EFT System }
    \label{EFTSFlow}
    \end{figure}
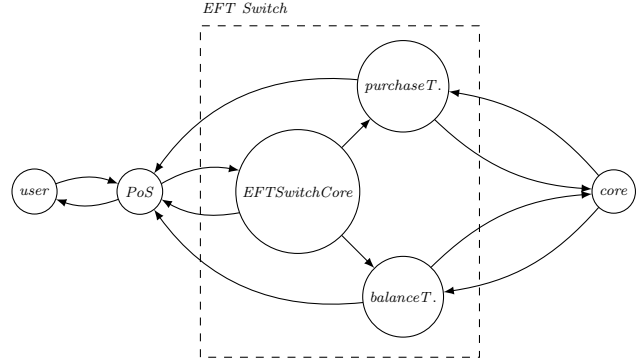

The implemented system contains two transactions of
\textit{balance request}  and \textit{purchase request}. The EFT switch itself contains three components, the $\it{EFTSwitchCore}$ component receives a message from $\it PoS$ and 
depending on the message type redirects it  to either components of $\it balanceTransaction$  or $\it purchaseTransaction~$ (${\it balanceT}$ and $\it purchaseT$ in short respectively).
Each of these components in turn sends the related message to the core banking system (in short $\it core$)
and eventually the related response is sent back to $\it user$. Full description of the components of the system modeled by AML  is given in Figures \ref{fig:EFTOpenSystem1} and \ref{fig:EFTOpenSystem2} of  Appendix \ref{appendix:actorsOfEFTS} .

Each component of this system could be subject to changes and
the system should still satisfy its requirements despite these changes.
Here we considered the component $\it purchaseTransaction$ as the \textit{unspecified component} $M$.
The set of $\Info$ conceived for this system is:
\begin{align*}
 \Info= \{&(  \mathit{start},\{(\langle  \mathit{purchaseRequest}\rangle, \mathit{core})\}) ,
 \\
& (  \mathit{purchaseSuccessfull},\{(\langle  \mathit{purchaseSuccessfull}\rangle, \mathit{PoS})\}) ,\\
&  ( \mathit{insufficientCredit},\{(\langle  \mathit{insufficientCredit}\rangle, \mathit{PoS})\}),\\
  & ( \mathit{cancelPurchase},\{(\langle  \mathit{cancelPurchase}\rangle, \mathit{core})\}),\\
 &   ( \mathit{cancelPurchase},\{(\langle  \mathit{purchaseCanceled}\rangle,PoS)\}),\\
&   ( \mathit{purchaseCanceled},\{(\langle  \mathit{purchaseCanceled}\rangle, \mathit{PoS})\}) \}
   \end{align*}The specification of the actor $A_{\IntM}$ based on this $\Info$ is given in Appendix \ref{appendix:actorsOfEFTS}, Figure \ref{fig:EFTSIntM}.

The required property of the system is that upon a purchase request and consequent canceling of the request if the authentication of the user has failed, the cancel request should not be sent to the $\it core$ and only $\it PoS$ should be notified of the cancellation. This Property and its corresponding $P_{\mathit{err}}$
are specified in 
Figure \ref{fig:SwitchDFAProperty}.

\begin{figure}[htbp]
\centering
    \begin {tikzpicture}

      \tikzstyle{bordered} = [draw,outer sep=0,inner sep=1,minimum size=10]

\node [style=circle,draw,outer sep=0,inner sep=1,minimum size=10] (v1)at (-2.5,2) {};
\node [style=circle,draw,outer sep=0,inner sep=1,minimum size=10] (v2)at (-4,2) {};
\node [style=circle,draw,outer sep=0,inner sep=1,minimum size=10] (v3) at (-1,2) {$\pi$};
\node (v0) at(-7.5,2) {};
\node [style=circle,draw,outer sep=0,inner sep=1,minimum size=10] (v4)at (-5.5,2) {};
\node [style=circle,draw,outer sep=0,inner sep=1,minimum size=10] (v5)at (-7,2) {};

\path[<-,-latex]  (v2) edge  [bend right = -15] node[above =0.03 cm] {$\gamma$ } (v1);
\path[<-,-latex]  (v4) edge  [bend right = -15] node[above =0.03 cm] {$\beta$ } (v2);
\path[<-,-latex]  (v5) edge  [bend right = -15] node[above =0.03 cm] {$\alpha$ } (v4);
\path[<-,-latex]  (v1) edge  [bend right = -20] node[below =0.03 cm] {$\eta$ } (v5);
\path[->,-latex]  (v1) edge node[above =0.03 cm] {$\delta $} (v3);
\path[<-,-latex]  (v2) edge   [loop above] node {$\lnot \gamma$ } (v2);
\path[<-,-latex]  (v1) edge   [loop above] node {$\lnot(\delta \lor \eta)$} (v1);
\path[<-,-latex]  (v4) edge   [loop above] node {$\lnot \beta$} (v4);
\path[<-,-latex]  (v5) edge   [loop above] node {$\lnot\alpha$} (v5);
\path[<-,-latex]  (v0) edge (v5);

\node at (3,2)[bordered] { $\begin{array}{l}{\alpha =\it Snd(purchaseRequest)::PoS}\\
\beta ={\it Snd(cancelPurchase)::PoS}\\
\gamma ={\it Snd(authenticationError)::PoS}\\
\delta ={\it Snd(cancelPurchase)::core}\\
\eta ={\it  Snd(purchaseCanceled)::PoS}
\end{array}$};
\end{tikzpicture}
    \caption{$P_{\mathit{err}}$
for the Property
    $P =  \square\hspace{1mm} ((\alpha \wedge \bigcirc \beta \wedge \lozenge \gamma) \rightarrow (\lnot \delta \wedge \eta )) $}
    \label{fig:SwitchDFAProperty}
  \end{figure}
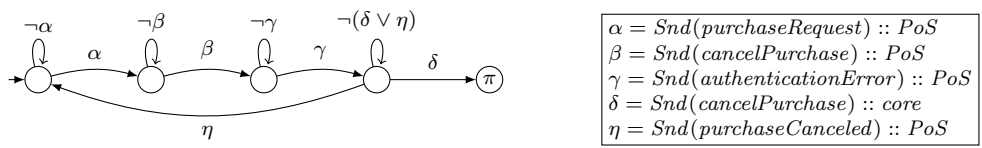

We applied our approach on this setting which terminated by generating an assumption which has $42$ states and $417$ transitions. We checked the actor $\it purchaseTransaction$
of Figure \ref{fig:purchaseTransactionActor}  against the generated assumption which was accepted.

   \begin{figure}
    \begin{lstlisting}[language=palang, multicols=2]
    actor purchaseTransaction(2){

        int flag;// 1: Purchase Request is sent to Core
        start{
            flag=1;
            core!purchaseRequest;
        }
        purchaseSuccessful{
            PoS!purchaseSuccessful;
        }
        insufficientCredit{
            PoS!insufficientCredit;
        }
        cancelPurchase{
            if(flag==1){
                core!cancelPurchase;
                flag=0;}
            else{
                PoS!purchaseCanceled;}
        }

        purchaseCanceled{
            PoS!purchaseCanceled;
        }
    }
    \end{lstlisting}

    \caption{Actor
   $\mathit{  purchaseTransaction}$  }
    \label{fig:purchaseTransactionActor}
   \end{figure}

\subsection{Evaluation Results}\label{Subsec::tool}
To evaluate the effectiveness of our approach, we have compared it to the direct approach of \cite{giann} in terms of the state space of their generated property LTSs and assumptions for Quadricopter, EFT, and our running example systems. To derive the  property LTS of a given actor model and $P_{\it err}$, common in both approaches, we have exploited the algebraic mCRL2 toolset \cite{GrooteMousavi14}, based on \emph{ACP} process algebra~\cite{BergstraKlop84}. However, we have manually added the respective $A_\IntM$ to each actor model to derive the property LTS of our approach. We have also implemented the remaining steps of both algorithms in a Java program to generate the assumption from the derived LTSs. We remark that we had to tailor the remaining steps of the approach of \cite{giann} in the same way of our approach to be applicable to asynchronous setting. For asynchronous models,  to be implemented purely using \cite{giann}, each actor of $O$ should get specified by an LTS, generated by the technique explained in Section \ref{sec:generatingM}. This results in huge LTSs that should get composed together in the next phase. In order to skip some long primal steps (which may not be feasible under limited resources of time and space), we decided to generate the LTS of the composition of these actors at once and not to generate each separately. 
To this aim, we still need to consider a wild environment for $O$. To skip this also, we decided to make the compositions of $O$ and $P_{\it err}$ so that $P_{\it err}$ would act as the missing component and as the result a closed LTS would be generated. 
This is a considerable favor towards the direct approach 
benefiting from  one of the main feature of our approach which is not to generate states of the system when the $\pi$ state is reached. The rest of the steps for both approaches is based on the steps in section \ref{sec:gencond}.
Hence, only one Java program is needed to apply the remaining steps. Therefore, our comparison reveals the effectiveness of our interface actor in generating the assumption and making the algorithm terminating.

To derive the property LTS of a given actor model and $P_{\it err}$, AML specifications of the actors are transform into a set of processes\footnote{This encoding is available at \url{http://fghassemi.adhoc.ir/AGcode.zip}.}. Each actor is encoded as a recursive process parameterized with a queue as the mailbox 
and possible state variables, and formed by two sub-processes. One implements the operational semantics of {\it Take} in AML semantic rules: Upon taking a message from the head of the actor queue, the respective method body is selected in terms of the method name to be executed.  
Each method body is implemented by using the conditional and send action prefixes of mCRL2 while variable assignments are modeled by assigning updated values to them during the recursive invocation of the actor process. Send action are parameterized by the method names   which is unique for each actor. The other sub-process implements the input-enableness of each actor: Each actor can recursively receive any of its messages (identified by an operation which outputs the list of messages of an actor given its identifier)  by performing receive actions parameterized with its method names (which are unique). The corresponding message is then added to the queue of the actor process. The receive and send actions with the same parameter are synchronized to model receiving a sent message between two actors. In order to implement the atomic execution of each method, a scheduler process is defined which gives turn to the actors non-deterministically. $P_{\it err}$ of the given property is also encoded as a process. The special $\pi$ state is encoded as a $\pi$ action leading to the deadlock process. To derive the property LTS, the parallel composition of the actor processes, the scheduler, and $P_{\it err}$ are computed such that the processes are enforced to synchronize on send and receive actions of actors and actions of the $P_{\it err}$. Consequently whenever the process of $P_{\it err}$ reaches to its $\pi$ state (modeled by the deadlock process followed after the special $\pi$ action) the whole system goes to deadlock.

We have implemented the algorithm using a Java tool. Our java program, takes the name and location of the mCRL2 file (which has the specification of the mentioned processes) as inputs, implements the step one of algorithm by calling  mCRL2 commands (creating the property LTS as the result). The remaining steps of the algorithm are implemented in Java. 

The size of the property LTSs 
and the generated assumptions resulting from both approaches are compared in Table~\ref{table:evaluation}.
In all cases, our approach immediately terminates and the property LTSs and the generated assumptions are smaller thanks to the overapproximated actor.
For instance, in the case of the Quadricopter system with a single initial message, our approach immediately terminates in step $1$, in contrast with the direct approach, as the overapproximated actor prevents from entering into $\pi$ state. The direct approach was unsuccessful in computing the property LTS of EFT after $1$ hour on a system with Corei$7$ CPU and $8G$ RAM configurations. Considering the effect of a given property by promoting the semantic rules of AML, alleviates the state space explosion problem and hence makes our approach efficient and terminating.

To inspect a given specified actor of $M$ against the generated assumption, we have manually check its compliance and then exploited mCRL2 to generate $\rho_{c_3}(T_M)$. To this end, $A_M$ was specified as a process with two subprocesses as before. The subprocess which implements the  input-enableness of the actor, results in $A_M$ receiving any of its messages which implements the functionality of $A_{M^c}$ and 
there is no need to implement $o_i^d$'s as in mCRL2 we can implement a process such that  send actions  manifest in the resulting LTS without the need for the recipient of messages to exist. We used CADP\footnote{\url{http://cadp.inria.fr/}} to compare the traces of $\rho_{c_3}(T_M)$ and $T_A$ due to its support for systems with a huge number of states and transitions.

To validate our results, we verified the system containing open system actors and the given $M$ against the given property. To this aim, the composition of the closed system $A_M\interleave O $ and $P_{\it err}$  was generated using the mCRL2 toolset again (similar to the above). The property is satisfied by the system if the $\pi$ state is not reachable in the resulting LTS. As we have implemented the $\pi$ state with a $\pi$ action terminating in a deadlock state, the $\pi$ state is reachable if the action $\pi$ exists and is reachable from the initial state in the LTS $A_M\interleave O $ and $P_{\it err}$. This is checked by the mCRL2 model checker tool.

\begin{table}[]
\centering
\caption{Evaluation results}
\label{table:evaluation}
\renewcommand{\arraystretch}{1.2}
\begin{adjustbox}{max width=\textwidth}
\begin{tabular}{|c|l|c|c|c|c|c|}
\hline
\multicolumn{1}{|l|}{\multirow{2}{*}{Implemented Approach}}               & \multirow{2}{*}{System under Analysis} & \multicolumn{3}{c|}{Property LTS} & \multicolumn{2}{c|}{Generated Assumption} \\ \cline{3-7}
\multicolumn{1}{|l|}{}                                                    &                                        & States            & Transitions   &Time(s)       & States            & Transitions           \\ \hline
\multirow{4}{*}{\begin{tabular}[c]{@{}c@{}}Asynchronous\\ AG\end{tabular}} & EFTS                                   & 127               & 186    &43             & 42                & 417                   \\ \cline{2-7}
                                                                          & Quadri. - Single Init                  & 10                & 10        &1           & 0                 & 0                     \\ \cline{2-7}
                                                                          & Quadri. - Multiple Init                & 182               & 343   &17               & 8                 & 22                    \\ \cline{2-7}
                                                                          & Mutex                                  & 15                & 21          &0         & 14                & 68                    \\ \hline
\multirow{4}{*}{\begin{tabular}[c]{@{}c@{}}Tailored Direct\\ AG\end{tabular}}      & EFTS                                   & 224328$>$           & 823293$>$      &3600       & -                 & -                     \\ \cline{2-7}
                                                                          & Quadri. - Single Init                  & 1510              & 4830      &135           & 20                & 59                    \\ \cline{2-7}
                                                                          & Quadri. - Multiple Init                & 688               & 2001     &96            & 9                & 25                    \\ \cline{2-7}
                                                                          & Mutex                                  & 22                & 57       &1           & 21                & 85                    \\ \hline
\end{tabular}
\end{adjustbox}

\end{table}

\subsubsection{Exploiting a More Precise $\Info$}\label{sec:preciseIntM}
If $\Info$ is more accurate, in the sense of being stateful, our approach - with minor changes due to extending $\Info$ and generating $\IntM$-
still works. However the set of possible $M$s which comply to 
$\Info$ will become more limited. We have conducted a set of experiments on our running example to illustrate how using a more accurate $\Info$ affects the performance of our assumption generation algorithm. As a more precise $\Info$ leads to a more precise $\IntM$, we have directly used a more precise $\IntM$ in our experiments to observe its impact on the resulting assumption. 
Such precise specifications can be derived by adding the variable $\it taken$ to $\IntM$, specified by Figure \ref{fig:IntMOfexampleOpenSystem}, and defining its behavior based on this variable while processing $\it reqL$, $\it reqR$, and $\it release$. Conversely, they can be characterized based on the lines removed from $M$, specified by Figure \ref{fig:actorM}. All resulting specifications definitely comply with $\Info$. The experiment results, shown in the Table \ref{table:eval2}, indicate that a more precise $\IntM$ may not affect the performance as extra precision may result in larger property LTSs and assumptions for different values of a variable. However, all experiments result smaller LTSs compared to the direct approach. For the cases that the $\pi$ state is not reachable in their property LTSs, the algorithm converges faster and it terminates immediately.

\begin{table}[htp]
\centering
\caption{Effect of a more precise $\IntM$ for Mutex system}
\label{table:eval2}
\begin{tabular}{|l|c|c|c|c|c|c|}
	\hline
	\multicolumn{1}{|c|}{\multirow{2}{*}{\begin{tabular}[c]{@{}c@{}}Lines Omitted from\\  Figure \ref{fig:actorM}\end{tabular}}} & \multicolumn{3}{c|}{Property LTS} & \multicolumn{3}{c|}{Generated Assumption}              \\ \cline{2-7} 
	\multicolumn{1}{|c|}{}                                                                                                        & States  & Transitions  & Time     & States & Transitions & Time                            \\ \hline
	\begin{tabular}[c]{@{}l@{}}3,5,7-10,14,16,\\ 17-19,23:($\Info$)\end{tabular}                                                  & 15      & 21           & 3518576  & 14     & 68          & \multicolumn{1}{l|}{3087559909} \\ \hline
	7,16,23                                                                                                                       & 15      & 21           & 3546567  & 14     & 68          & \multicolumn{1}{c|}{3081696082} \\ \hline
	16,23                                                                                                                         & 16      & 21           & 3565696  & 15     & 72          & \multicolumn{1}{c|}{3079037789} \\ \hline
	23                                                                                                                            & 14      & 18           & 3527906  & -      & -           & 3527906                         \\ \hline
	- :($M$)                                                                                                                      & 13      & 18           & 3407075  & -      & -           & 3407075                         \\ \hline
	7,23                                                                                                                          & 16      & 21           & 3561963  & 15     & 72          & 3073369906                      \\ \hline
	7                                                                                                                             & 18      & 24           & 3617947  & 11     & 53          & 3079523449                      \\ \hline
	7,16                                                                                                                          & 15      & 21           & 3551699  & 14     & 68          & 3081989530                      \\ \hline
	16                                                                                                                            & 18      & 24           & 3989771  & 11     & 53          & 3082167745                      \\ \hline
\end{tabular}
\end{table}

\section{Related Work}\label{sec:releatedWork}
Compositional verification has been proposed as a way to address the state space explosion problem in model checking~\cite{clarke1999model}. It uses divide-and-conquer approach to verify a system against a property by verifying each of its components individually against a local property~\cite{clarke1989compositional,grumberg1994model}, a technique that is often called quotienting. In the context of process algebra as the language for describing reactive systems and Henessy-Milner logic as the logical specification formalism for describing properties, decompositional reasoning has been originally studies in \cite{larsen1986context,larsen1991compositionality} and has been further developed in, e.g., \cite{bloom2004precongruence,fokkink2006compositionality,raclet2008residual}. A related line of research is the one devoted to  compositional proof systems, e.g., \cite{AndersenSW94} which determines satisfaction between algebraic processes and assertions of the modal $\mu$-calculus. 
This technique provides a set of rules guided by the structure of the processes. The idea of compositional verification can be formulated as finding a proof for a term with a set of meta-variables, distinct from recursive variables. These meta-variables denote undefined processes. Thus, proofs carried out with such variables appearing in the terms, are valid for all instantiations of the variable. Such a parameterized proof results in an assertion that should be satisfied by the meta-variable. This assertion expresses possible instantiations for the meta-variable and hence, plays the role of an assumption in our case. A given $M$ guarantees the whole system satisfies $P$ if a proof can be found for the assertion. Although this approach is also applicable to non-safety properties, finding such proofs and assertions cannot be automated and needs an exhaustive user involvement.

To successfully verify an isolated component, it is often needed to have some knowledge about the environment interacting with the component, and to this end  assume-guarantee reasoning approaches have been proposed to address the issue~\cite{jones1983specification,pnueli1985transition}. These approaches guarantee a
system
$S$ composed of components $M_1$ and $M_2$ satisfies a property $P$  if
$M_1$ under an assumption $A$ satisfies $P$ and also $M_2$ satisfies $A$ in the context of all possible environments. As we explained earlier, our problem is reduced to the assume-guarantee reasoning problem 
where
$M_2$ maps to the unspecified component and $M_1$ plays the role of  the open system.
To the best of  our knowledge the assume-guarantee reasoning has not been specifically associated with  asynchronous environments.

In recent work such as \cite{alur2005symbolic,nam2008automatic,puasuareanu2008learning,chaki2007optimized}, the assumption has been generated through learning-based approaches by using the $L^*$ algorithm, first proposed by Angulin \cite {angluin1987learning}. Given a set of positive and negative traces, an initial assumption is generated which will be iteratively refined by posing questions to an oracle.
These approaches are not time-efficient due to additional overhead to learn  assumptions~\cite{cobleigh2008breaking}.
Also the generated assumption may become very large and also the verification may fail in some cases
\cite{nam2006learning,alur2005symbolic}.
These approaches are useful when the computations for the generation of the weakest assumption run out of memory. The other fact that should be considered is that
in the $L^*$-based approaches in order to produce the assumption,
the component $M_2$ is incorporated in the iterative solution which is in contrast to our original goal of reusing the produced assumption in different settings and specifically with different $M_2$ components. In other words,
despite the solution of \cite{alur2005symbolic,nam2008automatic,puasuareanu2008learning,chaki2007optimized}, our solution is independent from $M_2$ and there is no need to generate a new assumption for each different $M_2$ component.
In the future, we plan to investigate more on how to incorporate learning techniques into our asynchronous setting.

Counter example abstraction refinement technique has been associated with assume-guarantee reasoning in \cite{bobaru2008automated}
by iteratively computing assumptions as conservative abstractions of the interface behavior of $M_2$ which concerns  interactions with $M_1$. In each iteration, the composition of $M_1$ and $A$ is checked against $P$ and based on the achieved counterexample, $A$ is refined accordingly which makes the approach computationally expensive.
Furthermore the aforementioned issue of reusability due to the
iterative incorporation of   $M_2$ in  computations is also observed in this solution. We also take advantage of an abstract interface of $M_2$ in our approach using interaction information of $M_1$,  
however there is no need for iterative checkings.

The direct approach toward the generation of the weakest assumption
has been applied in work such as \cite{giann,cheung1999checking}. In
particular in
\cite{giann}, the weakest assumption for the environment of a component is generated such that the component satisfies a required property. But the generated assumption may become too large and the computation may run out of memory \cite{cobleigh2003learning}. 
We have extended the work in \cite{giann} to asynchronous systems and also made some improvements on handling internal actions.  By assuming some information on the interactions of $O$, our approach produces an assumption which is
noticeably smaller than  the generated assumption in \cite{giann}.

In \cite{feng2007relationship,gupta2007automated,grinchtein2006inferring,pena1999new}, the problem of  finding the  separating DFA of two disjoint regular languages, so called concurrent separation logic, has been mapped to the problem of assume-guarantee reasoning.
To this aim, the separating automaton $A$ is generated which accepts the language of $M_2$ but rejects the language of $M_1'$, where $M_1'$ accepts the language of $M_1$ and does not satisfy $P$. The work in \cite{gupta2007automated,grinchtein2006inferring} exploits an iterative  method using boolean satisfiability solvers.
Generally all of these approaches are computationally expensive while \cite{gupta2007automated} is only suited for  hardware verification.

The theory of  supervisory control and controller synthesis
also tackles a similar problem but in a very different context.
The goal is to synthesis a controller called the supervisor, such that
the behaviors of the system called plant, are controlled so that the requirements of the plant are satisfied~\cite{ramadge1987supervisory,cassandras2009introduction}.
The theory  mainly focuses on hardware systems, therefore assuming
that the controller reacts sufficiently fast on machine input,  the
supervisor and the plant are originally modeled as synchronizing
processes~\cite{ramadge1987supervisory,cassandras2009introduction}
and the supervisor allows  the occurrence of events of the plan by
synchronizing with them and prevents them by refusing to
synchronize. This also prevails in modern state-of-the art
approaches
\cite{heymann1998discrete,fabian1996non,overkamp1997supervisory}. 
The notion of asynchrony in this context   is mostly  referred to as events occurring without any reference to a clock \cite{ramadge1984supervisory,cortadella1997petrify,cortadella2012logic} which is very different from the notion of asynchrony in our context.
The problem of controller synthesis is mapped to our context such
that the supervisor is mapped to the unspecified component, system
requirements play the role of the system properties and the plant is
the open system. However, the fundamental difference is that system
requirements in a controller synthesis are limited to the behaviors
of the plant and the generated supervisor describes a
``well-defined'' plant \cite{markovski2016process,van2015maximal}.
However in our case, the behaviors of the unspecified component are
mentioned in the property and the generated specification describes
the unspecified component. In other words  a synthesized controller
is a dependent entity controlling the behaviors of the plant,  but
in our context we synthesis a component which is a part of the
system and has its own  independent functionality.

\section{Conclusion and Future Work}\label{sec:futureWork&conclude}
We developed a framework for asynchronous systems to successfully verify safety properties of such systems in the presence of a modified or an added component. This is done by generating the weakest assumption for the component to be checked against. By applying this framework, we ignore model checking of the whole system with each modification made to a component of the system. We adapted the direct approach of the synchronous settings to make it terminating and more efficient in time and computations.
To this aim, we generate an over approximated component instead of the missing component to make the system of asynchronous components closed and used the special composition to build the property LTS to eliminate the error traces simultaneously. We have illustrated the applicability of  our work on two real world case studies and shown that the generated assumptions are significantly smaller. We have also relaxed the condition which restricts $M$ to the actions of $P$ by considering the internal actions of the unspecified component in our computations. Therefore, our approach eliminates false positive results as a consequence of such a condition (In \cite{giann} internal actions are led to the sink state after which observable actions may violate the property).
Such improvements are not limited to the asynchronous settings and therefore can be adopted.


As future work we focus on improving the third step of our algorithm: instead of completing the generated LTS and exploiting conformance checking to prevent false positive, we find conditions to steer the completion phase. Another improvement is to imply some reductions on the generated LTS of a single actor (regarding its interactions) in order to reduce the time and resources for checking an actor against the generated assumption. Furthermore, providing an approach to validate a rejected $M$ by our compliance checking is another research direction. Extending 
our framework with data can also be considered. 
We also plan to investigate on how to incorporate grammar inference techniques into our asynchronous setting.

\section*{Acknowledgements}
We would like to
thank Mohammad Reza Mousavi for his helpful discussion on the paper.


\bibliographystyle{spmpsci}
\bibliography{rosa}

\begin{appendices}

\section{Actors of the EFT System}\label{appendix:actorsOfEFTS}
   \begin{figure}[H]
        \centering
    \begin{lstlisting}[language=palang, multicols=2]
  actor PoS(2){

    int isCanceled;
    int flag1;
    //1: "Insufficient Credit" has
    // happened while "cancelPurchase"
    // has been sent
    int flag2;
    //1: "Authentication error" has
    // happened while "cancelPurchase"
    // has been int sent
    int flag3;
    //1: "cancelPurchase" has been sent

    insertCard{
        user!insertPassword;
    }
    passwordIs{
        user!chooseTransaction;
    }
    balanceRequest{
        eftSwitchCore!balanceRequest;
    }
    purchaseRequest{
        eftSwitchCore!purchaseRequest;
    }
    cancelPurchase{
        isCanceled=1;
        eftSwitchCore!cancelPurchase;
    }

    balanceResponse{
        user!balanceResponse;
    }
    purchaseSuccessful{
        if(isCanceled==1){
            flag3=1;}
        else{
            user!purchaseSuccessful;}
        }
    authenticationError{
        if(isCanceled==1){
            flag2=1;}
        else{
            user!authenticationError;
            isCanceled=0;}
    }

    insufficientCredit{
        if(isCanceled==1){
            flag1=1;}
        else{
            user!insufficientCredit;
            isCanceled=0;}
    }

    purchaseCanceled{
        if(flag1==1){
            user!insufficientCredit;
            flag1=0;}
        else{ if(flag2==1){
            user!authenticationError;
            flag2=0;}
        else {if(flag3==1){
            user!purchaseCanceled;
            flag3=0;}
        else{}}}
        isCanceled=0;
    }
  }
    \end{lstlisting}

    \caption{Actors of the Open System of EFT System}
    \label{fig:EFTOpenSystem1}
   \end{figure}
    \begin{figure}
        \centering
        \small
    \begin{lstlisting}[language=palang, multicols=2]
    actor user(2){

    int scenario;
        start{
            PoS!insertCard;
        }
        insertPassword{
            PoS!passwordIs;
        }
        chooseTransaction{
            scenario=?(0,1,2);
            if(scenario==0){
           	   PoS!balanceRequest;}
            else{ if(scenario==1){
           	   PoS!purchaseRequest;}
            else{ if(scenario==2){
           	   PoS!purchaseRequest;
           	   PoS!cancelPurchase;}
            else{}}}
           cenario=0;
        }
        balanceResponse{
            self!start;
        }
        purchaseSuccessful{
            self!start;
        }
        authenticationError{
            self!start;
        }
        insufficientCredit{
            self!start;
        }
        purchaseCanceled{
            self!start;
        }
    }

    actor EFTSwitchCore(2){

            int wrongPassword;
            balanceRequest{
                wrongPassword=?(0,1);
                if(wrongPassword==0){
                    balanceTransaction!start;}
                else{
                    PoS!authenticationError;}
                wrongPassword=0;
            }
            purchaseRequest{
                wrongPassword=?(0,1);
                if(wrongPassword==1){
                    purchaseTransaction!start;}
                else{
                    PoS!authenticationError;}
                wrongPassword=0;
            }
            cancelPurchase{
                purchaseTransaction!cancelPurchase;
            }
        }

        actor balanceTransaction(2){

            start{
                core!balanceRequest;
            }
            balanceResponse{
                PoS!balanceResponse;
            }
        }

        actor core(2){

            int noCredit;
            balanceRequest{
                balanceTransaction!balanceResponse;
            }
            purchaseRequest{
                noCredit=?(0,1);
                if(noCredit==0){
                    purchaseTransaction!purchaseSuccessful;}
                else{
                    purchaseTransaction!insufficientCredit;}
                noCredit=0;
            }
            cancelPurchase{
                purchaseTransaction!purchaseCanceled;
            }
        }

        main {
            user!start;
        }
    \end{lstlisting}
        \caption{Actors of the Open System of EFT System}
        \label{fig:EFTOpenSystem2}
    \end{figure}

\begin{figure}
\centering
    \begin{lstlisting}[language=palang, multicols=2]
actor purchaseTransaction(2){

    int l;
    start{
        core!purchaseRequest;
    }
    purchaseSuccessful{
         PoS!purchaseSuccessful;
    }
    insufficientCredit{
        PoS!insufficientCredit;
    }
    cancelPurchase{
        int l=?(0,1);
        if(l==1){
            core!cancelPurchase;
        }
        else{
            PoS!purchaseCanceled;
        }
    }
    purchaseCanceled{
        PoS!purchaseCanceled;
    }
}

\end{lstlisting}

\caption{Actor $A_{\IntM}$ of  EFT System}
    \label{fig:EFTSIntM}
\end{figure}

\end{appendices}

\end{document}